\newtheorem{theorem}{Theorem}[section]
\newtheorem{lemma}[theorem]{Lemma}
\newtheorem{corollary}[theorem]{Corollary}
\newtheorem{proposition}[theorem]{Proposition}
\newtheorem{fact}[theorem]{Fact}
\newtheorem{definition}[theorem]{Definition}
\newtheorem{remark}[theorem]{Remark}
\newtheorem{question}[theorem]{Question}
\crefname{question}{Question}{Questions}
\DeclarePairedDelimiter\abs{\lvert}{\rvert}
\DeclarePairedDelimiter\ket{\lvert}{\rangle}
\DeclarePairedDelimiter\bra{\langle}{\rvert}
\newcommand{\E}{\mathop{\bf E\/}}
\newcommand{\avg}{\mathop{\textup{avg}\/}}
\newcommand{\tr} {\operatorname{tr}}
\newcommand{\poly} {\operatorname{poly}}
\newcommand{\supp} {\operatorname{supp}}
\newcommand{\Tab} {\textup{Tab}}
\newcommand{\Sh} {\textup{Sh}}
\newcommand{\ketbra}[2]{\ensuremath{\ket{#1}\!\bra{#2}}}
\newcommand{\kett}[1]{|#1\rangle\!\rangle}
\newcommand{\bbra}[1]{\langle\!\langle#1|}
\newcommand{\kettbbra}[2]{\ensuremath{\kett{#1}\!\bbra{#2}}}
\newcommand{\bbrakett}[2]{\ensuremath{\langle\!\langle{#1}\vert{#2}\rangle\!\rangle}}
\DeclarePairedDelimiter\parens{\lparen}{\rparen}
\DeclarePairedDelimiter\norm{\lVert}{\rVert}
\newcommand{\calA}{\mathcal{A}}
\newcommand{\calB}{\mathcal{B}}
\newcommand{\calC}{\mathcal{C}}
\newcommand{\calV}{\mathcal{V}}
\newcommand{\Davg}{\mathcal{D}_{\textup{avg}}}
\title{Approximation does not help in quantum unitary time-reversal}
\date{}
\author{
Kean Chen\thanks{University of Pennsylvania, Philadelphia, USA. Email: \texttt{keanchen.gan@gmail.com}}\and
Nengkun Yu\thanks{Stony Brook University, NY, USA. Email: \texttt{nengkunyu@gmail.com}}\and
Zhicheng Zhang\thanks{University of Technology Sydney, Sydney, Australia. Email: \texttt{iszczhang@gmail.com}}
}
\begin{document}

\maketitle

\begin{abstract}
Access to the time-reverse $U^{-1}$ of an unknown quantum unitary process $U$ is widely assumed in quantum learning, metrology, and many-body physics. 
The fundamental task of unitary time-reversal dictates implementing $U^{-1}$ to within diamond-norm error $\epsilon$ using black-box queries to the $d$-dimensional unitary $U$.
Although the query complexity of this task has been extensively studied, existing lower bounds either hold only for the exact case (i.e., $\epsilon=0$)
or are suboptimal in $d$.
This raises a central question: does approximation help reduce the query complexity of unitary time-reversal?
We settle this question in the negative by establishing a \textit{robust} and \textit{tight} lower bound $\Omega((1-\epsilon)d^2)$ with explicit dependence on the error $\epsilon$. This implies that unitary time-reversal retains optimal exponential hardness (in the number of qubits) even when constant error is allowed. Our bound applies to adaptive and coherent algorithms with unbounded ancillas and holds even when $\epsilon$ is an average-case distance error.
\end{abstract}

\section{Introduction}

The time evolution of a closed quantum system is governed by a reversible unitary $U$. 
Time-reversal of this evolution can, in principle, be simulated by implementing the inverse unitary $U^{-1}$.
However, this requires complete knowledge of the quantum system, which is typically inaccessible when the dynamics is dictated by nature.
A straightforward solution is to perform unitary tomography to obtain an approximate description of the unknown $d$-dimensional unitary $U$, which enables approximate unitary time-reversal with error $\epsilon$, using \(O(d^2/\epsilon)\) queries~\cite{haah2023query} to $U$.

In this paper, we study the \textit{unitary time-reversal} task: given black-box query access to an unknown $d$-dimensional unitary $U$, approximately implementing $U^{-1}$ to within diamond-norm error $\epsilon$.
Beyond its root as a natural concept in physics, this task is fundamental to quantum information theory and has a deep connection to the power of out-of-time-order correlators (OTOCs)~\cite{cotler2023information,schuster2023learning,shenker2014black,yao2016interferometric,swingle2016measuring,vermersch2019probing,xu2024scrambling} and advanced quantum learning algorithms with time-reversed unitary access~\cite{van2023quantum,wang2022quantum,gilyen2022improved,gs007,schuster2024random,zhao2025learning,tang2025amplitude}.
In quantum cryptography, there is also a growing attention~\cite{ma2025construct,Zhandry25,SML+25} to security against an adversary that can query time-reversed unitary oracles.

Quantum algorithms for unitary time-reversal have been extensively studied in the literature~\cite{sardharwalla2016universal,sedlak2019optimal,quintino2019probabilistic,quintino2019reversing,ebler2023optimal,quintino2022deterministic,navascues2018resetting,grinko2024linear,trillo2020translating,trillo2023universal,schiansky2023demonstration,yang2021representation,mo2025parameterized,zhu2024reversing,mo2025efficientinversionunknownunitary,yoshida2023reversing,chen2024quantum,grinko2025sequential,brzic2025higher,zhen2025structure}.
Surprisingly, Yoshida, Soeda, and Murao~\cite{yoshida2023reversing} demonstrated that, when the dimension $d=2$, the unitary time-reversal can be done \textit{deterministically} and \textit{exactly}\footnote{Exactness and determinism mean that the algorithm has no error and has success probability \(1\).} using only four queries to the unknown unitary. 
Thereafter, Chen, Mo, Liu, Zhang, and Wang~\cite{chen2024quantum} significantly generalized this result by providing a deterministic and exact time-reversal algorithm for unitaries of \textit{any dimension} \(d\), using \(O(d^2)\) queries.
In a complementary breakthrough, Odake, Yoshida, and Murao~\cite{odake2024analytical} showed that any deterministic and exact (i.e., $\epsilon=0$) unitary time-reversal algorithm must use at least $\Omega(d^2)$ queries, matching the upper bound in~\cite{chen2024quantum}.

However, the prior best lower bound $\Omega(d^2)$~\cite{odake2024analytical} holds only for the exact case.
A remaining gap is whether allowing approximation with a nonzero error $\epsilon> 0$ could yield any improvement over the $O(d^2)$ query complexity. 
In other words, the following question is still open:

\begin{question}\label{eq-1132218}\centering
\textit{Does approximation help reduce the query complexity of unitary time-reversal?
}
\end{question}

\subsection{Our results}

In this paper, we provide a negative answer to the above question by showing a \textit{robust} and \textit{tight} query lower bound $\Omega((1-\epsilon)d^2)$,
demonstrating that unitary time-reversal retains optimal exponential hardness (in the number of qubits)
even when constant error is allowed.
Our lower bound, combined with the matching upper bound $O(d^2)$ by the exact algorithm given in \cite{chen2024quantum}, fully settles the query complexity of the unitary time-reversal task with approximation.
Formally, our main result is as follows:
\begin{theorem}[\cref{thm-640254} and \cref{coro-6290408} restated]
\label{thm-6291832}
Given query access to an unknown $d$-dimensional unitary $U$,
any algorithm that approximates the time-reversed unitary $U^{-1}$ to within diamond norm or average-case distance\footnote{The average-case distance is given in \cref{def-6290002}.} error $\epsilon$, must use at least $\Omega((1-\epsilon)d^2)$ queries.
\end{theorem}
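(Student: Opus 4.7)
The plan is to prove the lower bound through a Haar-averaging argument together with a representation-theoretic bound on the maximum average fidelity achievable by any $T$-query protocol. The key structural observation is a ``holomorphic versus anti-holomorphic'' mismatch: a protocol can only access $U$ and controlled-$U$, so its output depends on $U$ polynomially, whereas the target $U^{-1} = U^\dagger$ depends anti-holomorphically on $U$.

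First I would reduce to the average-case setting by placing a Haar prior on the unknown $U$, which is legitimate since the diamond-norm distance upper-bounds the average-case distance. Any $T$-query coherent protocol with arbitrary ancillas and controlled-$U$ access can be brought into a canonical Stinespring form: fixed isometries interleaved with $T$ controlled-$U$ gates, followed by a partial trace on an ancilla register. Since each controlled-$U = |0\rangle\langle 0|\otimes I + |1\rangle\langle 1|\otimes U$ contributes degree at most one in $U$ and none in $\bar U$, the Stinespring operator $W(U)$ of the induced channel $\Lambda_U$ is a matrix-valued polynomial in $U$ of total degree at most $T$, and the Choi operator $J(\Lambda_U)\propto W(U)\rho_0 W(U)^\dagger$ is a Hermitian polynomial of degree at most $T$ in each of $U$ and $\bar U$.

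Next I would recast the average-fidelity condition $\mathbb{E}_U \langle U^{-1}|_C J(\Lambda_U)|U^{-1}\rangle_C \geq 1 - \epsilon$ as a Haar integral of a polynomial of total degree at most $T+1$ in both $U$ and $\bar U$, the extra degree coming from the bra and ket of the target Choi state. By Schur--Weyl/Peter--Weyl duality, such integrals decompose over irreducible $U(d)$-representations $V_\lambda$ indexed by Young diagrams $\lambda$ with $|\lambda|\leq T+1$. I would then cast the supremum of this Haar integral over all valid $T$-query protocols as a semidefinite program and, via a suitable dual or a direct Weingarten-calculus computation, bound the optimal value by $(T+1)/d^2$; this expression matches the trivial baseline $1/d^2$ at $T=0$ (pick any fixed channel) and saturates at $1$ when $T = d^2 - 1$, consistent with the existing exact $O(d^2)$-query protocol of \cite{chen2024quantum}. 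Combining with the fidelity requirement gives $(T+1)/d^2 \geq 1 - \epsilon$, hence $T \geq (1-\epsilon)d^2 - 1 = \Omega((1-\epsilon)d^2)$.

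The main obstacle is tightening the representation-theoretic calculation to extract the exact prefactor $(1-\epsilon)$ rather than a weaker $(1-O(\epsilon))$, especially since the protocol is allowed arbitrary entangling isometries between queries together with the coherence afforded by controlled-$U$. Additional care is needed to rule out clever interleavings that might exceed the linear-in-$T$ growth of the fidelity, and I anticipate relying on the mixed-Schur--Weyl decomposition of $U^{\otimes k} \otimes \bar U^{\otimes \ell}$, together with an explicit operator-valued Weingarten identity, to pin down the sharp constant.
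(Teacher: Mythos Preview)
Your high-level strategy is sound and your conjectured endpoint is exactly right: the paper also reduces to a Haar average and ultimately proves that the entanglement fidelity with the target is at most $(n+1)/d^2$ for any $n$-query protocol, which yields the bound $n\geq (1-\epsilon)d^2-O(d)$. However, the step you flag as ``the main obstacle'' is essentially the entire technical content of the theorem, and your proposal does not supply a mechanism for it. Saying that a Weingarten computation or an SDP dual should give $(T+1)/d^2$ is not a proof sketch; Weingarten functions at order $T\sim d^2$ are notoriously unwieldy, and there is no reason to expect a naive moment expansion to collapse to this clean answer. The interleaving isometries you worry about are precisely what make the problem hard: the supremum is over all $(n+1)$-combs, and bounding $\sup_R\,\langle\!\langle I|\,R\star \E_U[C_U]\,|I\rangle\!\rangle$ requires a structural argument, not just a degree count.

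The paper's proof supplies two ingredients absent from your plan. First, rather than compare the output to the anti-holomorphic target $U^{-1}$ (which would force you into mixed Schur--Weyl on $U^{\otimes k}\otimes\bar U^{\otimes \ell}$), it appends one further query to $U$ so that the target becomes the identity channel; the whole analysis then lives in the ordinary Schur--Weyl decomposition of the $(n{+}1)$-st Haar moment $\E_U[\,|U\rangle\!\rangle\langle\!\langle U|^{\otimes n+1}\,]$. Second, and this is the heart of the argument, it introduces a family of \emph{stair operators} $A_k$ in the Young--Yamanouchi basis satisfying (i) $\E_U[C_U]\sqsubseteq I\otimes A_n$ and (ii) the contraction inequality $\tr_{\mathcal H_{2k}}(A_k)\sqsubseteq \tfrac{k+1}{k}\,I\otimes A_{k-1}$. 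Property (ii) lets one peel off the comb layer by layer, accumulating exactly the factor $(n+1)/d$. Proving (ii) is nontrivial: one must swap subsystems via Young's orthogonal form and then establish several identities on ratios $\dim(\mathcal P_\lambda)/\dim(\mathcal P_\mu)$ summed over addable/removable boxes, which the paper handles using Kerov's interlacing sequences. None of this machinery is hinted at in your outline, and I do not see how to shortcut it with a generic Weingarten or SDP-duality argument.
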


One interesting aspect of our result is the error scaling $(1-\epsilon)$, which fundamentally differs from the typical scaling $\poly(1/\epsilon)$ in quantum learning or metrology tasks.
This error scaling is natural as the unitary time-reversal can be done exactly (i.e., $\epsilon=0$).
Thus, \Cref{thm-6291832} provides a \textit{robust and optimal} hardness guarantee: permitting approximation even with constant errors cannot lead to better efficiency.
In comparison, most prior works focus on the lower bounds only for the \textit{exact} case,
which has recently been settled by Odake, Yoshida, and Murao~\cite{odake2024analytical} through a differentiation-based SDP framework (see further comparison in \cref{sec-710433,tab:main}).
We note that their method depends on the exact differentiation of the transformation map, and therefore cannot provide hardness guarantee in the approximate (non-exact) regime: it cannot rule out the possibility that an approximate unitary time-reversal algorithm could achieve better efficiency.

We note that our lower bound is strong in various senses.
First, it holds not only for diamond norm error $\epsilon$ (i.e., worst-case guarantee), but also for an average-case distance error $\epsilon$ (i.e., average-case guarantee over Haar random input states), a stronger robustness notion in many settings.
Next, it applies to the general adaptive and coherent algorithms with unbounded ancillas.
Using the techniques~\cite{Kitaev95,sheridan2009approximating,tang2025controlled} of unitary controlization,
it further extends to algorithms that have access to controlled queries.
Moreover, we note that our bound matches the numerical results obtained in \cite{grinko2025sequential} (see \cref{thm-290047} and the discussion thereafter), providing an additional empirical illustration of the asymptotic tightness established above, and further suggesting that our bound is likely optimal even non-asymptotically (i.e., without suppressing constant factors).

\paragraph{Lower bound for generalized time-reversal.}
We can provide a lower bound for the \textit{generalized unitary time-reversal} task: approximately implementing the time-reverse $U^{-t}\coloneqq e^{-iHt}$ for a given reversing time $t>0$.
Here, $H$ is allowed to be any Hamiltonian satisfying $e^{iH}=e^{i\theta}U$ for some real number $\theta$ and $\norm*{H}\leq \pi$, where $\|\cdot\|$ denotes the operator norm, the largest singular value.
Using \cref{thm-6291832}, we can provide a robust and tight query lower bound $\Omega(d^2)$ for this task with constant error and constant reversing time, which matches the upper bound $O(d^2)$ by unitary tomography~\cite{haah2023query}. 

\begin{corollary}[\cref{cor:generalized-time-reversal} restated]
\label{cor:main}
Given query access to an unknown $d$-dimensional unitary $U$, any algorithm that approximates the generalized time-reversed unitary $U^{-t}$ to within constant diamond norm error $\epsilon \leq 10^{-5}$ for constant reversing time $t\geq 0.1$, must use at least $\Omega(d^2)$ queries.
\end{corollary}

Here, the ranges $t\geq 0.1$ and $\epsilon\leq 10^{-5}$ are rather loose and chosen for simplicity of proof.

\paragraph{Lower bound for low-depth unitaries.}
Our main result, when combined with the low-depth construction of pseudorandom unitaries~\cite{ma2025construct,schuster2024random,SML+25}, can give a super-polynomial lower bound for the time complexity of unitary time-reversal even when the target unitary is restricted to depth $\poly(\log\log\log d)$\footnote{A $t$-depth unitary is a unitary that can be implemented by a $t$-depth circuit using single- and two-qubit gates}, where $\log d$ represents the number of qubits.
This strengthens the prior lower bound by \cite{schuster2024random} that requires larger depth $\poly(\log\log d)$.
Formally,
\begin{corollary}\label{coro-1141016}
Under cryptographic assumptions, no algorithm can solve the unitary time-reversal task in polynomial time (with respect to the number of qubits), even only for $\poly(\log\log\log d)$-depth unitaries and with constant average-case distance error.
\end{corollary}
\begin{proof}
The idea follows from noting that the Haar expectation in \cref{thm-691837} can be simulated using $\poly(\log\log\log d)$-depth pseudorandom unitaries~\cite[Corollary 2]{schuster2024random}. 
Specifically, we assume, to the contrary, that an algorithm $R$ performs unitary time-reversal for these low-depth pseudorandom unitaries in polynomial time. Then, the quantum channel shown in \cref{fig-6210108} must be close to the identity channel.
On the other hand, our \cref{thm-691837} shows that, for Haar random unitaries, the quantum channel shown in \cref{fig-6210108} must be close to the completely depolarizing channel.
This is because \cref{thm-691837} in fact shows that the maximal eigenvalue of the Choi representation of the channel in \cref{fig-6210108} is no more than $(n+1)/d$ and $n$ is polynomial in $\log d$ by assumption. 
This means the algorithm $R$ can be used to distinguish the pseudorandom and Haar random unitaries, by distinguishing whether the channel in \cref{fig-6210108} is close to identity or depolarizing channel to some constant precision, which only incurs a constant overhead (e.g., inputting $\ket{0}$ and then measuring $\{\ketbra{0}{0},I-\ketbra{0}{0}\}$). This contradicts the indistinguishability between pseudorandom unitaries and Haar random unitaries under cryptographic assumptions~\cite{ma2025construct}.
\end{proof}

Moreover, when combined with the low-depth construction of unitary $t$-designs~\cite[Theorem~1]{SML+25}, 
our result establishes a lower bound of $\Omega(t)$ for the unitary time-reversal of $\widetilde{O}(t)$\footnote{$\widetilde{\Omega}(\cdot)$ hides polylogarithmic factors.}-depth unitaries (for any $t\leq O(d^2)$) under the average-case distance. 
Recalling that an arbitrary $\log d$-qubit unitary can be implemented with depth $O(d^2)$~\cite{M_tt_nen_2004}, 
our lower bound $\Omega(t)$ is optimal up to polylogarithmic factors. 
This is because the range $t\leq O(d^2)$ of our lower bound saturates the upper bound of the depth of $\log d$-qubit unitaries, and general $O(d^2)$-depth unitaries can be learned with $\Theta(d^2)$ queries. 
This result can be viewed as a \emph{graded} version of our main lower bound. 
Notably, these graded bounds are purely information-theoretic and do not rely on the cryptographic assumptions 
required by pseudorandom unitaries.

\subsection{Overview of techniques}
Here, we briefly outline the main idea for the proof of \cref{thm-6291832}.
Suppose $R$ is a unitary time-reversal algorithm with error bounded by $\epsilon$ that uses $n$ queries to the unknown unitary $U$. Then, $R$ can be described by an $(n+1)$-comb~\cite{chiribella2008quantum,chiribella2009theoretical} (see \Cref{sec-6290132} for a brief introduction) on $(\mathcal{H}_0,\mathcal{H}_1,\ldots,\mathcal{H}_{2n+1})$ where each $\mathcal{H}_i$ is a copy of $\mathbb{C}^d$. 
We make an additional query to $U$ at the end of $R$, so that the overall channel approximates the identity channel
$\mathcal{I}_{\mathcal{H}_0\rightarrow \mathcal{H}_{2n+2}}$ from $\mathcal{H}_0$ to $\mathcal{H}_{2n+2}$ (see \cref{fig-6210108}).
Here, the $n+1$ queries to $U$ in total can be also described by an $(n+1)$-comb: 
\[C_U\coloneqq \kettbbra{U}{U}^{\otimes n+1},\]
where $\kett{U}$ denotes the vector obtained by flattening the matrix $U$.
Consequently, the overall channel is represented by the $1$-comb
\(
R \star C_U\stackrel{\epsilon}{\approx} \kettbbra{I_{\mathcal{H}_0\rightarrow\mathcal{H}_{2n+2}}}{I_{\mathcal{H}_0\rightarrow\mathcal{H}_{2n+2}}},
\)
where $\star$ denotes the link product~\cite{chiribella2009theoretical} for quantum combs, $\kettbbra{I_{\mathcal{H}_0\rightarrow\mathcal{H}_{2n+2}}}{I_{\mathcal{H}_0\rightarrow\mathcal{H}_{2n+2}}}$ is the Choi representation of the identity channel $\mathcal{I}_{\mathcal{H}_0\rightarrow\mathcal{H}_{2n+2}}$, and $\stackrel{\epsilon}{\approx}$ denotes approximation with error bounded by $\epsilon$. 
Taking Haar expectation over $U\in\mathbb{U}_d$ and using the bilinearity of $\star$, we can see that:
\begin{equation}\label{eq-721708}
R \star \E_U[C_U] \stackrel{\epsilon}{\approx} \kettbbra{I_{\mathcal{H}_0\rightarrow\mathcal{H}_{2n+2}}}{I_{\mathcal{H}_0\rightarrow\mathcal{H}_{2n+2}}}.
\end{equation}
Here, $\E_U[C_U]$ is called the $(n+1)$-th moment of the Haar random unitary under Choi representation (also called performance operator~\cite{quintino2022deterministic} for the time-reversal task).
Then, we use the \textit{stair operators} $\{A_k\}_{k=1}^n$ (see \cref{def-6100203}) as a tool to analyze the Haar moment $\E_U[C_U]$.
The stair operators are defined in the Young-Yamanouchi basis~\cite{ceccherini2010representation,harrow2005applications} under the Schur-Weyl duality of an asymmetric bipartite quantum system.
We prove two lemmas (see \cref{lemma-641548} and \cref{lemma-641551}) for the stair operators. 
The first lemma shows that they provide a good upper bound (w.r.t. the L\"owner order) for the Haar moment:
\begin{equation}\label{eq-721713}
\E_U[C_U]\sqsubseteq I_{\mathcal{H}_{2n+1}}\otimes A_n,
\end{equation}
where $I_{\mathcal{H}_{2n+1}}$ is the identity operator on $\mathcal{H}_{2n+1}$.
The second lemma indicates that, the stair operators are compatible with quantum combs through the link product $\star$, i.e., when linked with an arbitrary $(n+1)$-comb $R$, they can be sequentially contracted and bounded, until we obtain the identity operator (with a coefficient) on $\mathcal{H}_{2n+2}\otimes \mathcal{H}_0$, as shown in \cref{lemma-641443}:
\begin{equation}\label{eq-721714}
R\star (I_{\mathcal{H}_{2n+1}}\otimes A_n)\sqsubseteq\frac{n+1}{d} I_{\mathcal{H}_{2n+2}}\otimes I_{\mathcal{H}_0}.
\end{equation}
Combining \cref{eq-721713} and \cref{eq-721714}, we can show that
\begin{equation}\label{eq-750225}
R\star \E_U[C_U]\sqsubseteq R\star(I_{\mathcal{H}_{2n+1}}\otimes A_n)\sqsubseteq \frac{n+1}{d}I_{\mathcal{H}_{2n+2}}\otimes I_{\mathcal{H}_0},
\end{equation}
where we also use the property that $\star$ preserves the L\"owner order.
Intuitively, this means if $n$ is small and $U$ is Haar random, the overall channel $R\star \E_U[C_U]$ is highly depolarizing, for any algorithm $R$.
Therefore, combining \cref{eq-750225} with \cref{eq-721708}, we can see that $n$ must be $\Omega(d^2)$ even for a constant non-zero error $\epsilon$, since the maximum eigenvalue of $\kettbbra{I_{\mathcal{H}_0\rightarrow\mathcal{H}_{2n+2}}}{I_{\mathcal{H}_0\rightarrow\mathcal{H}_{2n+2}}}$ is $d$.
As our proof is based on the Choi representation, it is natural to expect that our lower bound applies even for the average-case distance error, according to \cref{eq-6282338}.

To prove these two lemmas, we perform a representation-theoretic analysis of the stair operators.
Specifically, we exploit the raising and lowering techniques of Young diagrams based on the branching rule of symmetric group representations to show \cref{eq-721713} and \cref{eq-721714}, respectively. 
Since the subsystem to be traced out is not the ``last'' subsystem (w.r.t. the ordering fixed by the action of symmetric groups) in the stair operators, the lowering techniques are not directly applicable here.
To address this, we use the Young's orthogonal form~\cite{ceccherini2010representation} for adjacent transpositions to swap the last two subsystems and then analyze the resulting expressions.
By this, we can reduce the original problems to pure combinatorics on Young diagrams (see \cref{sec-722129}), which is then solved with the assistance of Kerov's interlacing sequences~\cite{kerov1993transition,kerov2000anisotropic}.

\subsection{Related work}

In what follows, we review previous lower bound results for the unitary time-reversal task.

\subsubsection{Lower bounds from semidefinite programming}\label{sec-710433}
Quintino, Dong, Shimbo, Soeda, and Murao~\cite{quintino2019probabilistic,quintino2019reversing} developed a systematic semidefinite programming (SDP) approach to find probabilistic exact algorithms (i.e., exact conditioned on a flag output bit being ``success'') for unitary transformation tasks such as transposition, complex conjugation and time-reversal.
Query lower bounds are established for such algorithms;
in particular, $\Omega(d)$ queries are required for probabilistic exact unitary time-reversal. 
This approach is extended by Yoshida, Koizumi, Studzi{\'n}ski, Quintino, and Murao~\cite{YKS+24} to derive the same $\Omega(d)$ query lower bound but applicable to approximate unitary time-reversal.
Odake, Yoshida, and Murao~\cite{odake2024analytical} further developed a general framework for deriving lower bounds on deterministic exact unitary transformation tasks. Specifically, they consider transformations described by differentiable functions $f:\mathbb{U}_d\rightarrow \mathbb{U}_d$ and analyze the induced Lie algebra mappings obtained by differentiation at various points.
Within the Lie algebra framework, they formulate an SDP such that the number of queries required to implement $f(\cdot)$ is lower bounded by the SDP solution. Consequently, they show an $\Omega(d^2)$ query lower bound for deterministic exact unitary time-reversal (i.e., $f(U)=U^{-1}$), matching the upper bound $O(d^2)$ by the algorithm in~\cite{chen2024quantum}.

However, the differentiation-based method in \cite{odake2024analytical} depends on the exact form of the transformation map and appears challenging to extend directly to the analysis of approximate (non-exact) algorithms, so that their lower bound $\Omega(d^2)$ applies only to the \textit{exact} case. 
This leaves open the possibility that approximate unitary time-reversal algorithms may achieve better efficiency 
and motivates our central \Cref{eq-1132218}.

\begin{table}
\centering
\setlength{\tabcolsep}{1.5mm}{{\renewcommand{\arraystretch}{1.5}
\begin{tabular}{|cc|cl|}
\hline
\multicolumn{2}{|c|}{Algorithm Types}                                                                    & \multicolumn{2}{c|}{Lower Bounds}                                            \\ \hline
\multicolumn{1}{|c|}{\multirow{2}{*}{\shortstack{Exact\\ \\ ($\epsilon=0$)}}} & Deterministic & $\quad\Omega(d^2)\quad$      & \cite{odake2024analytical}                             \\ \cline{2-4} 
\multicolumn{1}{|c|}{}                                                                   & Probabilistic$^\dagger$ & $\quad\Omega(d)\quad$        & \cite{quintino2019probabilistic,quintino2019reversing} \\ \hline
\multicolumn{2}{|c|}{Ancilla-Free}                                                                       & $\quad\Omega(d^{1/4})\quad$  & \cite{cotler2023information}                           \\ \hline
\multicolumn{2}{|c|}{Clean$^\ddagger$}                                                                              & $\quad\Omega(d)\quad$        & \cite{gavorova2024topological}                         \\ \hline
\multicolumn{2}{|c|}{\multirow{3}{*}{General}}                                                           & $\quad\Omega(\sqrt{d})\quad$ & \cite{fefferman2015quantum}                            \\ \cline{3-4} 
\multicolumn{2}{|c|}{}                                                                                   & $\quad\Omega(d)\quad$        & \cite{YKS+24}                                          \\ \cline{3-4} 
\multicolumn{2}{|c|}{}                                                                                   & $\quad\Omega(d^2)\quad$      & Our~\Cref{thm-6291832}            \\ \hline
\end{tabular}
}}
\caption{Comparison of query lower bounds for unitary time-reversal. In the approximate (non-exact) settings, the lower bounds hold for constant error $\epsilon> 0$.
$^\dagger$Probabilistic exact algorithms refer to algorithms that are exact conditioned on a flag output bit being ``success''. $^\ddagger$Clean means the ancilla output must be returned in $\ket{0}$ assuming the algorithm is only unitary with postselection.}\label{tab:main}
\end{table}

\subsubsection{Lower bounds from topological obstructions}
Gavorov{\'a}, Seidel, and Touati~\cite{gavorova2024topological} established lower bounds for unitary transformation tasks from a topological perspective. 
By proving any $m$-homogeneous function from $\mathbb{U}_d$ to the $1$-sphere $\mathcal{S}^1$ must satisfy $d\mid m$ (i.e., $d$ divides $m$), they show that any \textit{clean} algorithm (i.e., the ancilla output must be returned in $\ket{0}$ assuming the algorithm is only unitary with postselection) for unitary time-reversal task requires $\Omega(d)$ queries. Note that the cleanness requirement is a strong constraint, since uncomputation of ancillas is generally hard when the algorithm has access only to the time-forward $U$.

\subsubsection{Lower bounds from quantum learning}
Cotler, Schuster, and Mohseni~\cite{cotler2023information} designed a quantum learning task such that when ancillas are not allowed: $\Omega(d^{1/4})$ queries to $U$ are required to solve it; and only $O(1)$ queries are sufficient given additional access to the time-reverse $U^{-1}$. This separation implies an $\Omega(d^{1/4})$ query lower bound for the unitary time-reversal task, but it applies only to algorithms without ancillas~\cite{schuster2024random}.
Schuster, Haferkamp, and Huang~\cite{schuster2024random} later established a stronger separation in time complexity without this limitation. Their new learning task leverages the low depth construction of pseudorandom unitaries~\cite{schuster2024random,ma2025construct} and the connectivity features of quantum dynamics~\cite{schuster2023learning}. Specifically, this task cannot be solved in polynomial time; but it can be solved in $O(1)$ time given additional access to the time-reverse $U^{-1}$. This implies a super-polynomial lower bound $\omega(\poly\log d)$ for the time complexity of unitary time-reversal.


\subsubsection{Lower bounds from quantum adversary method}
Fefferman and Kimmel~\cite{fefferman2015quantum} investigated the task of finding the pre-image of an unknown in-place permutation oracle.
Specifically, an in-place permutation oracle $O_\pi$ is a unitary that maps $\ket{i}$ to $\ket{\pi(i)}$, where $i\in \{1,\ldots,d\}$ and $\pi\in\mathfrak{S}_d$ is an unknown permutation on $d$ elements. 
The task is to find the index $i$ such that $\pi(i)=1$.
They proved that at least $\Omega(\sqrt{d})$ queries to the in-place permutation oracle $O_\pi$ are required to solve this task, using the techniques from the quantum adversary method~\cite{ambainis2000quantum}.
On the other hand, we can directly find the index if we are able to apply $(O_\pi)^{-1}$ on $\ket{1}$. 
Thus, their result implies an $\Omega(\sqrt{d})$ lower bound for the unitary time-reversal.

\subsubsection{Lower bounds from compressed oracle method.}
Tang and Wright~\cite{tang2025amplitude} showed a query lower bound $\Omega(\min\{d,1/\delta^2\})$ for amplitude estimation without access to $U^{-1}$,
revealing the importance of the time-reverse unitary in the Grover-type quadratic speedup.
Here, $d$ is the dimension of the unknown unitary $U$ and $\delta$ is the required precision.
Specifically, they reduced the problem of distinguishing two diagonal unitary ensembles to the amplitude estimation problem. The hardness of the former problem was then proved using the compressed oracle method~\cite{Zhandry19}, which purifies the randomness from the unitary ensembles and yields a finer lower bound given access only to $U$. 
Combined with the well-known upper bound $O(1/\delta)$, their result implies a lower bound of $\Omega(\min\{\sqrt{d},1/\sqrt{\epsilon}\})$ for unitary time-reversal to within diamond norm error $\epsilon$.\footnote{This can be seen by the following argument. Suppose $\epsilon\leq \delta^2$ and the time-reversal to within diamond norm error $\epsilon$ can be done using $n$ queries. By the original amplitude estimation algorithm, we can use $O(n/\delta)$ queries to $U$ to obtain an estimate of the amplitude with error $O(\delta+\epsilon/\delta)=O(\delta)$. Therefore, the lower bound due to \cite{tang2025amplitude} implies $n = \Omega(\min\{d\delta,1/\delta\})$. If $\epsilon\leq 1/d$, we set $\delta=1/\sqrt{d}$ and then $n=\Omega(\sqrt{d})$. Otherwise if $\epsilon>1/d$, we set $\delta=\sqrt{\epsilon}\geq 1/\sqrt{d}$ and then $n=\Omega(1/\sqrt{\epsilon})$.}
We note that when $\epsilon$ is a constant, this lower bound becomes $\Omega(1)$.

\subsection{Discussion and further implication}
In this paper, we settle the complexity of the unitary time-reversal, a fundamental task in quantum information processing.
We prove a robust and tight query lower bound $\Omega((1-\epsilon)d^2)$ for this task.
This answers the central Question~\ref{eq-1132218} in the negative: approximation does not help in the unitary time-reversal.
Here, we further discuss the implication of our results. 

An interesting question is how to develop a general framework for the robust hardness of quantum query transformations, like the SDP-based framework established by \cite{odake2024analytical} for the exact cases.
Beyond unitary time-reversal,
other two canonical query transformations are the unitary conjugation $U\mapsto U^{*}$ and unitary transposition $U\mapsto U^{\textup{T}}$.
For the unitary conjugation, it is shown that $O(d)$ queries suffice~\cite{miyazaki2019complex,ebler2023optimal} and $\Omega(d)$ queries are necessary even when allowing approximation~\cite{ebler2023optimal}.
For unitary transposition, our \Cref{thm-6291832} implies the \textit{state-of-the-art} result in the following \cref{coro-1140114}, extending the exact-case lower bound in \cite{odake2024analytical} to the approximate regime.
\begin{corollary}\label{coro-1140114}
Given query access to an unknown $d$-dimensional unitary $U$, any algorithm that approximates the transpose $U^{\textup{T}}$ to within diamond norm or average-case distance error $\epsilon$, must use at least $\Omega((1-\epsilon)d)$ queries.
\end{corollary}
This is by noting that any $m$-query unitary transposition algorithm combined with the $O(d)$-query exact unitary conjugation algorithm~\cite{miyazaki2019complex,ebler2023optimal}  yields an $O(md)$-query unitary time-reversal algorithm with the same approximation error.
However, there remains a gap to the best known upper bound $O(d^2)$ for unitary transposition by \cite{chen2024quantum}.

\subsection{Organization}

In \Cref{sec:pre}, we review the notation and preliminaries used in this paper. In \Cref{sec:hardness_of_time_reversal}, we present the details of our main results, where \Cref{thm-6291832} is proved through \Cref{sec-6290007,sec-641554,sec-641552,sec-641553}, and \Cref{cor:main} is proved in \Cref{sub:generalized_time_reversal}.
Technical lemmas and their proofs are deferred to \Cref{sec:deferred_lemmas}.

\section{Preliminaries}
\label{sec:pre}

\subsection{Notation}
We use $\mathcal{L}(\mathcal{H})$ to denote the set of linear operators on the Hilbert space $\mathcal{H}$. Similarly, we use $\mathcal{L}(\mathcal{H}_0,\mathcal{H}_1)$ to denote the set of linear operators from $\mathcal{H}_0$ to $\mathcal{H}_1$. Given two orthonormal bases for $\mathcal{H}_0$ and $\mathcal{H}_1$ respectively, we can represent each linear operator in $\mathcal{L}(\mathcal{H}_0,\mathcal{H}_1)$ by a $\dim(\mathcal{H}_1)\times \dim(\mathcal{H}_0)$ matrix and for such a matrix \(X\), we use \(\kett{X}\in \mathcal{H}_1\otimes\mathcal{H}_0\) to denote the vector obtained by flattening the matrix $X$. It is easy to see the following facts:
\[\kett{\ketbra{\psi}{\phi}}=\ket{\psi}\ket{\phi^*}, \quad\quad\quad \kett{XYZ}=X\otimes Z^\textup{T} \kett{Y},\]
where \(\ket{\phi^*}\) is the entry-wise complex conjugate of \(\ket{\phi}\) w.r.t. to a given orthonormal basis, and $Z^\textup{T}$ is the transpose of the matrix $Z$. 
The inner product can be denoted by \(\bbrakett{X}{Y}=\tr(X^\dag Y)\).

The \textit{Choi-Jamio{\l}kowski operator} of a quantum channel \(\mathcal{E}:\mathcal{L}(\mathcal{H}_0)\rightarrow \mathcal{L}(\mathcal{H}_1)\) is defined by:
\[\mathfrak{C}(\mathcal{E})=(\mathcal{E}\otimes\mathcal{I}_{\mathcal{H}_0})(\kettbbra{I}{I}),\] 
where \(\mathcal{I}_{\mathcal{H}_0}:\mathcal{L}(\mathcal{H}_0)\rightarrow\mathcal{L}(\mathcal{H}_0)\) is the identity channel and \(\kett{I}=\sum_i \ket{i}\ket{i}\) is the (unnormalized) maximally entangled state. Then, the application of the channel $\mathcal{E}$ can be described by its Choi-Jamio{\l}kowski operator:
\[\mathcal{E}(X)=\tr_{\mathcal{H}_0}\!\left(\mathfrak{C}(\mathcal{E})^{\textup{T}_{\mathcal{H}_0}}\cdot (I_{\mathcal{H}_1}\otimes X)\right),\]
where $\tr_{\mathcal{H}_0}$ and $\textup{T}_{\mathcal{H}_0}$ denote the partial trace and partial transpose on the subsystem $\mathcal{H}_0$, respectively.

For a sequence of Hilbert spaces $\mathcal{H}_0,\ldots,\mathcal{H}_{n}$ indexed by consecutive integers, we use $\mathcal{H}_{i:j}$ to denote the Hilbert space $\mathcal{H}_i\otimes\cdots\otimes\mathcal{H}_j$. For two linear operators $X,Y$, we use $X\sqsubseteq Y$ to denote that $Y-X$ is positive semidefinite.

\subsection{Distance between quantum channels}\label{sec:distance}

\paragraph{Worst-case distance.}
First, we introduce the diamond norm, which servers as the worst-case distance between quantum channels.
\begin{definition}[Diamond norm~\cite{AKN98}]
The \textit{diamond norm} of two quantum channels $\mathcal{E}_1,\mathcal{E}_2$ is defined as
\[\|\mathcal{E}_1-\mathcal{E}_2\|_\diamond\coloneqq \sup_{\rho}\|(\mathcal{E}_1\otimes \mathcal{I})(\rho)-(\mathcal{E}_2\otimes\mathcal{I})(\rho)\|_1,\]
where $\|\cdot\|_1$ denotes the Schatten $1$-norm (trace norm).
\end{definition}

There is a simple relation between the diamond norm of quantum channels and trace norm of the corresponding Choi-Jamio{\l}kowski operators. Specifically, let $\mathcal{E}_1,\mathcal{E}_2:\mathcal{L}(\mathbb{C}^d)\rightarrow\mathcal{L}(\mathbb{C}^d)$ be two $d$-dimensional quantum channels. Then, by the definition of diamond norm, we have
\begin{align}
\|\mathcal{E}_1-\mathcal{E}_2\|_\diamond 
&\geq \left\|(\mathcal{E}_1\otimes \mathcal{I})\!\left(\frac{\kettbbra{I}{I}}{d}\right)-(\mathcal{E}_2\otimes\mathcal{I})\!\left(\frac{\kettbbra{I}{I}}{d}\right)\right\|_1\nonumber\\
&=\frac{1}{d}\|\mathfrak{C}(\mathcal{E}_1)-\mathfrak{C}(\mathcal{E}_2)\|_1,\label{eq-692238}
\end{align}
where $\kettbbra{I}{I}/d$ is a maximally entangled state.

\paragraph{Average-case distance.}
Then, we introduce the average-case distance between quantum channels, which is widely used in quantum learning~\cite{huang2024learning,vasconcelos2024learning}, benchmarking~\cite{magesan2012characterizing,magesan2011scalable} and metrology~\cite{chen2023unitarity,yang2020optimal}. For this, we first need the notion of fidelity. Recall that the fidelity of two quantum states $\rho,\sigma$ is defined as 
\[F(\rho,\sigma)\coloneqq \tr\!\left(\sqrt{\sqrt{\sigma}\rho\sqrt{\sigma}}\right)^2.\] 
Note that when $\sigma=\ketbra{\psi}{\psi}$ is a pure state, we have $F(\rho,\sigma)=\tr(\rho\sigma)=\bra{\psi}\rho\ket{\psi}$.
Then, the definition of average-case distance is as follows:
\begin{definition}[Average-case distance~\cite{huang2024learning,vasconcelos2024learning}]\label{def-6290002}
The average-case distance between two quantum channels $\mathcal{E}_1,\mathcal{E}_2$ is defined as 
\[\mathcal{D}_{\textup{avg}}(\mathcal{E}_1,\mathcal{E}_2)\coloneqq \E_{\ket{\psi}}[1-F(\mathcal{E}_1(\ketbra{\psi}{\psi}),\mathcal{E}_2(\ketbra{\psi}{\psi}))],\]
where $\E_{\ket{\psi}}$ denotes the expectation over the Haar random state $\ket{\psi}$.
\end{definition}

A closely related measure is the \textit{channel fidelity}~\cite{raginsky2001fidelity} (or \textit{entanglement fidelity}~\cite{nielsen2002simple}). Specifically, for $d$-dimensional quantum channels $\mathcal{E}_1$ and $\mathcal{E}_2$, the channel fidelity $F_\textup{ch}$ of them is defined as the fidelity of their normalized Choi operators:
\[F_\textup{ch}(\mathcal{E}_1,\mathcal{E}_2)\coloneqq F\!\left(\frac{\mathfrak{C}(\mathcal{E}_1)}{d},\frac{\mathfrak{C}(\mathcal{E}_2)}{d}\right).\]
This measure is also commonly used in higher-order quantum transformation literature (see, e.g., 
\cite{TMM+25}
).

Now, we introduce some properties of these measures.
First, it is easy to see that both the average-case distance and the channel fidelity are unitarily invariant, i.e.,
\begin{equation}\label{eq-6292334}
\begin{split}
\Davg(\mathcal{U}\circ\mathcal{E}_1\circ\mathcal{V},\,\mathcal{U}\circ\mathcal{E}_2\circ\mathcal{V})&=\Davg(\mathcal{E}_1,\mathcal{E}_2),\\
F_{\textup{ch}}(\mathcal{U}\circ\mathcal{E}_1\circ\mathcal{V},\,\mathcal{U}\circ\mathcal{E}_2\circ\mathcal{V})&=F_{\textup{ch}}(\mathcal{E}_1,\mathcal{E}_2),
\end{split}
\end{equation}
for any unitary channels $\mathcal{U},\mathcal{V}$.
Second, the average-case distance is closely related to the channel fidelity when one of $\mathcal{E}_1$ and $\mathcal{E}_2$ is a unitary channel~\cite{nielsen2002simple}:
\begin{equation}\label{eq-6282338}
\Davg(\mathcal{E},\mathcal{U})= \frac{d}{d+1}\left(1-F_\textup{ch}\!\left(\mathcal{E},\mathcal{U}\right)\right)=\frac{d}{d+1}\left(1-\frac{1}{d^2}\tr(\mathfrak{C}(\mathcal{E})\cdot\kettbbra{U}{U})\right),
\end{equation}
where $d$ is the dimension of $\mathcal{U}$ and $\mathcal{E}$.
Then, it is easy to see that
\begin{align}
\Davg(\mathcal{E},\mathcal{U})&\leq \frac{d}{d+1}\left\|\frac{\mathfrak{C}(\mathcal{E})}{d}-\frac{\kettbbra{U}{U}}{d}\right\|_1
\leq \frac{d}{d+1}\|\mathcal{E}-\mathcal{U}\|_\diamond,\label{eq-6282353}
\end{align}
where the first inequality is due to the well-known property~\cite{fuchs2002cryptographic}: $1-\sqrt{F(\rho,\sigma)}\leq \frac{1}{2}\|\rho-\sigma\|_1$ and $F(\rho,\sigma)\leq 1$, the second inequality is by \cref{eq-692238}. 

We remark that the average-case distance can be much smaller than the diamond norm. For example, let $U=(\ketbra{0}{1}+\ketbra{1}{0})\oplus I_{d-2}$ be a $d$-dimensional unitary, where $I_{d-2}$ is the identity operator of dimension $d-2$. In such case, we have
\(\Davg(\mathcal{U},\mathcal{I})= O(1/d)\) but \(\|\mathcal{U}-\mathcal{I}\|_\diamond=\Omega(1)\). This implies that hardness results with average-case distance are generally much stronger than those with diamond norm. 

\subsection{Quantum combs}\label{sec-6290132}
In this section, we introduce the quantum comb~\cite{chiribella2008quantum,chiribella2009theoretical}, which is a powerful tool to describe (higher) transformations of quantum processes. Specifically, the Choi-Jamio{\l}kowski representation of quantum channels (i.e., transformations of quantum states) can be generalized to a higher-level concept (i.e., transformations of quantum processes), which is called \textit{quantum comb}.
\begin{definition}[Quantum comb~\cite{chiribella2009theoretical}]\label{def-681501}
For an integer $n\geq 1$, a quantum $n$-comb defined on a sequence of $2n$ Hilbert spaces $(\mathcal{H}_0,\mathcal{H}_1,\ldots,\mathcal{H}_{2n-1})$ is a positive semidefinite operator $X$ on $\bigotimes_{j=0}^{2n-1} \mathcal{H}_{j}$ such that there exists a sequence of operators $X^{(n)}, X^{(n-1)},\ldots, X^{(1)}, X^{(0)}$ such that
\begin{equation}\label{eq-681546}
\begin{split}
\tr_{\mathcal{H}_{2j-1}}\!\left(X^{(j)}\right)&=I_{\mathcal{H}_{2j-2}}\otimes X^{(j-1)},\quad 1\leq j \leq n,  
\end{split}
\end{equation}
where $X^{(n)}=X$ and $X^{(0)}=1$.
\end{definition}
Note that in \cref{def-681501}, the operators $X^{(j)}$ are again quantum $j$-combs (for convenience, we define any quantum $0$-comb to be the scalar $1$).
It is also worth noting that a quantum $1$-comb is simply the Choi-Jamio{\l}kowski operator of a quantum channel.
\begin{figure}[t]
    \centering
    \includegraphics[width=0.6\linewidth]{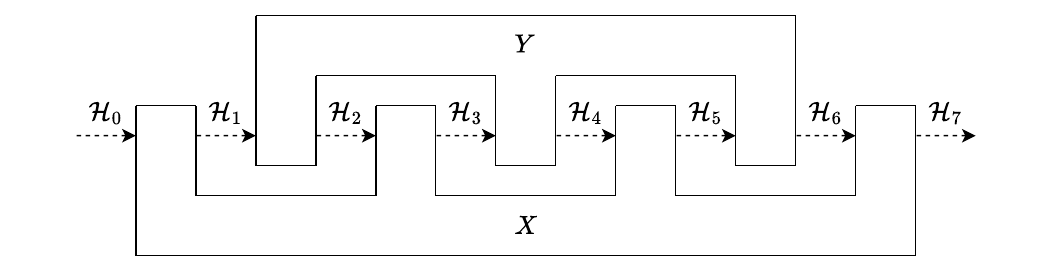}
    \caption{The combination of a $4$-comb $X$ with a $3$-comb $Y$, resulting in a $1$-comb $X\star Y$ on $(\mathcal{H}_0,\mathcal{H}_7)$.}
    \label{fig-6210015}
\end{figure}

The combination of any two quantum combs is defined by the link product ``$\star$'': 
\begin{definition}[Link product ``$\star$''~\cite{chiribella2009theoretical}]
\label{def-720255}
Suppose $X$ is a linear operator on $\mathcal{H}_{\bm{i}}=\mathcal{H}_{i_1}\otimes\mathcal{H}_{i_2}\otimes\cdots\otimes\mathcal{H}_{i_{n}}$ and $Y$ is a linear operator on $\mathcal{H}_{\bm{j}}=\mathcal{H}_{j_1}\otimes\mathcal{H}_{j_2}\otimes\cdots\otimes\mathcal{H}_{j_{m}}$,
where $\bm{i}=(i_1,\ldots,i_n)$ is a sequence of pairwise distinct indices, and similar for $\bm{j}=(j_1,\ldots,j_m)$.
Let $\bm{a}=\bm{i}\cap\bm{j}$ be the set of indices in both $\bm{i}$ and $\bm{j}$ and $\bm{b}=\bm{i}\cup\bm{j}$ be the set of indices in either $\bm{i}$ or $\bm{j}$.
Then, the combination of $X$ and $Y$ is defined by
\[X\star Y= \tr_{\mathcal{H}_{\bm{a}}}\!\left(X^{\textup{T}_{\mathcal{H}_{\bm{a}}}} \cdot Y\right)=\tr_{\mathcal{H}_{\bm{a}}}\!\left(X\cdot Y^{\textup{T}_{\mathcal{H}_{{\bm{a}}}}}\right),\]
where $\mathcal{H}_{\bm{a}}$ means the tensor product of subsystems labeled by the indices in $\bm{a}$, $\textup{T}_{\mathcal{H}_{\bm{a}}}$ means the partial transpose on $\mathcal{H}_{\bm{a}}$, both $X$ and $Y$ are treated as linear operators on $\mathcal{H}_{\bm{b}}$, extended by tensoring with the identity operator as needed.
\end{definition}

The link product has a very good property~\cite[Theorem 2]{chiribella2009theoretical}:
\begin{equation}\label{eq-721843}
X,Y\sqsupseteq 0 \Longrightarrow X\star Y\sqsupseteq 0.
\end{equation}
Moreover, it characterizes the channel concatenation under the Choi representation: given two quantum channels $\mathcal{E}_1:\mathcal{L}(\mathcal{H}_1)\rightarrow\mathcal{L}(\mathcal{H}_2)$ and $\mathcal{E}_2:\mathcal{L}(\mathcal{H}_2)\rightarrow\mathcal{L}(\mathcal{H}_3)$, we have 
$\mathfrak{C}(\mathcal{E}_2\circ \mathcal{E}_1)=\mathfrak{C}(\mathcal{E}_2)\star\mathfrak{C}(\mathcal{E}_1).$
The link product is also used to describe the combination of two higher-order combs such as that shown in \cref{fig-6210015}.
More generally, suppose $X$ is an $n$-comb on $(\mathcal{H}_0,\mathcal{H}_1,\ldots,\mathcal{H}_{2n-1})$ and $Y$ is an $(n-1)$-comb on $(\mathcal{H}_1,\mathcal{H}_2\,\ldots,\mathcal{H}_{2n-2})$, then
\begin{align}
X\star Y= \tr_{\mathcal{H}_{1:2n-2}}\!\left(X^{\textup{T}_{\mathcal{H}_{1:2n-2}}}\cdot (I_{\mathcal{H}_{2n-1}}\otimes Y \otimes I_{\mathcal{H}_0})\right)=\tr_{\mathcal{H}_{1:2n-2}}\!\left(X\cdot (I_{\mathcal{H}_{2n-1}}\otimes Y^{\textup{T}} \otimes I_{\mathcal{H}_0})\right)\nonumber
\end{align}
turns out to be a $1$-comb on $(\mathcal{H}_0,\mathcal{H}_{2n-1})$.
To see this, first note that $X,Y$ are positive semidefinite, which means $X\star Y$ is also positive semidefinite. On the other hand, note that
\begin{align}
\tr_{\mathcal{H}_{2n-1}}(X\star Y)&=\tr_{\mathcal{H}_{1:2n-2}}\!\left(\tr_{\mathcal{H}_{2n-1}}(X)^{\textup{T}_{\mathcal{H}_{1:2n-2}}}\cdot (Y \otimes I_{\mathcal{H}_0})\right)\nonumber\\
&=\tr_{\mathcal{H}_{1:2n-2}}\!\left((I_{\mathcal{H}_{2n-2}}\otimes X^{(n-1)})^{\textup{T}_{\mathcal{H}_{1:2n-2}}}\cdot (Y\otimes I_{\mathcal{H}_0})\right)\label{eq-681459}\\
&=\tr_{\mathcal{H}_{1:2n-3}}\!\left((X^{(n-1)})^{\textup{T}_{\mathcal{H}_{1:2n-3}}}\cdot (\tr_{\mathcal{H}_{2n-2}}(Y)\otimes I_{\mathcal{H}_0})\right)\nonumber \\
&=\tr_{\mathcal{H}_{1:2n-3}}\!\left((X^{(n-1)})^{\textup{T}_{\mathcal{H}_{1:2n-3}}}\cdot (I_{\mathcal{H}_{2n-3}}\otimes Y^{(n-2)}\otimes I_{\mathcal{H}_0})\right)\label{eq-681500}\\
&=\tr_{\mathcal{H}_{1:2n-3}}\!\left(((X^{(n-1)})^{\textup{T}_{\mathcal{H}_{1:2n-4}}})^{\textup{T}_{\mathcal{H}_{2n-3}}}\cdot (I_{\mathcal{H}_{2n-3}}\otimes Y^{(n-2)}\otimes I_{\mathcal{H}_0})\right)\nonumber\\
&=\tr_{\mathcal{H}_{1:2n-3}}\!\left((X^{(n-1)})^{\textup{T}_{\mathcal{H}_{1:2n-4}}}\cdot (I_{\mathcal{H}_{2n-3}}\otimes Y^{(n-2)}\otimes I_{\mathcal{H}_0})\right)\label{eq-681503}\\
&=\tr_{\mathcal{H}_{2n-3}}\!\left(X^{(n-1)}\star Y^{(n-2)}\right),\nonumber
\end{align}
where in \cref{eq-681459}, $X^{(n-1)}$ is an $(n-1)$-comb obtained from taking partial trace on $X$ (see \cref{def-681501}) and similarly in \cref{eq-681500}, $Y^{(n-2)}$ is an $(n-2)$-comb obtained from $Y$, \cref{eq-681503} is due to the property of partial trace $\tr_{\mathcal{H}}(X^{\textup{T}_{\mathcal{H}}})=\tr_{\mathcal{H}}(X)$. Then, we can iterate this process and finally obtain
\[\tr_{\mathcal{H}_{2n-1}}(X\star Y)=\tr_{\mathcal{H}_1}\!\left(X^{(1)}\right)=I_{\mathcal{H}_0},\]
which means $X\star Y$ is a $1$-comb on $(\mathcal{H}_0,\mathcal{H}_{2n-1})$.

\subsection{Young diagrams}\label{sec-6150256}
\paragraph{Young diagrams and standard Young tableaux.} A \textit{Young diagram} \(\lambda\) consisting of \(n\) boxes and \(\ell\) rows is a partition \((\lambda_1,\ldots,\lambda_\ell)\) of \(n\) such that \(\sum_{i=1}^\ell \lambda_i=n\) and \(\lambda_1\geq\cdots \geq \lambda_\ell> 0\). The $i$-th rows consists of $\lambda_i$ boxes. By convention, the Young diagram is drawn with left-justified rows, arranged from top to bottom.
The \textit{conjugate of a Young diagram} is obtained by exchanging rows and columns. For example, the Young diagram \(\lambda=(4,3,1)\) and its conjugate $\lambda'=(3,2,2,1)$ are drawn as:
\[\vcenter{\hbox{\scalebox{0.9}{\begin{ytableau}~ &~&~&~\\~&~&~\\~\end{ytableau}}\quad\quad\textup{ and}\quad\quad \scalebox{0.9}{\begin{ytableau}~&~&~\\~&~\\~&~\\~\end{ytableau}}}},\]
respectively.
We use $\ell(\lambda)$ to denote the number of rows of $\lambda$.
We use \(\lambda\vdash n\) to denote that \(\lambda\) is a Young diagram with \(n\) boxes and use $\lambda\vdash_d n$ to denote that $\lambda\vdash n$ and $\ell(\lambda)\leq d$.

A \textit{standard Young tableau} of shape \(\lambda\vdash n\) is obtained by a filling of the \(n\) boxes of \(\lambda\) using the integers \(1,\ldots,n\), each appearing exactly once, such that the entries increase from left to right in each row and from top to bottom in each column. An example of standard Young tableau of shape $(4,3,1)$ is
\[\vcenter{\hbox{\scalebox{0.9}{\begin{ytableau}1&3&4&6\\2&7&8\\5\end{ytableau}}}}.\]
Without causing confusion, we refer to the box in $T$ containing the integer $i$ as the $i$-box.
We use $\Sh(T)$ to denote the shape (Young diagram) of a standard Young tableau $T$.
We use $\Tab(\lambda)$ to denote the set of all standard Young tableaux $T$ such that $\Sh(T)=\lambda$ and use $\Tab(n)$ to denote $\bigcup_{\lambda\vdash n}\Tab(\lambda)$. 
Given a Young diagram $\lambda\vdash n$, the number of standard Young tableaux of shape $\lambda$ (i.e., $|\Tab(\lambda)|$) is characterized by the \textit{hook length formula}~\cite{fulton1997young}: 
\begin{equation}\label{eq-6192141}
|\Tab(\lambda)|=\frac{n!}{\prod_{\square\in\lambda} h_\lambda(\square)},
\end{equation}
where $h_\lambda(\square)$ is the number of boxes that are directly to the right and below the box $\square$, including the box itself (i.e., the ``hook length'').

\paragraph{Adjacency relation.} For $\lambda\vdash n+1$, $\mu\vdash n$, we write $\mu\nearrow\lambda$ or $\lambda\searrow\mu$ if $\lambda$ can be obtained from $\mu$ by adding one box. For a standard Young tableau $T$, we use $T^{\downarrow}$ to denote the standard Young tableau obtained from $T$ by removing the box containing the largest integer in $T$. 
Suppose $\mu\nearrow\lambda$ and $T\in\Tab(\mu)$. We use $T^{\uparrow\lambda}$ to denote the standard Young tableau obtained from $T$ by adding a box containing number $n+1$, resulting in shape $\lambda$.
We define $\Tab(\lambda,\mu)\subseteq \Tab(\lambda)$ to be the set of all standard Young tableaux $T$ such that $\Sh(T)=\lambda$ and $\Sh(T^\downarrow)=\mu$. It is easy to see that:
\begin{equation*}
|\Tab(\lambda,\mu)|=|\Tab(\mu)|.
\end{equation*}

Let $\mu,\nu\vdash n$ and $\mu\neq\nu$. We call $\mu$ and $\nu$ are \textit{adjacent} to each other if one of the following equivalent conditions is true:
\begin{itemize}
\item There exists a $\lambda\vdash n+1$ such that $\mu\nearrow\lambda$ and $\nu\nearrow\lambda$.
\item There exists a $\tau\vdash n-1$ such that $\tau\nearrow\mu$ and $\tau\nearrow\nu$.
\end{itemize}
Such $\mu$ and $\nu$ can be obtained from each other by moving a removable box. Furthermore, $\lambda=\mu\cup\nu$ and $\tau=\mu\cap\nu$ are unique, where $\mu\cup\nu$ denotes the Young diagram consisting of the union of boxes in $\mu$ and in $\nu$, and $\mu\cap\nu$ denotes the Young diagram consisting of the intersection of boxes in $\mu$ and in $\nu$. Moreover, we use $\lambda\setminus\mu$ to denote the set of boxes (not a Young diagram) in $\lambda$ but not in $\mu$.

\paragraph{Kerov's interlacing sequences.}
Kerov's interlacing sequences~\cite{kerov1993transition,kerov2000anisotropic} serve as an important tool in studying the combinatorics of Young diagram and their analytic generalizations.
Suppose $\lambda\vdash n$ is a Young diagram. 
The rotated version of $\lambda$ is defined by rotating the Young diagram $\lambda'$ (the conjugate of $\lambda$) $135^\circ$ counterclockwise. 
We define the $x$-axis so that the center of each box lies at an integer $x$-axis coordinate. Consider the shape of the rotated Young diagram extended with the linear curves $y=x$ and $y=-x$ (e.g., the bold line in \cref{fig-6152215}), which is a piecewise linear function. Then, Kerov's interlacing sequences are defined as the sequence $x$-coordinates of the local minima $\bm{\alpha}=(\alpha_1,\alpha_2,\ldots,\alpha_{L})$ and the sequence of $x$-coordinates of the local maxima $\bm{\beta}=(\beta_1,\beta_2,\ldots,\beta_{L-1})$. The two sequences interlace
\[\alpha_1< \beta_1<\alpha_2<\cdots<\alpha_{L-1}<\beta_{L-1}<\alpha_L.\]
\cref{fig-6152215} shows an example of the Young diagram $(6,5,3,3,2,1,1,1)$ and its interlacing sequences are $\bm{\alpha}=(-8,-4,-2,1,4,6)$ and $\bm{\beta}=(-7,-3,-1,3,5)$.
\begin{figure}[t]
    \centering
    \begin{subfigure}[b]{0.45\linewidth}
    \centering
    \includegraphics[width=1.0\linewidth]{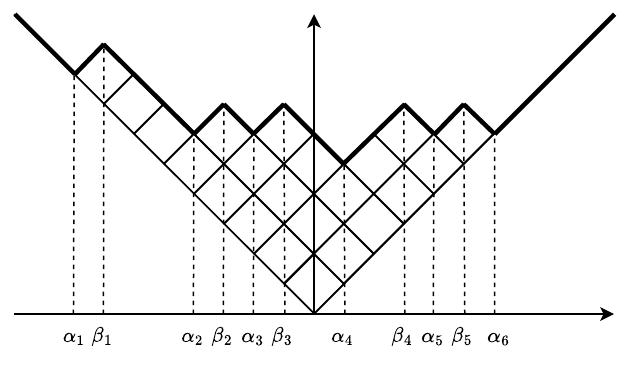}
    \vspace{-6mm}
    \caption{The interlacing sequences of $\lambda$.}\label{fig-6152215}
    \end{subfigure}
    \hspace{-0mm}
    \begin{subfigure}[b]{0.45\linewidth}
    \centering
    \includegraphics[width=1.0\linewidth]{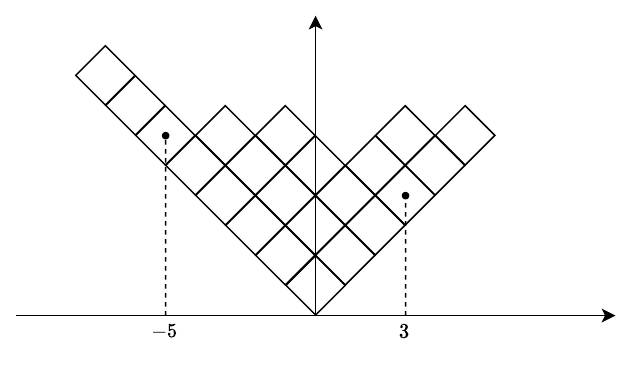}
    \vspace{-6mm}
    \caption{Axial coordinates of two boxes in $\lambda$.}\label{fig-6152350}
    \end{subfigure}
    \vspace{-0mm}
    \caption{Rotated version of the Young diagram $\lambda=(6,5,3,3,2,1,1,1)$.}\label{fig-6190458}
\end{figure}

We define the \textit{axial coordinate}\footnote{The axial coordinate is also called \textit{content} in the literature~\cite{ceccherini2010representation,okounkov1996new}} $c(\square)$ for a box $\square$ in a Young diagram as the $x$-axis coordinate of the center of this box when drawing in the rotated version. 
\cref{fig-6152350} shows an example of the Young diagram $(6,5,3,3,2,1,1,1)$ and the axial coordinates of two boxes. The \textit{axial distance} from box $\square_1$ to box $\square_2$ is defined as the difference $c(\square_1)-c(\square_2)$.

Adding or removing a box from a Young diagram can be described naturally using Kerov's interlacing sequences. A position where a box can be added corresponds precisely to a local minima $\alpha_i$. Conversely, a removable box corresponds to a local maxima $\beta_i$. Specifically, the axial coordinate of the added (or removed) box $\square$ (i.e., $c(\square)$) coincides with the $\alpha_i$ (or $\beta_i$). Suppose $\mu\vdash n$ has interlacing sequences $\alpha_1,\ldots,\alpha_L$ and $\beta_1,\ldots,\beta_{L-1}$. Let $\lambda\vdash n+1$ be obtained from $\mu$ by adding a box at $\alpha_k$. Then, we have~\cite[Eq. (3.2.3)]{kerov1993transition}
\begin{equation}\label{eq-6192042}
\frac{|\Tab(\lambda)|}{|\Tab(\mu)|}=(n+1)\prod_{i=1}^{k-1}\frac{\alpha_k-\beta_i}{\alpha_k-\alpha_i}\prod_{i=k+1}^L\frac{\alpha_k-\beta_{i-1}}{\alpha_k-\alpha_i}.
\end{equation}
On the other hand, suppose $\lambda\vdash n$ has interlacing sequences $\alpha_1,\ldots,\alpha_L$ and $\beta_1,\ldots,\beta_{L-1}$. Let $\mu\vdash n-1$ be obtained from $\lambda$ by removing a box at $\beta_k$. Then, we have~\cite[Lemma 3.3]{kerov2000anisotropic}
\begin{equation}\label{eq-6200309}
\frac{|\Tab(\mu)|}{|\Tab(\lambda)|}=\frac{(\alpha_L-\beta_k)(\beta_k-\alpha_1)}{n}\prod_{i=1}^{k-1}\frac{\beta_k-\alpha_{i+1}}{\beta_k-\beta_i}\prod_{i=k+1}^{L-1}\frac{\beta_k-\alpha_i}{\beta_k-\beta_i}.
\end{equation}

\subsection{Schur-Weyl duality}
Let $\mathcal{H}_1,\mathcal{H}_2,\ldots,\mathcal{H}_n$ be a sequence of Hilbert spaces such that $\mathcal{H}_i\cong\mathbb{C}^d$ for $1\leq i\leq n$.
Consider the Hilbert space \(\bigotimes_{i=1}^n\mathcal{H}_i\). This space admits representations of the symmetric group \(\mathfrak{S}_n\) (i.e., the group of all permutations on the set $\{1,2,\ldots,n\}$) and unitary group \(\mathbb{U}_d\) (i.e., the group of unitaries on $d$-dimensional Hilbert space). The unitary group acts by simultaneous ``rotation'' as \(U^{\otimes n}\) for any \(U\in \mathbb{U}_d\) and the symmetric group acts by permuting tensor factors:
\begin{equation}
P(\pi)\ket{\psi_1}\cdots\ket{\psi_n}=\ket{\psi_{\pi^{-1}(1)}}\cdots\ket{\psi_{\pi^{-1}(n)}},
\end{equation}
where \(\pi\in\mathfrak{S}_n\). 
\begin{remark}
In this paper, when we say the symmetric group $\mathfrak{S}_n$ acts on $\mathcal{H}_{i_1}\otimes\mathcal{H}_{i_2}\otimes\cdots\otimes\mathcal{H}_{i_n}$, the order of tensor products matters instead of the indices $i_j$ of the Hilbert spaces. For example, if $\mathfrak{S}_3$ acts on $\mathcal{H}_1\otimes\mathcal{H}_3\otimes\mathcal{H}_2$ and let $\ket{\psi_1}\ket{\psi_3}\ket{\psi_2}\in\mathcal{H}_1\otimes\mathcal{H}_3\otimes\mathcal{H}_2$, then we have
\[P(\pi_{(13)})\ket{\psi_1}\ket{\psi_3}\ket{\psi_2}=\ket{\psi_2}\ket{\psi_3}\ket{\psi_1},\]
where $\pi_{(13)}\in\mathfrak{S}_3$ is the permutation that swaps integers $1$ and $3$.
\end{remark}
Two actions \(U^{\otimes n}\) and \(P(\pi)\) commute with each other, and hence $\bigotimes_{i=1}^n\mathcal{H}_i$ admits a representation of group \(\mathbb{U}_d\times \mathfrak{S}_n\). More specifically, the Schur-Weyl duality (see, e.g., \cite{fulton2013representation}) states that
\begin{equation}\label{eq-1111224}
\bigotimes_{i=1}^n\mathcal{H}_i\stackrel{\mathbb{U}_d\times \mathfrak{S}_n}{\cong}\bigoplus_{\lambda \vdash_d\, n}\mathcal{Q}_\lambda\otimes \mathcal{P}_\lambda,
\end{equation}
where \(\mathcal{Q}_\lambda\) and \(\mathcal{P}_\lambda\) are irreducible representations of \(\mathbb{U}_d\) and \(\mathfrak{S}_n\) labeled by Young diagram $\lambda$, respectively. 
\begin{remark}
In general, the representation $\mathcal{Q}_\lambda$ depends on $d$ and is sometimes denoted by $\mathcal{Q}_\lambda^d$. However, since the dimension $d$ is fixed throughout this work, we can omit it.
\end{remark}
\begin{remark}
We take the convention that $\mathcal{Q}_\lambda=0$ (the trivial space) if $\ell(\lambda)>d$.
Therefore, the summation in \cref{eq-1111224} might be simplified to $\bigoplus_{\lambda\vdash n} \mathcal{Q}_\lambda\otimes\mathcal{P}_\lambda$.
\end{remark}

Now, suppose we have two sequences of Hilbert spaces $(\mathcal{H}^\textup{A}_1,\ldots,\mathcal{H}^{\textup{A}}_n)$ and $(\mathcal{H}^{\textup{B}}_1,\ldots,\mathcal{H}^{\textup{B}}_m)$, where $\mathcal{H}^\textup{A}_i\cong \mathcal{H}^{\textup{B}}_j\cong \mathbb{C}^d$. We define the action of group $\mathfrak{S}_{n}\times\mathfrak{S}_m$ on $\bigotimes_{i=1}^n \mathcal{H}_i^\textup{A}\otimes \bigotimes_{j=1}^m \mathcal{H}_j^\textup{B}$ as $P(\pi^\textup{A})\otimes P(\pi^\textup{B})$ for $(\pi^\textup{A},\pi^\textup{B})\in\mathfrak{S}_n\times\mathfrak{S}_m$.
Similarly, we define the action of \(\mathbb{U}_d\times \mathbb{U}_d\) on \(\bigotimes_{i=1}^n \mathcal{H}^\textup{A}_i\otimes\bigotimes_{j=1}^m \mathcal{H}^\textup{B}_j\) as $(U^{\textup{A}})^{\otimes n}\otimes (U^{\textup{B}})^{\otimes m}$ for $(U^{\textup{A}},U^{\textup{B}})\in \mathbb{U}_d\times\mathbb{U}_d$. Note that the action of $\mathfrak{S}_n\times\mathfrak{S}_m$ commutes with the action of $\mathbb{U}_d\times\mathbb{U}_d$. Therefore, we have a generalized Schur-Weyl duality on this bipartite system as
\[\bigotimes_{i=1}^n \mathcal{H}^\textup{A}_i\otimes\bigotimes_{j=1}^m \mathcal{H}^\textup{B}_j\,\, \stackrel{\mathbb{U}_d\times\mathbb{U}_d\times \mathfrak{S}_{n}\times\mathfrak{S}_m}{\cong}\,\,\bigoplus_{\substack{\lambda\vdash_{d} \, n\\ \mu\vdash_{d} \, m}} \mathcal{Q}_\lambda\otimes\mathcal{Q}_\mu\otimes \mathcal{P}_{\lambda}\otimes\mathcal{P}_\mu,\]
where $\mathcal{Q}_\lambda\otimes\mathcal{Q}_\mu\otimes\mathcal{P}_\lambda\otimes\mathcal{P}_\mu$ is an irreducible representation of $\mathbb{U}_d\times\mathbb{U}_d\times\mathfrak{S}_n\times\mathfrak{S}_m$. 

\subsection{Young-Yamanouchi basis}
\paragraph{Branching rule of $\mathfrak{S}_n$.}
We now look deeper into the structure of $\mathcal{P}_\lambda$. Consider the following chain of groups:
\[\mathfrak{S}_1\subset \mathfrak{S}_2\subset\cdots\subset\mathfrak{S}_n,\]
where the embedding $\mathfrak{S}_i\rightarrow \mathfrak{S}_{i+1}$ is by identifying each $\pi\in\mathfrak{S}_i$ as the permutation in $\mathfrak{S}_{i+1}$ that fixes the element $i+1$. Note that the irreducible representations of $\mathfrak{S}_i$ are in one-to-one correspondence to the Young diagrams consisting of $i$ boxes.
Let $\lambda\vdash i$ be a Young diagram of $i$ boxes and $\mathcal{P}_\lambda$ be an irreducible representation of $\mathfrak{S}_i$. Then, $\mathcal{P}_\lambda$ is also a representation of $\mathfrak{S}_{i-1}$, which decomposes as a direct sum of irreducible representations of $\mathfrak{S}_{i-1}$. More specifically, we have the following branching rule~\cite{ceccherini2010representation,meliot2017representation}:
\begin{equation}\label{eq-682336}
\textup{Res}^{\mathfrak{S}_{i}}_{\mathfrak{S}_{i-1}}\mathcal{P}_\lambda=\bigoplus_{\substack{\mu: \mu\nearrow\lambda}} \mathcal{P}_\mu,
\end{equation}
where $\textup{Res}^{\mathfrak{S}_i}_{\mathfrak{S}_{i-1}}$ denotes the restriction of representation from $\mathfrak{S}_i$ to $\mathfrak{S}_{i-1}$, $\mu\nearrow \lambda$ means that $\lambda$ can be obtained from $\mu$ by adding one box. \cref{eq-682336} means that we can decompose the vector space $\mathcal{P}_\lambda$ into direct sum of smaller vector spaces. By iterating this process (i.e., taking further restrictions on $\mathcal{P}_\mu$), we finally obtain a direct sum of the trivial one-dimensional representations of $\mathfrak{S}_1$. Therefore, this process determines a basis (up to scalar factors) of $\mathcal{P}_\lambda$ and each basis vector can be parameterized by its branching path:
\begin{equation}\label{eq-690205}
\lambda^{(1)}\rightarrow \lambda^{(2)}\rightarrow \cdots\rightarrow \lambda^{(n)},
\end{equation}
where $\lambda^{(n)}=\lambda$, $\lambda^{(i)}\nearrow \lambda^{(i+1)}$, and $\lambda^{(1)}$ is the one-box Young diagram. This basis is called \textit{Young-Yamanouchi basis} or \textit{Gelfand-Tsetlin basis} for $\mathfrak{S}_n$~\cite{okounkov1996new}. For convenience, we will simply call it Young basis in this paper.

Furthermore, we can see that the path in \cref{eq-690205} uniquely corresponds to a standard Young tableau $T$ of shape $\lambda=\lambda^{(n)}$, where the integer $i$ is assigned to the box $\lambda^{(i)}\setminus\lambda^{(i-1)}$, in which $\lambda^{(i)}\setminus\lambda^{(i-1)}$ denotes the shape by removing from the shape of $\lambda^{(i)}$ all the boxes belonging to $\lambda^{(i-1)}$. By this mean, we may also denote the standard Young tableau as 
\[T=\lambda^{(1)}\rightarrow\lambda^{(2)}\rightarrow\cdots\rightarrow\lambda^{(n)}.\]
By choosing an appropriate inner product making $\mathcal{P}_\lambda$ a unitary representation of $\mathfrak{S}_n$ and then normalizing the Young basis w.r.t. the inner product, we obtain an orthonormal basis $\{\ket{T}\}_{T\in\Tab(\lambda)}$ of $\mathcal{P}_\lambda$ labeled by standard Young tableaux $T\in\Tab(\lambda)$. This also implies that $\dim(\mathcal{P}_\lambda)=|\Tab(\lambda)|$.

\paragraph{Isotypic component projectors.} Consider the group algebra $\mathbb{C}\mathfrak{S}_n$, which is the associative algebra containing formal linear combinations of the elements in $\mathfrak{S}_n$ with coefficients in $\mathbb{C}$. Note that any representation of $\mathfrak{S}_n$ is naturally a representation of $\mathbb{C}\mathfrak{S}_n$ by linearly extending the group action of $\mathfrak{S}_n$. 
For any $1\leq k\leq n$ and $\lambda\vdash k$, we define the following element
\begin{equation}\label{eq-6122124}
e_{\lambda}\coloneqq \frac{\dim(\mathcal{P}_\lambda)}{k!}\sum_{\pi\in\mathfrak{S}_k} \chi^*_{\lambda}(\pi) \pi\in\mathbb{C}\mathfrak{S}_k\subseteq \mathbb{C}\mathfrak{S}_n,
\end{equation}
where $\chi_\lambda(\cdot)$ is the character of $\mathcal{P}_\lambda$.
By the orthogonality of characters (see e.g., \cite[Sec. 4.5]{etingof2011introduction}), $e_\lambda$ acts on $\mathcal{P}_\lambda$ as the identity and acts on $\mathcal{P}_\mu$ as the null map for $\mu\vdash k,\mu\neq \lambda$, thus is the projector onto the isotypic component of $\lambda$ (in fact, for any unitary representation of $\mathfrak{S}_k$, $e_\lambda$ acts as an orthogonal projector). Therefore, for any standard Young tableau $T=\lambda^{(1)}\rightarrow\cdots\rightarrow\lambda^{(n)}$, we have
\begin{equation}\label{eq-6110041}
e_{\lambda^{(k)}}\ket{T}=\ket{T},
\end{equation}
for any $1\leq k\leq n$. Moreover, $\ket{T}$ is the only vector (up to a scalar) satisfying \cref{eq-6110041} for any $1\leq k\leq n$. This means, under the algebra isomorphism  $\mathbb{C}\mathfrak{S}_n\cong \bigoplus_{\mu\vdash n}\mathcal{L}(\mathcal{P}_\mu)$ (due to the semisimplicity of $\mathbb{C}\mathfrak{S}_n$), we can write
\begin{equation}\label{eq-61301399}
\ketbra{T}{T}=\prod_{k=1}^n e_{\lambda^{(k)}},
\end{equation}
where $\ketbra{T}{T}$ is considered as a linear operator in $\bigoplus_{\mu\vdash n}\mathcal{L}(\mathcal{P}_\mu)$. 

\paragraph{Young's orthogonal form.}
For $1\leq i\leq n-1$, let $s_i\in\mathfrak{S}_n$ be the $i$-th adjacent transposition (i.e., $s_i$ swaps integers $i$ and $i+1$). It is known that those $s_i$ generate the group $\mathfrak{S}_n$ and are called Coxeter generators. The Young's orthogonal form~\cite{ceccherini2010representation} gives an explicit description of the action of $s_i$ on the Young basis:
\begin{equation}\label{eq-6140338}
s_i\ket{T}=\frac{1}{r(T)}\ket{T}+\sqrt{1-\frac{1}{r(T)^2}}\ket{s_iT},
\end{equation}
where $r(T)$ is the axial distance from the $(i+1)$-box to the $i$-box in $T$, $s_iT$ is the Young tableau obtained from $T$ by swapping the $i$-box and the $(i+1)$-box.\footnote{Note that $s_i T$ may not be a standard Young tableau. Nevertheless, in such case $\sqrt{1-1/r(T)^2}=0$.}

\section{Hardness of unitary time-reversal}
\label{sec:hardness_of_time_reversal}

In this section, we will prove our lower bounds for quantum unitary time-reversal. First, we provide a tight lower bound for time-reversal with the average-case distance error (see \cref{def-6290002}).
\begin{theorem}\label{thm-640254}
Suppose there is an algorithm that approximately implements $U^{-1}$ to within the average-case distance error $\epsilon$, using $n$ queries to the unknown $d$-dimensional unitary $U$. Then, it must satisfy $n\geq d(d+1)(1-\epsilon)-(d+1)$.
\end{theorem}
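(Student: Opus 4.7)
The plan is to execute the Choi-level argument sketched in the techniques overview, with the two stair-operator lemmas (\cref{eq-721713,eq-721714}) taken as granted.

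\medskip
\noindent\textbf{Step 1 (comb setup and invariance).} I would first represent any $n$-query protocol for unitary time-reversal as an $(n+1)$-comb $R$ on $(\mathcal{H}_0,\mathcal{H}_1,\ldots,\mathcal{H}_{2n+1})$, with $\mathcal{H}_0$ the external input, $\mathcal{H}_{2n+1}$ the external output, and $n$ internal slots for queries to $U$. Appending one more query to $U$ at the end yields a channel from $\mathcal{H}_0$ to a fresh copy $\mathcal{H}_{2n+2}$ whose Choi operator equals $R\star C_U$, where $C_U=\kettbbra{U}{U}^{\otimes (n+1)}$. Denote by $\mathcal{E}_{R,U}:\mathcal{L}(\mathcal{H}_0)\to\mathcal{L}(\mathcal{H}_{2n+1})$ the channel implemented by $R$ on input $U$. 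By the unitary invariance of $\Davg$ (\cref{eq-6292334}), the hypothesis $\Davg(\mathcal{E}_{R,U},\mathcal{U}^{-1})\leq \epsilon$ is equivalent to $\Davg(\mathcal{U}\circ\mathcal{E}_{R,U},\,\mathcal{I}_{\mathcal{H}_0\to\mathcal{H}_{2n+2}})\leq \epsilon$ for every $U\in\mathbb{U}_d$.

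\medskip
\noindent\textbf{Step 2 (Choi overlap lower bound).} Next I would apply \cref{eq-6282338} with target unitary equal to the identity, average over Haar-random $U$, and pull the expectation through the bilinear $\star$ and the inner product to obtain
\begin{equation*}
\epsilon\ \geq\ \E_U\Davg(\mathcal{U}\circ\mathcal{E}_{R,U},\mathcal{I})\ =\ \frac{d}{d+1}\left(1-\frac{\bbra{I}\bigl(R\star \E_U[C_U]\bigr)\kett{I}}{d^2}\right),
\end{equation*}
which rearranges to
\begin{equation*}
\bbra{I}\bigl(R\star \E_U[C_U]\bigr)\kett{I}\ \geq\ d^2-\epsilon\, d(d+1).
\end{equation*}

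\medskip
\noindent\textbf{Step 3 (stair-operator upper bound).} Combining the first stair-operator bound (\cref{eq-721713}) with PSD preservation of $\star$ (\cref{eq-721843}), and then chaining with the second stair-operator bound (\cref{eq-721714}), I would get
\begin{equation*}
R\star \E_U[C_U]\ \sqsubseteq\ R\star\bigl(I_{\mathcal{H}_{2n+1}}\otimes A_n\bigr)\ \sqsubseteq\ \frac{n+1}{d}\,I_{\mathcal{H}_{2n+2}}\otimes I_{\mathcal{H}_0}.
\end{equation*}
Pairing with the PSD operator $\kettbbra{I}{I}$ and using $\bbrakett{I}{I}=d$ yields $\bbra{I}\bigl(R\star \E_U[C_U]\bigr)\kett{I}\leq n+1$.

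\medskip
\noindent\textbf{Step 4 (combine).} Finally, combining Steps~2 and~3 gives $n+1\geq d^2-\epsilon\, d(d+1)=d(d+1)(1-\epsilon)-d$, hence $n\geq d(d+1)(1-\epsilon)-(d+1)$, as required.

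\medskip
\noindent\textbf{Main obstacle.} Modulo the two stair-operator lemmas, the proof above is essentially a one-line computation. The genuine difficulty, which is shunted into the lemmas, is a representation-theoretic analysis in the Young-Yamanouchi basis under Schur-Weyl duality: one would use the branching rule of $\mathfrak{S}_n$ to raise and lower Young diagrams, then invoke Young's orthogonal form to swap the last two tensor factors (since the subsystem to be traced out in the comb constraint is not the ``last'' subsystem of the stair operator), and finally reduce to Young-diagram combinatorics handled by Kerov's interlacing sequences. A secondary subtlety worth double-checking is that the PSD-preservation step for $\star$ in Step~3 applies in both invocations: the first needs $R$ PSD (which it is, being a comb) and the second needs $I\otimes A_n$ PSD (which requires that $A_n$ itself is PSD, a property of the stair-operator construction).
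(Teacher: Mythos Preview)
Your proposal is correct and follows essentially the same route as the paper: the paper packages your Step~2 as \cref{fact-641442} and your Step~3 as \cref{thm-691837} (proved via \cref{lemma-641548} and \cref{lemma-641443}), then combines them exactly as in your Step~4. Your remark about the representation-theoretic content being shunted into the stair-operator lemmas accurately reflects where the real work lies; the only minor quibble is that in your second invocation of PSD preservation you do not actually need $A_n$ PSD directly, since \cref{eq-721714} is taken as a granted black-box inequality (though $A_n$ is indeed PSD and this is used inside the inductive proof of that lemma).
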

The proof is deferred to \cref{sec-6290007}.
Note that this directly implies a tight lower bound for time-reversal with diamond norm error.
\begin{corollary}\label{coro-6290408}
If we change the average-case distance in \cref{thm-640254} to diamond norm, then we have $n\geq d^2(1-\epsilon)-1$.
\end{corollary}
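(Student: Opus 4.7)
The plan is to derive the corollary as a direct consequence of \cref{thm-640254} via the already-established inequality \cref{eq-6282353} relating the two distance measures. Specifically, any protocol that achieves diamond-norm error at most $\epsilon$ automatically achieves average-case distance error at most $\frac{d}{d+1}\epsilon$, so we can feed the stronger error bound into \cref{thm-640254} and simplify.

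Concretely, I would proceed as follows. Suppose $R$ is a protocol using $n$ queries whose output channel $\mathcal{E}_U$ satisfies $\|\mathcal{E}_U - \mathcal{U}^{-1}\|_\diamond \leq \epsilon$ for every $d$-dimensional unitary $U$ (here $\mathcal{U}^{-1}$ is the unitary channel implementing $U^{-1}$). By \cref{eq-6282353},
\[
\Davg(\mathcal{E}_U,\mathcal{U}^{-1}) \;\leq\; \frac{d}{d+1}\,\|\mathcal{E}_U - \mathcal{U}^{-1}\|_\diamond \;\leq\; \frac{d}{d+1}\,\epsilon
\]
for every $U$. Thus $R$ is also a protocol that solves unitary time-reversal to within average-case distance error $\epsilon' \coloneqq \frac{d}{d+1}\epsilon$.

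Applying \cref{thm-640254} with this $\epsilon'$ gives
\[
n \;\geq\; d(d+1)(1-\epsilon') - (d+1) \;=\; d(d+1) - d^2\epsilon - (d+1) \;=\; d^2(1-\epsilon) - 1,
\]
which is the desired bound. There is essentially no obstacle here: the entire corollary is a one-line reduction that trades the $\frac{d}{d+1}$ factor from \cref{eq-6282353} for the exact cancellation that turns $d(d+1)(1-\epsilon')-(d+1)$ into $d^2(1-\epsilon)-1$. All the real work sits in \cref{thm-640254}, whose Choi-representation-based proof is intrinsically an average-case argument and hence automatically yields the stronger statement from which the diamond-norm version follows.
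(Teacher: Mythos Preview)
Your proposal is correct and matches the paper's proof essentially verbatim: both use \cref{eq-6282353} to convert an $\epsilon$-diamond-norm protocol into a $\frac{d}{d+1}\epsilon$-average-case-distance protocol, then plug $\epsilon'=\frac{d}{d+1}\epsilon$ into \cref{thm-640254} and simplify to $d^2(1-\epsilon)-1$.
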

\begin{proof}
By \cref{eq-6282353}, we know that any $\epsilon$-approximate algorithm in diamond norm is an $(\frac{d}{d+1}\epsilon)$-approximate algorithm in the average-case distance. Therefore, the lower bound in \cref{thm-640254} implies a lower bound $d^2(1-\epsilon)-1$ for any diamond norm algorithm.
\end{proof}

From \Cref{coro-6290408}, we can further obtain a tight lower bound for the generalized unitary time-reversal task. 
Specifically, the task is to approximate the time-reverse unitary $U^{-t}\coloneqq e^{-iHt}$ given a reversing time $t>0$, where $H$ is allowed to be any specific Hamiltonian that satisfies $e^{iH}=e^{i\theta} U$ for some real number $\theta$, and $\norm*{H}\leq \pi$, in which $\|\cdot\|$ is the operator norm. 

\begin{corollary}
    \label{cor:generalized-time-reversal}
    Suppose there is an algorithm that approximately implements $U^{-t}$ to within constant diamond norm error $\epsilon\leq 10^{-5}$ for constant reversing time $t\geq 0.1$, using $n$ queries to the unknown $d$-dimensional unitary $U$. Then, it must satisfy $n= \Omega\parens*{d^2}$.
\end{corollary}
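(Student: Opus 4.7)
The plan is to reduce the generalized time-reversal task to the standard unitary time-reversal task and then invoke \cref{coro-7201507}. Suppose $\mathcal{A}$ is any protocol that implements $\mathcal{U}^{-t}$ to within diamond norm error $\epsilon \leq 10^{-5}$ using $n$ queries to $U$ or control-$U$, for the given constant $t \geq 0.1$.

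First, I would choose a positive integer $k$ and a non-negative integer $a$, both depending only on $t$, such that $\delta \coloneqq kt - a - 1$ is small, say $|\delta| \leq 10^{-2}$. For rational $t = p/q$ in lowest terms, the choice $k = q$ and $a = p - 1$ yields $\delta = 0$ exactly; for irrational $t \geq 0.1$, Dirichlet's approximation theorem yields such a pair with $k$ bounded by a constant depending only on $t$ (taking integer multiples of $(k,a)$ if needed to guarantee $a \geq 0$).

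Next, I would define a protocol $\mathcal{B}$ for ordinary unitary time-reversal: apply $\mathcal{A}$ in series $k$ times (which, by the triangle inequality and unitary invariance of the diamond norm, approximates $\mathcal{U}^{-kt}$ to within $k\epsilon$), and then apply $a$ forward queries of $U$. The composition approximates the target channel $\mathcal{U}^{a - kt} = \mathcal{U}^{-1-\delta}$ with error at most $k\epsilon$, and uses a total of $kn + a$ queries to $U$ or control-$U$. The residual deviation of $\mathcal{U}^{-1-\delta}$ from $\mathcal{U}^{-1}$ is bounded by
\[
\|\mathcal{U}^{-1-\delta} - \mathcal{U}^{-1}\|_\diamond = \|\mathcal{U}^{-\delta} - \mathcal{I}\|_\diamond \leq 2\|U^{-\delta} - I\|_{\textup{op}} \leq 2\pi|\delta|,
\]
using unitary invariance together with the elementary spectral estimate $|e^{-i\delta\theta} - 1| \leq \pi|\delta|$ for eigenvalues $\theta \in (-\pi, \pi]$ of $H$.

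Combining the two error sources, $\mathcal{B}$ implements $\mathcal{U}^{-1}$ to within diamond norm error $k\epsilon + 2\pi|\delta|$, which is a small constant bounded strictly away from $1$ for our parameter choices. Invoking \cref{coro-7201507} then yields $kn + a = \Omega(d^2)$, and since $k, a$ are constants determined by $t$, we conclude $n = \Omega(d^2)$. The main subtle point is ensuring valid $(k, a)$ exist for all $t \geq 0.1$ with $k$ bounded: the rational case is immediate, while irrational $t$ requires Dirichlet-style Diophantine approximation, making the hidden constant in $\Omega(d^2)$ depend on the arithmetic nature of $t$.
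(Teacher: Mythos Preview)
Your approach matches the paper's: use Diophantine approximation to find $k$ with $kt$ near an integer, repeat $\mathcal{A}$ that many times, append forward queries, and invoke \cref{coro-7201507}. But your rational/irrational case split introduces a gap. For rational $t = p/q$ with denominator $q > 10^5$, your choice $k = q$ makes $k\epsilon \geq 1$, so the final error $k\epsilon + 2\pi|\delta|$ exceeds $1$ and \cref{coro-7201507} yields nothing; your claim that this quantity is ``a small constant bounded strictly away from $1$'' is simply false in this regime. The fix---which is precisely what the paper does---is to drop the case split and apply Dirichlet uniformly: for \emph{any} real $t$ and any $N$, there exist integers $m$ and $1 \leq k \leq N$ with $|kt - m| < 1/N$. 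Taking $N = 10^3$ and then scaling $(k, m)$ by a factor of $10$ ensures both $k \leq 10^4$ (hence $k\epsilon \leq 0.1$) and $m \geq 1$ (since $10kt \geq 10 \cdot 0.1 = 1$), after which one sets $a = m - 1 \geq 0$. This also shows the implied constant in $\Omega(d^2)$ is uniform over all $t \geq 0.1$, contrary to your final remark about dependence on the arithmetic nature of $t$.
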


Here, the ranges $t\geq 0.1$ and $\epsilon\leq 10^{-5}$ are rather loose and chosen for simplicity of proof. 
The proof of \Cref{cor:generalized-time-reversal} is deferred to \Cref{sub:generalized_time_reversal}.

\subsection{Main framework}\label{sec-6290007}
In this subsection, we introduce our main framework and settings for the proof of \cref{thm-640254}.
Let $d\geq 2$ and $n\geq 1$ be arbitrary but fixed integers. We define a sequence of $2n+3$ Hilbert spaces $\mathcal{H}_0, \mathcal{H}_1,\ldots,\mathcal{H}_{2n+2}$ such that $\mathcal{H}_i\cong\mathbb{C}^d$ for $0\leq i\leq 2n+2$. 

\begin{figure}
    \centering
    \includegraphics[width=0.95\linewidth]{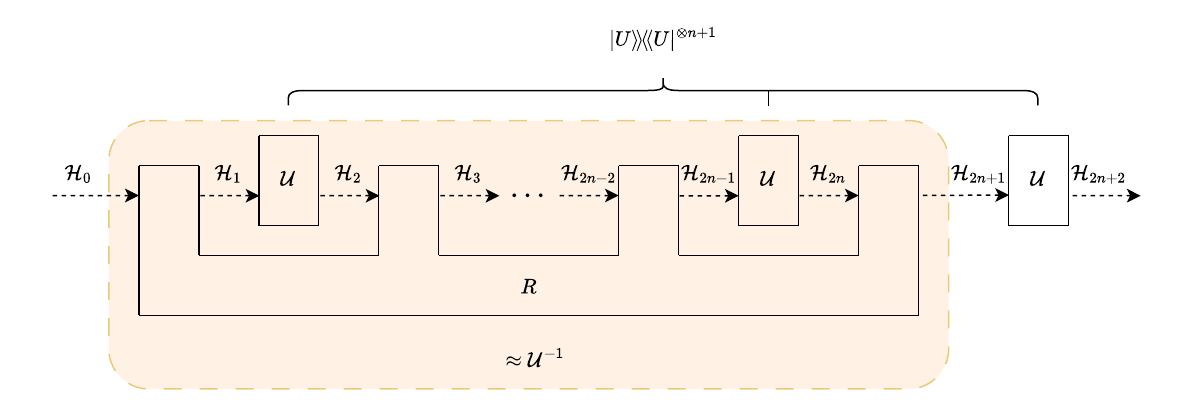}
    \caption{The quantum comb $R$ is a unitary time-reversal algorithm with error bounded by $\epsilon$ using $n$ queries to $U$ and the overall channel from $\mathcal{H}_0$ to $\mathcal{H}_{2n+2}$ approximates the identity channel $\mathcal{I}$, i.e., $R\star C_U \stackrel{\epsilon}{\approx} \kettbbra{I}{I}$.}
    \label{fig-6210108}
\end{figure}

Any algorithm, which takes $n$ queries to an unknown $d$-dimensional unitary $U$ and produces a $d$-dimensional quantum channel, can be described as an $(n+1)$-comb on $(\mathcal{H}_0,\mathcal{H}_1,\ldots,\mathcal{H}_{2n+1})$ where the $i$-th query is inserted on $(\mathcal{H}_{2i-1},\mathcal{H}_{2i})$, and the produced quantum channel is from $\mathcal{H}_{0}$ to $\mathcal{H}_{2n+1}$ (see the orange region in \cref{fig-6210108}).
Suppose $R$ is a unitary time-reversal algorithm with average-case distance error bounded by $\epsilon$. We make an additional query to $U$ at the end of the algorithm $R$ (see \cref{fig-6210108}).
Then, the overall channel from $\mathcal{H}_0$ to $\mathcal{H}_{2n+2}$ should approximate the identity channel.

To formalize this, we first define the following $(n+1)$-comb $C_U$, which represents the $n+1$ queries to $U$.
\begin{definition}\label{def-690411}
For any unitary $U\in\mathbb{U}_d$, we define the $(n+1)$-comb $C_U$ on $(\mathcal{H}_1,\mathcal{H}_2,\ldots,\mathcal{H}_{2n+2})$ as
\[C_U\coloneqq \underbrace{\kettbbra{U}{U}\otimes \cdots\otimes \kettbbra{U}{U}}_{n+1},\]
where the $i$-th tensor factor $\kettbbra{U}{U}$ is the $1$-comb on $(\mathcal{H}_{2i-1},\mathcal{H}_{2i})$ corresponding to the unitary channel $\mathcal{U}$.
\end{definition} 

Therefore, we know that $R\star C_U\stackrel{\epsilon}{\approx} \kettbbra{I}{I}$. Furthermore, when the unknown unitary $U$ is randomly sampled from the Haar measure, the overall channel should still approximate the identity channel  i.e., $R\star \E_U[C_U] \stackrel{\epsilon}{\approx} \kettbbra{I}{I}$. Specifically, we have:
\begin{proposition}\label{fact-641442}
If an $(n+1)$-comb $R$ on $(\mathcal{H}_0,\mathcal{H}_1,\ldots,\mathcal{H}_{2n+1})$ is a time-reversal algorithm with the average-case distance error bounded by $\epsilon$, then we have
\[\tr\!\left( \kettbbra{I}{I}\cdot \left(R\star \E_U[C_{U}]\right)\right)\geq d^2-d(d+1)\epsilon,\]
where $\kettbbra{I}{I}$ is the $1$-comb on $(\mathcal{H}_0,\mathcal{H}_{2n+2})$ corresponding to the identity channel, 
$\star$ is the link product (see \cref{def-720255}),
$\E_U$ denotes the expectation over a Haar random unitary $U\in\mathbb{U}_d$.
\end{proposition}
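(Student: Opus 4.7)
The plan is to exploit the bilinearity of the link product so that everything reduces to a per-$U$ statement, namely that the channel $\mathcal{F}_U$ whose Choi operator is $R \star C_U$ is close to the identity channel in average-case distance. The three ingredients I need are: (i) the interpretation of $R \star C_U$ as a Choi operator, (ii) the unitary invariance of $\Davg$ from \cref{eq-6292334}, and (iii) the translation \cref{eq-6282338} between $\Davg$ and trace-inner-product with $\kettbbra{U}{U}$.

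First, I would unpack what $R\star C_U$ represents. Because $R$ is an $(n+1)$-comb describing the protocol's $n$-query action on $(\mathcal{H}_0,\ldots,\mathcal{H}_{2n+1})$, and the $n+1$ slots of $C_U$ insert $\kettbbra{U}{U}$ on the pairs $(\mathcal{H}_{2i-1},\mathcal{H}_{2i})$ for $i=1,\ldots,n+1$, the link product $R\star C_U$ contracts all of $\mathcal{H}_{1:2n+1}$ and yields a $1$-comb on $(\mathcal{H}_0,\mathcal{H}_{2n+2})$. By construction (the extra $(n+1)$-th query is appended after $R$), this $1$-comb is precisely the Choi operator $\mathfrak{C}(\mathcal{F}_U)$ of the composite channel $\mathcal{F}_U=\mathcal{U}\circ\mathcal{E}_U$, where $\mathcal{E}_U$ is the channel produced by $R$ with oracle $U$.

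Next, I would invoke the hypothesis $\Davg(\mathcal{E}_U,\mathcal{U}^{-1})\leq\epsilon$ together with unitary invariance \cref{eq-6292334} to get
\[\Davg(\mathcal{F}_U,\mathcal{I})=\Davg(\mathcal{U}\circ\mathcal{E}_U,\mathcal{U}\circ\mathcal{U}^{-1})=\Davg(\mathcal{E}_U,\mathcal{U}^{-1})\leq\epsilon.\]
Applying \cref{eq-6282338} with the unitary channel taken to be the identity (whose associated vector is $\kett{I}$) gives
\[\epsilon\geq \Davg(\mathcal{F}_U,\mathcal{I})=\frac{d}{d+1}\left(1-\frac{\tr\!\left(\kettbbra{I}{I}\cdot \mathfrak{C}(\mathcal{F}_U)\right)}{d^2}\right),\]
which rearranges to the per-$U$ bound $\tr(\kettbbra{I}{I}\cdot \mathfrak{C}(\mathcal{F}_U))\geq d^2-d(d+1)\epsilon$.

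Finally, bilinearity of the link product and linearity of the trace combine to give
\[\tr\!\left(\kettbbra{I}{I}\cdot (R\star \E_U[C_U])\right)=\E_U\!\left[\tr\!\left(\kettbbra{I}{I}\cdot \mathfrak{C}(\mathcal{F}_U)\right)\right]\geq d^2-d(d+1)\epsilon,\]
which is the desired inequality. There is no real obstacle here: the statement is essentially a bookkeeping consequence of the definitions. The only points requiring a little care are verifying that $R\star C_U$ is the Choi operator of $\mathcal{U}\circ \mathcal{E}_U$ (so that the post-multiplication by the unknown $\mathcal{U}$ is absorbed before applying unitary invariance) and tracking the factor $d/(d+1)$ arising from the average-case formulation.
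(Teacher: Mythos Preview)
Your proposal is correct and follows essentially the same approach as the paper: interpret $R\star C_U$ as the Choi operator of $\mathcal{U}\circ\mathcal{E}_U$, use unitary invariance of $\Davg$ together with \cref{eq-6282338} to get the per-$U$ bound, then average over Haar-random $U$ using bilinearity of the link product.
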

\begin{proof}
For any $U\in\mathbb{U}_d$, $R$ can produce a quantum channel that approximates $\mathcal{U}^{-1}$ to within the average-case distance error $\epsilon$. Then, we apply the unitary channel $\mathcal{U}$ after the channel produced by $R$ (see \cref{fig-6210108}). Due to the unitary invariance of the average-case distance (see \cref{eq-6292334}), we obtain a channel that approximates the identity channel $\mathcal{I}$ to within the average-case distance error $\epsilon$, i.e.,
\[\Davg(\mathcal{E}_{R\star C_U}, \mathcal{I})\leq \epsilon,\]
where $\mathcal{E}_{R\star C_U}$ denotes the quantum channel corresponding to the $1$-comb (Choi state) $R\star C_U$.
By \cref{eq-6282338}, this means
\[\frac{1}{d^2}\tr\!\left( \kettbbra{I}{I}\cdot \left(R\star C_{U}\right)\right)\geq 1-\frac{d+1}{d}\epsilon.\]
Then, by taking the Haar expectation over $U\in\mathbb{U}_d$ and noting that $\star$ is bilinear, we have
\[\tr\!\left( \kettbbra{I}{I}\cdot \left(R\star  \E_U[C_{U}]\right)\right)\geq d^2-d(d+1)\epsilon.\]
\end{proof}

On the other hand, we prove the following result, showing that when $U$ is randomly sampled from the Haar measure and the number of queries $n$ is not too large, the overall channel must be very depolarizing. The proof is deferred to \cref{sec-641554}.
\begin{theorem}\label{thm-691837}
For any $(n+1)$-comb $X$ on $(\mathcal{H}_0,\mathcal{H}_1,\ldots,\mathcal{H}_{2n+1})$, we have 
\[X\star \E_{U}[C_U] \sqsubseteq \frac{n+1}{d}\cdot I_{\mathcal{H}_{2n+2}}\otimes I_{\mathcal{H}_0},\]
where $\E_U$ denotes the expectation over a Haar random unitary $U\in\mathbb{U}_d$.
\end{theorem}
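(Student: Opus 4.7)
The plan is to derive the stated L\"owner inequality by factoring it through the stair operators $A_k$ foreshadowed in the introduction. Concretely, I would prove two intermediate bounds: (i) $\E_U[C_U] \sqsubseteq I_{\mathcal{H}_{2n+1}} \otimes A_n$ and (ii) for any $(n+1)$-comb $X$, $X \star (I_{\mathcal{H}_{2n+1}} \otimes A_n) \sqsubseteq \tfrac{n+1}{d}\, I_{\mathcal{H}_{2n+2}} \otimes I_{\mathcal{H}_0}$. Composition then follows because the link product is bilinear and, by the positivity property (eq-721843), if $Y \sqsupseteq 0$ then $Y \star (\cdot)$ preserves the L\"owner order: subtracting $\E_U[C_U]$ from $I_{\mathcal{H}_{2n+1}} \otimes A_n$ gives a PSD operator, and linking this PSD operator with the PSD comb $X$ yields a PSD operator, so $X \star \E_U[C_U] \sqsubseteq X \star (I_{\mathcal{H}_{2n+1}} \otimes A_n) \sqsubseteq \tfrac{n+1}{d}\, I_{\mathcal{H}_{2n+2}} \otimes I_{\mathcal{H}_0}$, as desired.

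For step (i), I would rewrite $\E_U[C_U] = \E_U[\kettbbra{U}{U}^{\otimes (n+1)}]$ in terms of $\E_U[U^{\otimes(n+1)} \otimes \bar U^{\otimes(n+1)}]$ acting on copies of the maximally entangled vector, and then apply Schur-Weyl duality to the asymmetric bipartite system whose ``output legs'' $\mathcal{H}_2,\mathcal{H}_4,\ldots,\mathcal{H}_{2n+2}$ carry $U^{\otimes(n+1)}$ and whose ``input legs'' $\mathcal{H}_1,\mathcal{H}_3,\ldots,\mathcal{H}_{2n+1}$ carry $\bar U^{\otimes(n+1)}$. By Schur's lemma, the Haar average is diagonal in the Young-Yamanouchi basis with weights depending only on the pair of Young diagrams $(\lambda,\mu) \vdash n+1$. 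The stair operator $A_n$ should be constructed so that its eigenvalues in this basis uniformly dominate the Haar moment's weights; verifying the inequality block by block reduces to a combinatorial comparison on pairs of standard Young tableaux, which I expect to handle via the hook-length formula (eq-6192141) and Kerov's interlacing identities (eq-6192042, eq-6200309).

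For step (ii), the key is that the stair operators are designed to be compatible with the recursive comb condition $\tr_{\mathcal{H}_{2j-1}} X^{(j)} = I_{\mathcal{H}_{2j-2}} \otimes X^{(j-1)}$. The plan is a peeling argument: link $I_{\mathcal{H}_{2n+1}} \otimes A_n$ with $X$, trace out $\mathcal{H}_{2n+1}$ and $\mathcal{H}_{2n}$, and show that what remains is bounded above by $X^{(n)} \star (I_{\mathcal{H}_{2n-1}} \otimes A_{n-1})$ times an appropriate scalar. Iterating this reduction $n$ times strips one stair at a time, until the expression collapses to a scalar multiple of $I_{\mathcal{H}_{2n+2}} \otimes I_{\mathcal{H}_0}$ with total coefficient $(n+1)/d$. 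The key identity at each peel is that $\tr_{\mathcal{H}_{2j-1}}$ of the top stair produces the next stair tensored with an identity on the right slot, matching exactly the structure of $X^{(j-1)}$.

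The main obstacle I anticipate is the asymmetry flagged in the overview: the subsystem we need to trace out is \emph{not} the one on which the top $\mathfrak{S}_{n+1}$-factor naturally acts last, so the branching rule (eq-682336) and isotypic projectors (eq-6122124) cannot be applied directly. To overcome this, I plan to insert Young's orthogonal form (eq-6140338) for the appropriate adjacent transpositions, conjugating the stair operator so that the to-be-traced leg becomes the ``last'' one; the resulting off-diagonal contributions vanish under the trace or can be absorbed into the Kerov weights. This step is where the representation-theoretic bookkeeping is concentrated, and correctly tracking the $r(T)$-factors of Young's formula through the iterated peeling is, in my view, the hardest part of the proof.
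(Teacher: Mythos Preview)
Your proposal is correct and matches the paper's route: step (i) is Lemma~\ref{lemma-641548}, step (ii) is Corollary~\ref{lemma-641443} (proved by the peeling induction you describe, with the single-step contraction being Lemma~\ref{lemma-641551}), and the two are combined exactly as you say via the order-preservation of $\star$. Two small corrections worth noting: the Kerov interlacing identities enter only in step (ii), not (i) --- for (i) the paper just raises via the branching rule and closes with the elementary fact $M\sqsupseteq\ketbra{\psi}{\psi}\Leftrightarrow\bra{\psi}M^{-1}\ket{\psi}\leq 1$ together with $\sum_{\mu\nearrow\lambda}\dim\mathcal{P}_\mu=\dim\mathcal{P}_\lambda$; and in the peel, it is $\tr_{\mathcal{H}_{2k}}$ (even index) that acts on the stair $A_k$, while $\tr_{\mathcal{H}_{2k+1}}$ acts on the comb $X$ via the defining condition~\eqref{eq-681546}.
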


Now, we are able to prove our main query lower bound (i.e., \cref{thm-640254}), assuming \cref{thm-691837}.
\begin{proof}[Proof of \cref{thm-640254}]
Suppose $R$ is an algorithm that approximately implements $U^{-1}$ to within the average-case distance error $\epsilon$, using $n$ queries to $U$.
Then, $R$ can be described as an $(n+1)$-comb on $(\mathcal{H}_0,\mathcal{H}_1,\ldots,\mathcal{H}_{2n+1})$. 
Combining \cref{fact-641442} and \cref{thm-691837}, we know that
\[n+1\geq \tr\!\left( \kettbbra{I}{I}\cdot \left(R\star \E_U[C_{U}]\right)\right) \geq d^2-d(d+1)\epsilon,\]
which means $n\geq d(d+1)(1-\epsilon)-(d+1)$.
\end{proof}

Equivalent to the query lower bound, we can also give an upper bound for the average fidelity achievable by an $n$-query unitary time-reversal algorithm for any fixed $n$, still using \cref{thm-691837}.
Specifically, let $R(U)$ denote the channel produced by the algorithm $R$ applied on $n$ queries to $U$. 
The average performance of $R$ can be evaluated using the channel fidelity between $R(U)$ and $\mathcal{U}^{-1}$ for Haar random $U$, i.e.,
\[F(R)\coloneqq \E_U\!\left[F_\textup{ch}(R(U),\mathcal{U}^{-1})\right].\]
Then, we can see the following result.
\begin{theorem}\label{thm-290047}
For any $n$-query algorithm $R$, we have $F(R)\leq \frac{n+1}{d^2}$.
\end{theorem}
\begin{proof}
Note that
\begin{align}
\E_U\!\left[F_{\textup{ch}}(R(U),\mathcal{U}^{-1})\right]&=\E_U\!\left[F_{\textup{ch}}(\mathcal{U}\circ R(U),\mathcal{I})\right] \label{eq-282308} \\
&=\frac{1}{d^2}\tr(R\star \E_U[C_U] \cdot \kettbbra{I}{I})\nonumber\\
&\leq \frac{n+1}{d^3} \tr(\kettbbra{I}{I}) \label{eq-282249}\\
&=\frac{n+1}{d^2},\nonumber
\end{align}
where \cref{eq-282308} is due to unitary invariance of the channel fidelity, and \cref{eq-282249} is due to \cref{thm-691837} and the fact that $R$ is an $(n+1)$-comb.
\end{proof}
We note that this upper bound matches the optimal fidelities numerically obtained by the semidefinite programming in \cite[Table I]{grinko2025sequential}.
This provides an additional empirical illustration of the asymptotic tightness established above, and further suggests that the fidelity upper bound in \cref{thm-290047} (or equivalently, the query lower bound in \cref{thm-640254}) is likely optimal even in a non-asymptotic sense (i.e., without suppressing constant factors).

\subsection{Proof of \texorpdfstring{\cref{thm-691837}}{Theorem 3.6}}\label{sec-641554}
Here, we define the following linear operators $A_k$ for $1\leq k\leq n$, which we refer to as \textit{stair operators} and use as a tool in analyzing the Haar moment $\E_U[C_U]$.
\begin{definition}[Stair operators]\label{def-6100203}
For each $1\leq k \leq n$, we define the linear operator $A_{k}$ on 
\[\left(\left(\bigotimes_{i=1}^{k}\mathcal{H}_{2i}\right) \otimes \mathcal{H}_{2n+2}\right) \otimes \left(\bigotimes_{i=1}^k\mathcal{H}_{2i-1}\right) \stackrel{\mathbb{U}_d\times\mathbb{U}_d\times\mathfrak{S}_{k+1}\times\mathfrak{S}_k}{\cong} \bigoplus_{\substack{\lambda\vdash_d k+1\\\mu\vdash_d k}}\mathcal{Q}_\lambda\otimes \mathcal{Q}_\mu\otimes\mathcal{P}_\lambda\otimes\mathcal{P}_\mu,\]
where $\mathfrak{S}_{k+1}$ acts on $\mathcal{H}_2\otimes\mathcal{H}_4\otimes\cdots\otimes\mathcal{H}_{2k}\otimes\mathcal{H}_{2n+2}$ and $\mathfrak{S}_k$ acts on $\mathcal{H}_1\otimes\mathcal{H}_3\otimes\cdots\otimes\mathcal{H}_{2k-1}$,
as
\[A_k\coloneqq\bigoplus_{\lambda\vdash_d k+1} \bigoplus_{\mu:\mu\nearrow \lambda} \frac{\dim(\mathcal{P}_\lambda)}{\dim(\mathcal{P}_\mu)\dim(\mathcal{Q}_\lambda)}  I_{\mathcal{Q}_\lambda}\otimes I_{\mathcal{Q}_\mu} \otimes \sum_{\substack{T,S\in\Tab(\lambda,\mu)}}\ketbra{T}{S}\otimes \ketbra{T^{\downarrow}}{S^{\downarrow}},\]
where $T^{\downarrow}$ denotes the standard Young tableau obtained from $T$ by removing the box containing the largest integer, $\Tab(\lambda,\mu)$ denotes the set of standard Young tableaux $T\in\Tab(\lambda)$ such that $\Sh(T^\downarrow)=\mu$.
\end{definition}
Note that each stair operator $A_k$ is defined in an asymmetric bipartite system where one of the parties contains $k+1$ copies of $\mathbb{C}^d$ and the another contains $k$ copies of $\mathbb{C}^d$. 
In addition, we will demonstrate a contraction process of $A_k$ which leads to a smaller $A_{k-1}$.
As shown in \Cref{fig-751905},
each $A_k$ resembles a stair, and the contraction process resembles walking down a stair.
This is why we refer to $\{A_k\}_{k=1}^n$ as stair operators. 
A similar construction is adopted in \cite{YKS+24}.
However, their analysis applies only to the case of $n\leq d-1$, where a lower bound of $\Omega(d)$ is proved for the unitary time-reversal.
In contrast, our construction and analysis apply to the general case without any constraint on $n$.
Specifically, we provide the following lemmas to analyze the behavior of the Haar moment $\E_U[C_U]$ when combined with another quantum comb.
The proofs are deferred to \cref{sec-641552} and \cref{sec-641553}.

First, the stair operators provide a good upper bound (w.r.t. L\"owner order) for $\E_U[C_U]$, see also \cref{fig-751904}.
\begin{lemma}\label{lemma-641548}
Let $C_U$ be that defined in \cref{def-690411}. Then
\[\E_U[C_U]\sqsubseteq I_{\mathcal{H}_{2n+1}}\otimes A_n,\]
where $I_{\mathcal{H}_{2n+1}}$ is the identity operator on $\mathcal{H}_{2n+1}$.
\end{lemma}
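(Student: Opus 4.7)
The plan is to pair a direct Schur--Weyl computation of $\E_U[C_U]$ with a single, well-chosen weighted operator Cauchy--Schwarz inequality. Let $\mathcal H^A=\mathcal H_2\otimes\mathcal H_4\otimes\cdots\otimes\mathcal H_{2n+2}$ and $\mathcal H^B=\mathcal H_1\otimes\mathcal H_3\otimes\cdots\otimes\mathcal H_{2n+1}$ be the A- and B-side spaces, each isomorphic to $(\mathbb C^d)^{\otimes n+1}$. Writing $\kett{U}^{\otimes n+1}=(U^{\otimes n+1}\otimes I)\kett{I}$ on $\mathcal H^A\otimes\mathcal H^B$, Schur orthogonality of the matrix elements of the $\mathbb U_d$-irreps $\mathcal Q_\lambda$ evaluates the Haar integral as
\begin{equation*}
\E_U[C_U]=\sum_{\lambda\vdash_d n+1}\frac{1}{\dim\mathcal Q_\lambda}\,I_{\mathcal Q_\lambda^A}\otimes I_{\mathcal Q_\lambda^B}\otimes \kettbbra{I_{\mathcal P_\lambda}}{I_{\mathcal P_\lambda}},
\end{equation*}
where $\kett{I_{\mathcal P_\lambda}}=\sum_{T\in\Tab(\lambda)}\ket T_A\otimes\ket T_B$ in the Young basis.

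Second, I would factor $\mathcal H^B=\mathcal H^{B'}\otimes\mathcal H_{2n+1}$ with $\mathcal H^{B'}=\mathcal H_1\otimes\mathcal H_3\otimes\cdots\otimes\mathcal H_{2n-1}$ and refine Schur--Weyl on $\mathcal H^B$ via (i) the $\mathfrak S_{n+1}\downarrow\mathfrak S_n$ branching rule, identifying $\mathcal P_\lambda^B\cong\bigoplus_{\mu\nearrow\lambda}\mathcal P_\mu^{B'}$ under $\ket T\mapsto\ket{T^{\downarrow}}$ for $T\in\Tab(\lambda,\mu)$, and (ii) the Pieri rule, producing isometric embeddings $E_{\lambda\mu}\colon\mathcal Q_\lambda\hookrightarrow\mathcal Q_\mu\otimes\mathcal H_{2n+1}$ with $\sum_{\lambda\searrow\mu}E_{\lambda\mu}E_{\lambda\mu}^{\dagger}=I_{\mathcal Q_\mu\otimes\mathcal H_{2n+1}}$. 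Carrying these identifications through each $\lambda$-isotypic block recasts
\begin{equation*}
\E_U[C_U]=\sum_{\lambda\vdash_d n+1}\frac{1}{\dim\mathcal Q_\lambda}\,I_{\mathcal Q_\lambda^A}\otimes K_\lambda K_\lambda^{\dagger},\qquad K_\lambda=\sum_{\mu\nearrow\lambda}E_{\lambda\mu}\otimes\kett{I_{\lambda\mu}},
\end{equation*}
where $\kett{I_{\lambda\mu}}=\sum_{T\in\Tab(\lambda,\mu)}\ket T_{\mathcal P_\lambda^A}\otimes\ket{T^{\downarrow}}_{\mathcal P_\mu^{B'}}$ and $K_\lambda$ is read as a map from $\mathcal Q_\lambda$ into the direct sum over $\mu\nearrow\lambda$ of $\mathcal Q_\mu^{B'}\otimes\mathcal H_{2n+1}\otimes\mathcal P_\lambda^A\otimes\mathcal P_\mu^{B'}$.

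The decisive step is a weighted operator Cauchy--Schwarz inequality: for operators $V_\mu$ and weights $c_\mu>0$ with $\sum_\mu 1/c_\mu\leq 1$, one has $(\sum_\mu V_\mu)(\sum_\mu V_\mu)^{\dagger}\sqsubseteq\sum_\mu c_\mu V_\mu V_\mu^{\dagger}$, obtained by conjugating the scalar PSD inequality $\diag(c_\mu)\sqsupseteq\mathbbm 1\mathbbm 1^{\top}$ by the block matrix whose columns are the $V_\mu$. The raising identity $\dim\mathcal P_\lambda=\sum_{\mu\nearrow\lambda}\dim\mathcal P_\mu$ forces the choice $c_\mu=\dim\mathcal P_\lambda/\dim\mathcal P_\mu$ to saturate $\sum_\mu 1/c_\mu=1$, and with $V_\mu=E_{\lambda\mu}\otimes\kett{I_{\lambda\mu}}$ this yields
\begin{equation*}
K_\lambda K_\lambda^{\dagger}\sqsubseteq\sum_{\mu\nearrow\lambda}\frac{\dim\mathcal P_\lambda}{\dim\mathcal P_\mu}\,E_{\lambda\mu}E_{\lambda\mu}^{\dagger}\otimes\kettbbra{I_{\lambda\mu}}{I_{\lambda\mu}}.
\end{equation*}
Using $E_{\lambda\mu}E_{\lambda\mu}^{\dagger}\sqsubseteq I_{\mathcal Q_\mu^{B'}\otimes\mathcal H_{2n+1}}$ (it is one summand in the Pieri resolution of the identity) then replaces each term by the matching summand of $A_n$; restoring $\frac{1}{\dim\mathcal Q_\lambda}I_{\mathcal Q_\lambda^A}$ and summing over $\lambda$ gives $\E_U[C_U]\sqsubseteq I_{\mathcal H_{2n+1}}\otimes A_n$.

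The main obstacle is the bookkeeping in the second step: one has to track simultaneously the $\mathfrak S_{n+1}\downarrow\mathfrak S_n$ branching of $\mathcal P_\lambda^B$ and the Pieri decomposition of $\mathcal Q_\mu^{B'}\otimes\mathcal H_{2n+1}$ in order to recast $\E_U[C_U]$ as the rank-$\dim\mathcal Q_\lambda$ outer product $K_\lambda K_\lambda^{\dagger}$ indexed by $\mu\nearrow\lambda$. Once that rewrite is in hand, the branching identity $\sum_\mu\dim\mathcal P_\mu=\dim\mathcal P_\lambda$ forces the Cauchy--Schwarz weights to equal precisely the coefficients $\dim\mathcal P_\lambda/(\dim\mathcal P_\mu\dim\mathcal Q_\lambda)$ appearing in the definition of $A_n$; this is the ``raising'' mechanism previewed in the introduction.
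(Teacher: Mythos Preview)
Your proposal is correct and follows essentially the same approach as the paper's proof: both rewrite the $B$-side via the $\mathfrak S_{n+1}\downarrow\mathfrak S_n$ branching together with the Pieri decomposition, then apply the same two inequalities---a weighted operator Cauchy--Schwarz with weights $c_\mu=\dim\mathcal P_\lambda/\dim\mathcal P_\mu$ (the paper's \cref{fact-5122103}), and the bound $E_{\lambda\mu}E_{\lambda\mu}^\dagger\sqsubseteq I_{\mathcal Q_\mu\otimes\mathcal H_{2n+1}}$ (the paper's ``discard $\nu\neq\lambda$'' step at \cref{eq-6110344}). The only difference is directional: the paper expands $I_{\mathcal H_{2n+1}}\otimes A_n$ downward to meet $\E_U[C_U]$, whereas you expand $\E_U[C_U]$ upward to meet $A_n$.
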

Moreover, the stair operators share the similar tensor contraction property with quantum combs (see \cref{eq-681546}). Specifically, tracing out the ``second-last'' subsystem of a stair operator yields an operator bounded by the tensor product of the identity (with a coefficient) and a reduced stair operator, see also \cref{fig-751905}.
\begin{lemma}\label{lemma-641551}
For any $2\leq k\leq n$, we have
\begin{equation}\label{eq-641413}
\tr_{\mathcal{H}_{2k}}(A_k)\sqsubseteq \frac{k+1}{k}\cdot  I_{\mathcal{H}_{2k-1}}\otimes A_{k-1},
\end{equation}
where $I_{\mathcal{H}_{2k-1}}$ is the identity operator on $\mathcal{H}_{2k-1}$. Moreover, we have
\begin{equation}\label{eq-641403}
\tr_{\mathcal{H}_{2}}(A_1)=\frac{2}{d}\cdot I_{\mathcal{H}_{1}}\otimes I_{\mathcal{H}_{2n+2}}.
\end{equation}
\end{lemma}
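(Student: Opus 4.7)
}
The base case $k=1$ admits a direct computation. Since $\mathfrak{S}_2$ has only the trivial and sign representations and $\mu=(1)$ is the unique possible subdiagram on the B-side, plugging into \cref{def-6100203} gives
\[A_1=\left(\frac{2}{d(d+1)}P_{\textup{Sym}}+\frac{2}{d(d-1)}P_{\textup{Asym}}\right)\otimes I_{\mathcal{H}_1},\]
where $P_{\textup{Sym}},P_{\textup{Asym}}$ are the projectors onto $\textup{Sym}^2\mathbb{C}^d$ and $\Lambda^2\mathbb{C}^d$ in $\mathcal{H}_2\otimes\mathcal{H}_{2n+2}$. Using the standard identities $\tr_{\mathcal{H}_2}(P_{\textup{Sym}})=\tfrac{d+1}{2}I_{\mathcal{H}_{2n+2}}$ and $\tr_{\mathcal{H}_2}(P_{\textup{Asym}})=\tfrac{d-1}{2}I_{\mathcal{H}_{2n+2}}$, both coefficients collapse to $\tfrac{1}{d}$, yielding \cref{eq-641403}.

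For the inductive case $k\ge 2$, the essential obstacle is that the subsystem $\mathcal{H}_{2k}$ to be traced out sits at position $k$ of the $\mathfrak{S}_{k+1}$-ordering on the A-side, not at the last position $k+1$, so the branching rule \cref{eq-682336} cannot be applied to the $\lambda$-summation of $A_k$ directly. The plan is to first use a swap trick
\[\tr_{\mathcal{H}_{2k}}(A_k)=\tr_{\mathcal{H}_{2n+2}}(s_k A_k s_k),\]
where $s_k\in\mathfrak{S}_{k+1}$ acts on the A-side as the SWAP between $\mathcal{H}_{2k}$ and $\mathcal{H}_{2n+2}$, and then expand $s_k A_k s_k$ via Young's orthogonal form \cref{eq-6140338}. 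For each $T\in\Tab(\lambda,\mu)$, $s_k\ket{T}$ splits into $\tfrac{1}{r(T)}\ket{T}+\sqrt{1-1/r(T)^2}\ket{s_k T}$, where $r(T)$ is the axial distance from the $(k+1)$-box to the $k$-box of $T$. The shape of $(s_k T)^{\downarrow}$ is in general a diagram $\nu\ne\mu$ sharing with $\mu$ a common $\tau=\mu\cap\nu\vdash k-1$, so the conjugation produces both diagonal ($\nu=\mu$) and off-diagonal ($\nu$ adjacent to $\mu$) contributions in the reduced $(\mu,\tau)$ block structure.

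After this expansion, $\tr_{\mathcal{H}_{2n+2}}$ is a trace over the last A-side subsystem in the Schur-Weyl ordering, so the branching $\textup{Res}^{\mathfrak{S}_{k+1}}_{\mathfrak{S}_k}\mathcal{P}_\lambda=\bigoplus_{\mu\nearrow\lambda}\mathcal{P}_\mu$ together with the corresponding unitary-group branching collapses the $\lambda$-summation into a direct sum of blocks indexed by pairs $(\mu,\tau)$ with $\tau\nearrow\mu$, exactly mirroring the block structure of $A_{k-1}$ in \cref{def-6100203}. Proving \cref{eq-641413} then reduces to verifying, in each block, that the accumulated coefficient (a sum over parent shapes $\lambda\supset\mu$ of $r(T)$- and $\dim(\mathcal{P}_\lambda)/\dim(\mathcal{P}_\mu)$-factors) is bounded by $\tfrac{k+1}{k}$ times the corresponding $A_{k-1}$ coefficient. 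I plan to establish this bound via Kerov's interlacing sequences: rewriting $\dim(\mathcal{P}_\lambda)/\dim(\mathcal{P}_\mu)$ by \cref{eq-6192042} and $\dim(\mathcal{P}_\mu)/\dim(\mathcal{P}_\tau)$ by \cref{eq-6200309} expresses everything as a rational function on the minima/maxima of $\mu$'s rotated profile, which then simplifies by partial fractions or telescoping.

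The main obstacle is this final combinatorial step. The off-diagonal couplings produced by Young's orthogonal form contribute cross-terms of the form $1/(r(T)r(S))$ and $\sqrt{1-1/r(T)^2}\sqrt{1-1/r(S)^2}$ whose interaction with the $\lambda$-summation is delicate, so that the Löwner inequality must be verified block-by-block on the reduced $(\mu,\tau)$ labels. This step is consistent with the paper's own remark that after Young's orthogonal manipulation the problem reduces to pure combinatorics on Young diagrams to be solved via Kerov's interlacing sequences, and it is the step I expect to require the most delicate argument.
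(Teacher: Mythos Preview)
Your proposal is correct and mirrors the paper's own proof: conjugation by $s_k$ to move $\mathcal{H}_{2k}$ to the last slot, expansion via Young's orthogonal form, partial trace via the branching rule (\cref{lemma-6120245}), and block-by-block reduction to the Kerov interlacing identities are exactly the steps carried out in \cref{sec-641553}. The one subtlety you do not flag is that the L\"owner \emph{inequality} in \cref{eq-641413} (as opposed to an equality) enters when the paper restores the $\ell(\lambda)=d+1$ summand as a positive-semidefinite correction in order to drop the dimension constraint before applying the combinatorial lemmas.
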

\begin{figure}[t]
    \centering
    \hspace{-10mm}
    \begin{subfigure}[b]{0.54\linewidth}
    \centering
    \includegraphics[width=1.0\linewidth]{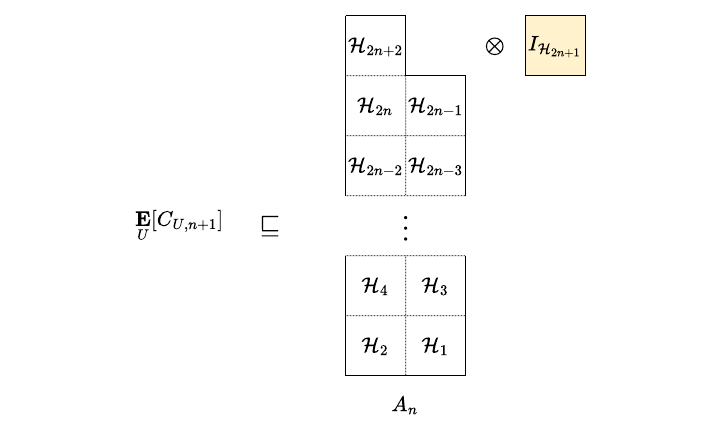}
    \vspace{-6mm}
    \caption{Bounding the Haar moment $\E_U[C_U]$.}\label{fig-751904}
    \end{subfigure}
    \hspace{-10mm}
    \begin{subfigure}[b]{0.54\linewidth}
    \centering
    \includegraphics[width=1.0\linewidth]{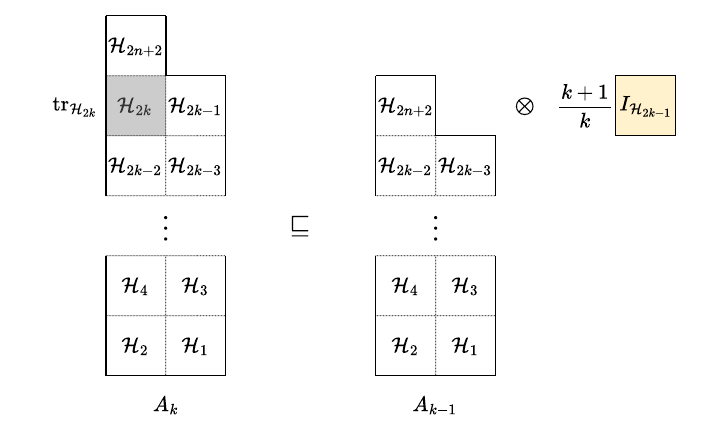}
    \vspace{-6mm}
    \caption{Contraction of the stair operators.}\label{fig-751905}
    \end{subfigure}
    \vspace{0mm}
    \caption{Properties of the stair operators $\{A_k\}_{k=1}^n$ (see \cref{lemma-641548} and \cref{lemma-641551}).}\label{fig-751907}
\end{figure}

\cref{lemma-641551} implies that when a stair operator is combined with another quantum comb, they can be sequentially contracted. The following corollary is a direct application of \cref{lemma-641551}, which characterizes the behavior of the stair operator $A_k$ when combined with an arbitrary $(k+1)$-comb. 
\begin{corollary}\label{lemma-641443}
For any $1\leq k\leq n$, and any $(k+1)$-comb $X$ on $(\mathcal{H}_0,\mathcal{H}_1,\ldots,\mathcal{H}_{2k+1})$, we have
\[X\star (I_{\mathcal{H}_{2k+1}}\otimes A_k)\sqsubseteq \frac{k+1}{d} \cdot I_{\mathcal{H}_{2n+2}}\otimes I_{\mathcal{H}_0}.\]
\end{corollary}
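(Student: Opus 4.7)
My plan is to prove the corollary by induction on $k$, using \cref{lemma-641551} as the engine and the comb structure of $X$ to strip off one query at a time. The main algebraic ingredient I would establish first is a ``reduction identity'': for any $(k+1)$-comb $X$ on $(\mathcal{H}_0,\mathcal{H}_1,\ldots,\mathcal{H}_{2k+1})$,
\[
X\star(I_{\mathcal{H}_{2k+1}}\otimes A_k) \;=\; X^{(k)}\star \tr_{\mathcal{H}_{2k}}(A_k),
\]
where $X^{(k)}$ is the $k$-comb obtained from the comb conditions $\tr_{\mathcal{H}_{2k+1}}(X)=I_{\mathcal{H}_{2k}}\otimes X^{(k)}$. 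To see this, apply \cref{def-720255}: the subsystems $\mathcal{H}_1,\ldots,\mathcal{H}_{2k+1}$ are shared between the two factors; when one performs the partial trace over $\mathcal{H}_{2k+1}$, the identity factor $I_{\mathcal{H}_{2k+1}}$ on the right is untouched by partial transpose and forces $\tr_{\mathcal{H}_{2k+1}}(X)=I_{\mathcal{H}_{2k}}\otimes X^{(k)}$ to appear. The remaining $I_{\mathcal{H}_{2k}}$ factor, combined with the trace over $\mathcal{H}_{2k}$ that still needs to be performed in the link product, collapses into $\tr_{\mathcal{H}_{2k}}(A_k)$.

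Next I would use the fact that the link product preserves the L\"owner order when one side is positive semidefinite: if $A\sqsubseteq B$ and $C\sqsupseteq 0$, then $B\star C - A\star C = (B-A)\star C\sqsupseteq 0$ by \cref{eq-721843}. Applying \cref{lemma-641551}, I get
\[
X^{(k)}\star \tr_{\mathcal{H}_{2k}}(A_k)\;\sqsubseteq\; \frac{k+1}{k}\cdot X^{(k)}\star\bigl(I_{\mathcal{H}_{2k-1}}\otimes A_{k-1}\bigr),
\]
and since $X^{(k)}$ is a $k$-comb on $(\mathcal{H}_0,\ldots,\mathcal{H}_{2k-1})$, the inductive hypothesis (in the form of the statement of \cref{lemma-641443} at level $k-1$) bounds the right-hand side by $\frac{k+1}{k}\cdot\frac{k}{d}\cdot I_{\mathcal{H}_{2n+2}}\otimes I_{\mathcal{H}_0}=\frac{k+1}{d}\cdot I_{\mathcal{H}_{2n+2}}\otimes I_{\mathcal{H}_0}$, completing the inductive step.

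For the base case $k=1$, I would argue directly: the reduction identity gives $X\star(I_{\mathcal{H}_3}\otimes A_1)=X^{(1)}\star\tr_{\mathcal{H}_2}(A_1)$, and \cref{eq-641403} of \cref{lemma-641551} states $\tr_{\mathcal{H}_2}(A_1)=\frac{2}{d}\cdot I_{\mathcal{H}_1}\otimes I_{\mathcal{H}_{2n+2}}$. Since $X^{(1)}$ is a $1$-comb (Choi operator of a channel) on $(\mathcal{H}_0,\mathcal{H}_1)$, we have $\tr_{\mathcal{H}_1}(X^{(1)})=I_{\mathcal{H}_0}$, and thus
\[
X^{(1)}\star\bigl(I_{\mathcal{H}_1}\otimes I_{\mathcal{H}_{2n+2}}\bigr)=\tr_{\mathcal{H}_1}(X^{(1)})\otimes I_{\mathcal{H}_{2n+2}}=I_{\mathcal{H}_0}\otimes I_{\mathcal{H}_{2n+2}},
\]
which gives the desired equality (not just inequality) at the $k=1$ layer.

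The only place where some care is genuinely needed is the reduction identity — one must be careful about which partial transposes/traces are absorbed by identity factors and which survive. Everything after that is a mechanical induction, so I do not anticipate any serious obstacle; the real mathematical content of the corollary lives entirely in \cref{lemma-641551}, which is already proved.
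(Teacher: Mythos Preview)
Your proposal is correct and follows essentially the same approach as the paper: induction on $k$, with each step using the comb condition $\tr_{\mathcal{H}_{2k+1}}(X)=I_{\mathcal{H}_{2k}}\otimes X^{(k)}$ to peel off one layer, then invoking \cref{lemma-641551} and the L\"owner-monotonicity of $\star$ via \cref{eq-721843}. The only difference is presentational: you package the two partial traces into a single ``reduction identity'' $X\star(I_{\mathcal{H}_{2k+1}}\otimes A_k)=X^{(k)}\star\tr_{\mathcal{H}_{2k}}(A_k)$, whereas the paper writes out the chain of equalities explicitly before applying the inequality.
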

\begin{proof}
We use induction on $k$. For the case $k=1$, we have
\begin{align}
X\star (I_{\mathcal{H}_3}\otimes A_1) 
&=\tr_{\mathcal{H}_{1:3}}\!\left(X^{\textup{T}_{\mathcal{H}_{1:3}}}\cdot (I_{\mathcal{H}_{3}}\otimes A_1)\right)\nonumber\\
&=\tr_{\mathcal{H}_{1:2}}\!\left( \tr_{\mathcal{H}_3}(X)^{\textup{T}_{\mathcal{H}_{1:2}}}\cdot  A_1\right)\nonumber\\
&=\tr_{\mathcal{H}_{1:2}}\!\left(\left(I_{\mathcal{H}_{2}}\otimes X'\right)^{\textup{T}_{\mathcal{H}_{1:2}}}\cdot A_1\right)\label{eq-641400}\\
&=\tr_{\mathcal{H}_{1}}\left(\left(X'\right)^{\textup{T}_{\mathcal{H}_{1}}}\cdot \tr_{\mathcal{H}_{2}}(A_1)\right)\nonumber\\
&=\frac{2}{d} \cdot \tr_{\mathcal{H}_{1}}\left(\left(X'\right)^{\textup{T}_{\mathcal{H}_{1}}}\cdot I_{\mathcal{H}_{1}}\otimes I_{\mathcal{H}_{2n+2}}\right)\label{eq-641402}\\
&=\frac{2}{d}\cdot I_{\mathcal{H}_0}\otimes I_{\mathcal{H}_{2n+2}},\label{eq-641404}
\end{align}
where \cref{eq-641400} is because $X$ is a $2$-comb on $(\mathcal{H}_0,\mathcal{H}_1,\mathcal{H}_2,\mathcal{H}_3)$, \cref{eq-641402} is due to \cref{eq-641403} in \cref{lemma-641551}, and \cref{eq-641404} is because $X'$ is a $1$-comb on $(\mathcal{H}_0,\mathcal{H}_1)$.

Now, suppose the induction hypothesis holds for $k-1$, we prove it also hods for $k$.
\begin{align}
X\star (I_{\mathcal{H}_{2k+1}}\otimes A_k)
&=\tr_{\mathcal{H}_{1:2k+1}}\!\left(X^{\textup{T}_{\mathcal{H}_{1:2k+1}}}\cdot (I_{\mathcal{H}_{2k+1}}\otimes A_k)\right)\nonumber\\
&=\tr_{\mathcal{H}_{1:2k}}\!\left(\tr_{\mathcal{H}_{2k+1}}\!\left(X\right)^{\textup{T}_{\mathcal{H}_{1: 2k}}}\cdot A_k\right) \nonumber\\
&=\tr_{\mathcal{H}_{1:2k}}\!\left(\left(I_{\mathcal{H}_{2k}}\otimes X'\right)^{\textup{T}_{\mathcal{H}_{1:2k}}}\cdot A_k\right)\label{eq-641411}\\
&=\tr_{\mathcal{H}_{1:2k-1}}\left(\left(X'\right)^{\textup{T}_{\mathcal{H}_{1:2k-1}}}\cdot \tr_{\mathcal{H}_{2k}}(A_k)\right)\nonumber\\
&=\left(X'\right)\star \tr_{\mathcal{H}_{2k}}(A_k)\nonumber \\
&\sqsubseteq \frac{k+1}{k} \cdot \left(X'\right)\star (I_{\mathcal{H}_{2k-1}}\otimes A_{k-1})\label{eq-641412}\\
&\sqsubseteq \frac{k+1}{d} \cdot I_{\mathcal{H}_{2n+2}}\otimes I_{\mathcal{H}_0},\label{eq-641414}
\end{align}
where \cref{eq-641411} is because $X$ is an $(k+1)$-comb and thus $X'$ is a $k$-comb, \cref{eq-641412} is due to \cref{eq-641413} in \cref{lemma-641551}, and \cref{eq-641414} is due to the induction hypothesis for the case of $k-1$
\end{proof}

Then, we are able to prove \cref{thm-691837}, assuming \cref{lemma-641548} and \cref{lemma-641551}.
\begin{proof}[Proof of \cref{thm-691837}]
Note that
\begin{align}
X\star \E_U[C_U]
&\sqsubseteq X\star (I_{\mathcal{H}_{2n+1}}\otimes A_n) \label{eq-641547}\\
&\sqsubseteq\frac{n+1}{d} \cdot I_{\mathcal{H}_{2n+2}}\otimes I_{\mathcal{H}_0},\label{eq-641548}
\end{align}
where \cref{eq-641547} is by \cref{lemma-641548} combined with \cref{eq-721843}, and \cref{eq-641548} is by taking $k=n$ in \cref{lemma-641443}.
\end{proof}

\subsection{Bounding the Haar moment: proof of \texorpdfstring{\cref{lemma-641548}}{Lemma 3.9}}\label{sec-641552}
In this subsection, we bound the Haar moment $\E_U[C_U]$ using the stair operators (see \cref{lemma-641548}).

First, we need the following fact: 
\begin{align}
\E_{U}[C_{U}]
&=\bigoplus_{\lambda \vdash_d n+1} \frac{1}{\dim(\mathcal{Q}_\lambda)} I_{\mathcal{Q}_\lambda}\otimes I_{\mathcal{Q}_\lambda}\otimes \kettbbra{I_{\mathcal{P}_\lambda}}{I_{\mathcal{P}_\lambda}},\nonumber
\end{align}
which provides a representation-theoretic expression of the Haar moment $\E_U[C_U]$ (see \cref{prop-5121539}).
Then, we provide our proof of \cref{lemma-641548}.
\begin{proof}[Proof of \cref{lemma-641548}]
By \cref{prop-5121539}, we can express $\E_U[C_U]$ in the Schur-Weyl basis 
\[\bigoplus_{\substack{\lambda\vdash_d n+1\\ \mu\vdash_d n+1}}\mathcal{Q}_\lambda\otimes\mathcal{Q}_{\mu}\otimes\mathcal{P}_\lambda\otimes \mathcal{P}_\mu\stackrel{\mathbb{U}_d\times\mathbb{U}_d\times\mathfrak{S}_{n+1}\times\mathfrak{S}_{n+1}}{\cong} \bigotimes_{i=1}^{n+1}\mathcal{H}_{2i}\otimes\bigotimes_{i=1}^{n+1}\mathcal{H}_{2i-1}.\]
On the other hand, by \cref{def-6100203}, $A_n$ is defined on the Schur-Weyl basis
\[\bigoplus_{\substack{\lambda\vdash_d n+1\\ \mu\vdash_d n}}\mathcal{Q}_\lambda\otimes\mathcal{Q}_{\mu}\otimes\mathcal{P}_\lambda\otimes \mathcal{P}_\mu\stackrel{\mathbb{U}_d\times\mathbb{U}_d\times\mathfrak{S}_{n+1}\times\mathfrak{S}_{n}}{\cong} \bigotimes_{i=1}^{n+1}\mathcal{H}_{2i}\otimes\bigotimes_{i=1}^{n}\mathcal{H}_{2i-1}.\]
By comparing these two forms, we note that it is natural to study that for $\mu\vdash n$ and $T,S\in\Tab(\mu)$, how the linear operator 
\(I_{\mathcal{H}_{2n+1}}\otimes I_{\mathcal{Q}_\mu}\otimes \ketbra{T}{S}\in\mathcal{L}(\mathcal{H}_{2n+1}\otimes\mathcal{Q}_\mu\otimes\mathcal{P}_\mu)\subseteq\mathcal{L}(\bigotimes_{i=1}^{n+1}\mathcal{H}_{2i-1})\) is expressed as an operator on $\bigoplus_{\mu\vdash_d n+1}\mathcal{Q}_{\mu}\otimes\mathcal{P}_\mu$. 
For this, we have the following result by using \cref{fact-6101953}:
\[I_{\mathcal{H}_{2n+1}}\otimes I_{\mathcal{Q}_\mu}\otimes \ketbra{T}{S}=\bigoplus_{\substack{\nu:\mu\nearrow \nu}} I_{\mathcal{Q}_\nu}\otimes \ketbra{T^{\uparrow \nu}}{S^{\uparrow \nu}},\]
where $T^{\uparrow \nu}$ denotes the standard Young tableau obtained from $T$ by adding a box filled with $n+1$, resulting in the shape $\nu$.
This means 
\begin{align}
I_{\mathcal{H}_{2n+1}}\otimes A_n
&=\bigoplus_{\lambda\vdash_d k+1} \bigoplus_{\mu:\mu\nearrow \lambda} \frac{\dim(\mathcal{P}_\lambda)}{\dim(\mathcal{P}_\mu)\dim(\mathcal{Q}_\lambda)}  I_{\mathcal{Q}_\lambda}\otimes \bigoplus_{\nu:\mu\nearrow\nu} I_{\mathcal{Q}_\nu} \otimes \sum_{\substack{T,S\in\Tab(\lambda,\mu)}}\ketbra{T}{S}\otimes\ketbra{(T^{\downarrow})^{\uparrow\nu}}{(S^\downarrow)^{\uparrow\nu}} \nonumber\\
&\sqsupseteq \bigoplus_{\lambda\vdash_d k+1} \bigoplus_{\mu:\mu\nearrow \lambda} \frac{\dim(\mathcal{P}_\lambda)}{\dim(\mathcal{P}_\mu)\dim(\mathcal{Q}_\lambda)}  I_{\mathcal{Q}_\lambda}\otimes I_{\mathcal{Q}_\lambda} \otimes \sum_{\substack{T,S\in\Tab(\lambda,\mu)}}\ketbra{T}{S}\otimes\ketbra{T}{S} \label{eq-6110344}\\
&=\bigoplus_{\lambda\vdash_d k+1}  \frac{1}{\dim(\mathcal{Q}_\lambda)}  I_{\mathcal{Q}_\lambda}\otimes I_{\mathcal{Q}_\lambda} \otimes \sum_{\mu:\mu\nearrow \lambda} \frac{\dim(\mathcal{P}_\lambda)}{\dim(\mathcal{P}_\mu)} \ketbra{\Phi_\mu^\lambda}{\Phi_\mu^\lambda} \label{eq-6110345}\\
&\sqsupseteq \bigoplus_{\lambda\vdash_d k+1}  \frac{1}{\dim(\mathcal{Q}_\lambda)}  I_{\mathcal{Q}_\lambda}\otimes I_{\mathcal{Q}_\lambda} \otimes \left(\sum_{\mu:\mu\nearrow \lambda} \ket{\Phi_\mu^\lambda}\right)\left(\sum_{\mu:\mu\nearrow \lambda} \bra{\Phi_\mu^\lambda}\right)\label{eq-6110351}\\
&= \bigoplus_{\lambda\vdash_d k+1}  \frac{1}{\dim(\mathcal{Q}_\lambda)}  I_{\mathcal{Q}_\lambda}\otimes I_{\mathcal{Q}_\lambda} \otimes \kettbbra{I_{\mathcal{P}_\lambda}}{I_{\mathcal{P}_\lambda}}\label{eq-6110410}\\
&=\E_U[C_U],\nonumber
\end{align}
where \cref{eq-6110344} is by discarding those $\nu$ such that $\nu\neq\lambda$; in \cref{eq-6110345}, $\ket{\Phi_{\mu}^\lambda}\coloneqq\sum_{\substack{T\in\Tab(\lambda,\mu)}}\ket{T}\ket{T}$;
\cref{eq-6110410} is because $\kett{I_{\mathcal{P}_\lambda}}=\sum_{T\in\Tab(\lambda)}\ket{T}\ket{T}=\sum_{\mu:\mu\nearrow\lambda}\ket{\Phi_{\mu}^\lambda}$;
\cref{eq-6110351} is by using \cref{fact-5122103} and the fact:
\begin{align}
&\left(\sum_{\mu:\mu\nearrow \lambda} \bra{\Phi_\mu^\lambda}\right)\left(\sum_{\mu:\mu\nearrow\lambda}\frac{\dim(\mathcal{P}_\lambda)}{\dim(\mathcal{P}_\mu)}\ketbra{\Phi_\mu^\lambda}{\Phi_\mu^\lambda}\right)^{-1}\left(\sum_{\mu:\mu\nearrow \lambda} \ket{\Phi_\mu^\lambda}\right)\nonumber\\
=&\left(\sum_{\mu:\mu\nearrow \lambda} \bra{\Phi_\mu^\lambda}\right) \left(\sum_{\mu:\mu\nearrow\lambda}\frac{\dim(\mathcal{P}_\mu)}{\dim(\mathcal{P}_\lambda)}\cdot \frac{1}{\dim(\mathcal{P}_\mu)^2}\cdot\ketbra{\Phi_\mu^\lambda}{\Phi_\mu^\lambda}\right)\left(\sum_{\mu:\mu\nearrow \lambda} \ket{\Phi_\mu^\lambda}\right) \label{eq-760315}\\
=&\sum_{\mu:\mu\nearrow\lambda} \frac{\dim(\mathcal{P}_\mu)}{\dim(\mathcal{P}_\lambda)}\cdot\frac{1}{\dim(\mathcal{P}_\mu)^2}\cdot\dim(\mathcal{P}_\mu)^2 \nonumber \\
=&1, \label{eq-760316}
\end{align}
in which \cref{eq-760315} is because $\ket{\Phi_\mu^\lambda}$ are pairwise orthogonal and $|\Tab(\lambda,\mu)|=|\Tab(\mu)|=\dim(\mathcal{P}_\mu)$, \cref{eq-760316} is due to $\sum_{\mu:\mu\nearrow\lambda}\dim(\mathcal{P}_\mu)=\dim(\mathcal{P}_\lambda)$. 
\end{proof}

\subsection{Contraction of stair operators: proof of \texorpdfstring{\cref{lemma-641551}}{Lemma 3.10}}\label{sec-641553}
In this subsection, we prove the contraction properties of the stair operators (see \cref{lemma-641551}).

\subsubsection{The case of \texorpdfstring{$2\leq k\leq n$}{2<=k<=n}}
First, we prove for $2\leq k\leq n$,
\[\tr_{\mathcal{H}_{2k}}(A_k)\sqsubseteq\frac{k+1}{k}\cdot I_{\mathcal{H}_{2k-1}}\otimes A_{k-1}.\]
Note that in the definition of $A_k$ (see \cref{def-6100203}), the symmetric group $\mathfrak{S}_{k+1}$ acts on $\mathcal{H}_2\otimes\mathcal{H}_4\otimes\cdots\otimes\mathcal{H}_{2k}\otimes\mathcal{H}_{2n+2}$. 
We can not directly use \cref{lemma-6120245} since $\tr_{\mathcal{H}_{2k}}$ is tracing out the second-to-last subsystem $\mathcal{H}_{2k}$ instead of the last subsystem $\mathcal{H}_{2n+2}$. For this, we swap the last two tensor factors of $A_k$ by using the $k$-th adjacent transposition $s_{k}$. Specifically, let $P(\pi)$ be the action of $\pi\in\mathfrak{S}_{k+1}$ on $\mathcal{H}_2\otimes\mathcal{H}_4\otimes\cdots\otimes\mathcal{H}_{2k}\otimes\mathcal{H}_{2n+2}$. 
Then, $P(s_k)A_k P(s_k)$ is the operator on $\mathcal{H}_2\otimes\mathcal{H}_4\otimes\cdots\otimes\mathcal{H}_{2k-2}\otimes\mathcal{H}_{2n+2}\otimes\mathcal{H}_{2k}$ obtained from $A_k$ by swapping the positions of $\mathcal{H}_{2k}$ and $\mathcal{H}_{2n+2}$. 
By the Young's orthogonal form (see \cref{eq-6140338}), we have 
\begin{align}
P(s_k) A_k P(s_k)
&=\bigoplus_{\lambda\vdash_d k+1}\bigoplus_{\mu:\mu\nearrow \lambda}\frac{\dim(\mathcal{P}_\lambda)}{\dim(\mathcal{P}_\mu)\dim(\mathcal{Q}_\lambda)}I_{\mathcal{Q}_\lambda}\otimes I_{\mathcal{Q}_\mu}\otimes\sum_{\substack{T,S\in\Tab(\lambda,\mu)}} \bot_{T,S} \otimes\ketbra{T^\downarrow}{S^\downarrow},\label{eq-6142009}
\end{align}
where $\bot_{T,S}$ is defined as
\begin{equation}\label{eq-6142031}
\frac{1}{r(T)r(S)}\left(\ketbra{T}{S}+\sqrt{(r(T)^2-1)(r(S)^2-1)}\ketbra{s_kT}{s_kS}+\sqrt{r(T)^2-1}\ketbra{s_k T}{S}+\sqrt{r(S)^2-1}\ketbra{T}{s_k S}\right),
\end{equation}
where $r(T)$ is the axial distance from the $(k+1)$-box to the $k$-box in $T$.
Then, we can use \cref{lemma-6120245} to calculate the partial trace $\tr_{\mathcal{H}_{2k}}(P(s_k)A_k P(s_k))$.

To this end, let us consider each single summand in the RHS of \cref{eq-6142009}. Suppose $\lambda\vdash_d k+1$, $\mu\nearrow\lambda$, $T,S\in\Tab(\lambda,\mu)$. First note that it is not possible that $\Sh((s_kT)^\downarrow) = \Sh(S^\downarrow)$ since we know $\Sh(T^\downarrow)=\Sh(S^\downarrow)$. Thus by \cref{lemma-6120245}, we can ignore the terms $\ketbra{s_kT}{S}$ and $\ketbra{T}{s_kS}$ in \cref{eq-6142031} since they do not contribute after the partial trace. Next, we consider the terms $\ketbra{s_k T}{s_k S}$ and $\ketbra{T}{S}$ separately. 

\paragraph{Summands involving the term $\ketbra{s_kT}{s_k S}$.}
Consider the term $\ketbra{s_k T}{s_k S}$. We have
\begin{align}
&\tr_{\mathcal{H}_{2k}}\!\left(\frac{\dim(\mathcal{P}_\lambda)}{\dim(\mathcal{P}_\mu)\dim(\mathcal{Q}_\lambda)}I_{\mathcal{Q}_\lambda}\otimes I_{\mathcal{Q}_\mu} \otimes \frac{\sqrt{(r(T)^2-1)(r(S)^2-1)}}{r(T)r(S)}\ketbra{s_k T}{s_k S}\otimes\ketbra{T^\downarrow}{S^\downarrow}\right) \nonumber\\
\begin{split}
=& \frac{\dim(\mathcal{P}_\lambda)}{\dim(\mathcal{P}_\mu)\dim(\mathcal{Q}_{\Sh((s_kT)^\downarrow)})}I_{\mathcal{Q}_{\Sh((s_kT)^\downarrow)}}\otimes I_{\mathcal{Q}_\mu} \\
&\quad\quad\quad\quad\quad\quad\otimes \mathbbm{1}_{\Sh((s_kT)^\downarrow)=\Sh((s_kS)^\downarrow)}\cdot \frac{\sqrt{(r(T)^2-1)(r(S)^2-1)}}{r(T)r(S)}\ketbra{(s_k T)^\downarrow}{(s_k S)^{\downarrow}}\otimes\ketbra{T^\downarrow}{S^\downarrow}.\label{eq-6142152}
\end{split}
\end{align}
Then, we consider their summation.
For those summands of the form in \cref{eq-6142152}, by re-naming the variables $\Sh((s_k T)^{\downarrow})$ to $\nu$, $(s_k T)^\downarrow$ to $T$ and $(s_k S)^\downarrow$ to $S$, 
we can write the corresponding summation as
\begin{equation}\label{eq-6161627}
\bigoplus_{\substack{\nu\vdash_d k}}\frac{1}{\dim(\mathcal{Q}_\nu)}I_{\mathcal{Q}_\nu}\otimes \bigoplus_{\substack{\lambda:\nu\nearrow\lambda\\ \ell(\lambda)\leq d}} \bigoplus_{\substack{\mu:\mu\nearrow\lambda\\ \mu\neq\nu}}\frac{\dim(\mathcal{P}_\lambda)}{\dim(\mathcal{P}_\mu)}I_{\mathcal{Q}_\mu} \otimes \sum_{\substack{T,S\in\Tab(\nu,\mu\cap\nu)}}\frac{\sqrt{(r(T^{\uparrow \lambda})^2-1)(r(S^{\uparrow \lambda})^2-1)}}{r(T^{\uparrow \lambda})r(S^{\uparrow\lambda})} \ketbra{T}{S}\otimes \ketbra{T_\mu}{S_\mu},
\end{equation}
where $T_\mu$ denotes the standard Young tableau obtained from $T$ by moving the largest box of $T$ to the position that results in shape $\mu$ (such moving always exists since $\mu$ is adjacent to $\nu$). Since the summation is taken over all $\mu\vdash_d k$ that is adjacent to $\nu$, by re-naming $\mu\cap\nu$ to $\tau$, we can write \cref{eq-6161627} in an equivalent form:
\begin{equation}
\bigoplus_{\substack{\nu\vdash_d k}}\frac{1}{\dim(\mathcal{Q}_\nu)}I_{\mathcal{Q}_\nu}\otimes \bigoplus_{\tau:\tau\nearrow\nu} \bigoplus_{\substack{\mu:\tau\nearrow\mu\\ \mu\neq\nu\\ \ell(\mu)\leq d}}\frac{\dim(\mathcal{P}_{\mu\cup \nu})}{\dim(\mathcal{P}_\mu)}I_{\mathcal{Q}_\mu} \otimes \sum_{\substack{T,S\in\Tab(\nu,\tau)}}\frac{\sqrt{(r(T^{\uparrow \mu\cup\nu})^2-1)(r(S^{\uparrow \mu\cup\nu})^2-1)}}{r(T^{\uparrow \mu\cup\nu})r(S^{\uparrow\mu\cup\nu})} \ketbra{T}{S}\otimes \ketbra{T_\mu}{S_\mu}.\label{eq-6150140}
\end{equation}
Note that for $T,S\in\Tab(\nu,\tau)$, $r(T^{\uparrow \mu\cup\nu})=r(S^{\uparrow\mu\cup\nu})=c(\mu\setminus\nu)-c(\nu\setminus\tau)$. 
Then, by \cref{lemma-6161632}, we can write \cref{eq-6150140} as
\begin{equation}\label{eq-6161638}
\frac{k+1}{k}\bigoplus_{\substack{\nu\vdash_d k}}\frac{1}{\dim(\mathcal{Q}_\nu)}I_{\mathcal{Q}_\nu}\otimes \bigoplus_{\tau:\tau\nearrow\nu} \bigoplus_{\substack{\mu:\tau\nearrow\mu\\ \mu\neq\nu}}\frac{\dim(\mathcal{P}_{\nu})}{\dim(\mathcal{P}_{\tau})}I_{\mathcal{Q}_\mu} \otimes \sum_{\substack{T,S\in\Tab(\nu,\tau)}}\ketbra{T}{S}\otimes \ketbra{T_\mu}{S_\mu},
\end{equation}
where we ignore the constraint $\ell(\mu)\leq d$ since $I_{\mathcal{Q}_\mu}=0$ when $\ell(\mu)>d$.

\paragraph{Summands involving the term $\ketbra{T}{S}$.}
Then, consider the term $\ketbra{T}{S}$. We have
\begin{align}
&\tr_{\mathcal{H}_{2k}}\!\left(\frac{\dim(\mathcal{P}_\lambda)}{\dim(\mathcal{P}_\mu)\dim(\mathcal{Q}_\lambda)}I_{\mathcal{Q}_\lambda}\otimes I_{\mathcal{Q}_\mu} \otimes \frac{1}{r(T)r(S)}\ketbra{T}{S}\otimes\ketbra{T^\downarrow}{S^\downarrow}\right) \nonumber\\
=&\frac{\dim(\mathcal{P}_\lambda)}{\dim(\mathcal{P}_\mu)\dim(\mathcal{Q}_{\mu})}I_{\mathcal{Q}_\mu}\otimes I_{\mathcal{Q}_\mu}\otimes \frac{1}{r(T)r(S)}\ketbra{T^\downarrow}{S^{\downarrow}}\otimes\ketbra{T^\downarrow}{S^\downarrow}.\label{eq-6142153}
\end{align}
For those summands of the form in \cref{eq-6142153}, by re-naming the variables $\mu$ to $\nu$, $T^\downarrow$ to $T$ and $S^\downarrow$ to $S$, we can write the corresponding summation as
\begin{equation}
\bigoplus_{\nu\vdash_d k}\frac{1}{\dim(\mathcal{P}_\nu)\dim(\mathcal{Q}_\nu)}I_{\mathcal{Q}_\nu}\otimes I_{\mathcal{Q}_\nu}\otimes \sum_{\substack{\lambda:\nu\nearrow \lambda\\ \ell(\lambda)\leq d}}\,\sum_{\substack{T,S\in\Tab(\nu)}}\frac{\dim(\mathcal{P}_\lambda)}{r(T^{\uparrow \lambda})r(S^{\uparrow \lambda})}\ketbra{T}{S}\otimes\ketbra{T}{S}.\label{eq-6150141}
\end{equation}
Then, note that $\Tab(\nu)=\bigcup_{\tau:\tau\nearrow\nu}\Tab(\nu,\tau)$. We can write \cref{eq-6150141} as
\begin{equation}\label{eq-6161741}
\bigoplus_{\nu\vdash_d k}\frac{1}{\dim(\mathcal{P}_\nu)\dim(\mathcal{Q}_\nu)}I_{\mathcal{Q}_\nu}\otimes I_{\mathcal{Q}_\nu}\otimes \sum_{\substack{\lambda:\nu\nearrow \lambda\\ \ell(\lambda)\leq d}}\,\sum_{\substack{\tau:\tau\nearrow\nu\\\kappa:\kappa\nearrow\nu}}\,\sum_{\substack{T\in\Tab(\nu,\tau)\\ S\in\Tab(\nu,\kappa)}}\frac{\dim(\mathcal{P}_\lambda)}{r(T^{\uparrow \lambda})r(S^{\uparrow \lambda})}\ketbra{T}{S}\otimes\ketbra{T}{S}.
\end{equation}
Note that if we add
\begin{equation}\label{eq-7101659}
\bigoplus_{\nu\vdash_d k}\frac{1}{\dim(\mathcal{P}_\nu)\dim(\mathcal{Q}_\nu)}I_{\mathcal{Q}_\nu}\otimes I_{\mathcal{Q}_\nu}\otimes \sum_{\substack{\lambda:\nu\nearrow \lambda\\ \ell(\lambda)=d+1}}\,\sum_{\substack{\tau:\tau\nearrow\nu\\\kappa:\kappa\nearrow\nu}}\,\sum_{\substack{T\in\Tab(\nu,\tau)\\ S\in\Tab(\nu,\kappa)}}\frac{\dim(\mathcal{P}_\lambda)}{r(T^{\uparrow \lambda})r(S^{\uparrow \lambda})}\ketbra{T}{S}\otimes\ketbra{T}{S}
\end{equation}
to \cref{eq-6161741}, we can remove the constraint $\ell(\lambda)\leq d$ and obtain:
\begin{equation}\label{eq-7101646}
\bigoplus_{\nu\vdash_d k}\frac{1}{\dim(\mathcal{P}_\nu)\dim(\mathcal{Q}_\nu)}I_{\mathcal{Q}_\nu}\otimes I_{\mathcal{Q}_\nu}\otimes \sum_{\substack{\lambda:\nu\nearrow \lambda}}\,\sum_{\substack{\tau:\tau\nearrow\nu\\\kappa:\kappa\nearrow\nu}}\,\sum_{\substack{T\in\Tab(\nu,\tau)\\ S\in\Tab(\nu,\kappa)}}\frac{\dim(\mathcal{P}_\lambda)}{r(T^{\uparrow \lambda})r(S^{\uparrow \lambda})}\ketbra{T}{S}\otimes\ketbra{T}{S}.
\end{equation}
Note that \cref{eq-7101659} is positive semidefinite.
Therefore, \cref{eq-6161741} $\sqsubseteq$ \cref{eq-7101646}.
Note that $r(T^{\uparrow\lambda})=c(\lambda\setminus\nu)-c(\nu\setminus\tau)> 0$ and $r(S^{\uparrow\lambda})=c(\lambda\setminus\nu)-c(\nu\setminus\kappa)> 0$.
Then, in \cref{eq-7101646}, by \cref{lemma-6161659}, we can discard those terms such that $\tau\neq\kappa$. Therefore, \cref{eq-7101646} simplifies to:
\begin{equation}\label{eq-6161754}
\bigoplus_{\nu\vdash_d k}\frac{1}{\dim(\mathcal{P}_\nu)\dim(\mathcal{Q}_\nu)}I_{\mathcal{Q}_\nu}\otimes I_{\mathcal{Q}_\nu}\otimes \sum_{\substack{\lambda:\nu\nearrow \lambda}}\,\sum_{\tau:\tau\nearrow\nu}\,\sum_{\substack{T,S\in\Tab(\nu,\tau)}}\frac{\dim(\mathcal{P}_\lambda)}{r(T^{\uparrow \lambda})r(S^{\uparrow \lambda})}\ketbra{T}{S}\otimes\ketbra{T}{S}.
\end{equation}
Then, by \cref{lemma-6161700}, we can further simplify \cref{eq-6161754} to
\begin{equation}\label{eq-6161806}
\frac{k+1}{k}\cdot \bigoplus_{\nu\vdash_d k}\frac{1}{\dim(\mathcal{Q}_\nu)}I_{\mathcal{Q}_\nu}\otimes I_{\mathcal{Q}_\nu}\otimes \sum_{\tau:\tau\nearrow\nu}\frac{\dim(\mathcal{P}_\nu)}{\dim(\mathcal{P}_\tau)}\sum_{\substack{T,S\in\Tab(\nu,\tau)}}\ketbra{T}{S}\otimes\ketbra{T}{S}.
\end{equation}
In summary, we proved that the sum of the terms involving $\ketbra{T}{S}$ 
(i.e., \cref{eq-6150141}) is less than (w.r.t. L\"owner order) or equal to \cref{eq-6161806}.

\paragraph{Putting it all together.}
Then, combining the above results, we can see that the sum of \cref{eq-6161638} and \cref{eq-6161806} is:
\begin{align}
\frac{k+1}{k}\cdot\bigoplus_{\nu\vdash_d k}\frac{1}{\dim(\mathcal{Q}_\nu)} I_{\mathcal{Q}_\nu}\otimes \bigoplus_{\tau:\tau\nearrow\nu} \bigoplus_{\substack{\mu:\tau\nearrow\mu}}\frac{\dim(\mathcal{P}_\nu)}{\dim(\mathcal{P}_\tau)}I_{\mathcal{Q}_\mu}\otimes\sum_{T,S\in\Tab(\nu,\tau)}\ketbra{T}{S}\otimes\ketbra{T_\mu}{S_\mu}.\label{eq-6170410}
\end{align}
Note that in \cref{eq-6170410}, $\sum_{\substack{\mu:\tau\nearrow\mu}}I_{\mathcal{Q}_\mu}\otimes\ketbra{T_\mu}{S_\mu}$ is an operator on $\bigotimes_{i=1}^k\mathcal{H}_{2i-1}$. Thus \cref{fact-6101953} implies that 
\[\bigoplus_{\substack{\mu:\tau\nearrow\mu}}I_{\mathcal{Q}_\mu}\otimes\ketbra{T_\mu}{S_\mu}=I_{\mathcal{H}_{2k-1}}\otimes I_{\mathcal{Q}_\tau}\otimes\ketbra{T^\downarrow}{S^\downarrow}.\]
This means \cref{eq-6170410} is
\begin{align}
&\frac{k+1}{k} I_{\mathcal{H}_{2k-1}}\otimes \bigoplus_{\nu\vdash_d k}\bigoplus_{\tau:\tau\nearrow\nu}\frac{\dim(\mathcal{P}_\nu)}{\dim(\mathcal{P}_\tau)\dim(\mathcal{Q}_\nu)} I_{\mathcal{Q}_\nu}\otimes  I_{\mathcal{Q}_\tau}\otimes\sum_{T,S\in\Tab(\nu,\tau)}\ketbra{T}{S}\otimes\ketbra{T^\downarrow}{S^\downarrow}\nonumber\\
=&\frac{k+1}{k} \cdot I_{\mathcal{H}_{2k-1}}\otimes A_{k-1}.\nonumber
\end{align}

\subsubsection{The case of \texorpdfstring{$k=1$}{k=1}}
Then, we prove
\[\tr_{\mathcal{H}_2}(A_1)=\frac{2}{d}\cdot I_{\mathcal{H}_1}\otimes I_{\mathcal{H}_{2n+2}}.\]
by explicit calculation. 
Specifically, let $\lambda=\scalebox{.45}{\begin{ytableau}~&~\end{ytableau}}$ and $\mu=\scalebox{.45}{\begin{ytableau}~\\~\end{ytableau}}$ be the two Young diagrams of $2$ boxes and $\nu=\scalebox{.45}{\begin{ytableau}~\end{ytableau}}$ be the Young diagram of $1$ box. Let $T=\scalebox{.45}{\begin{ytableau}1&2\end{ytableau}}$ and $S=\scalebox{.45}{\begin{ytableau}1\\2\end{ytableau}}$. Thus, $\nu\nearrow\lambda$, $\nu\nearrow\mu$, $\dim(\mathcal{P}_\lambda)=\dim(\mathcal{P}_\mu)=\dim(\mathcal{P}_\nu)=1$, $\dim(\mathcal{Q}_\lambda)=\frac{d(d+1)}{2}$, $\dim(\mathcal{Q}_\mu)=\frac{d(d-1)}{2}$ and $\dim(\mathcal{Q}_\nu)=d$.
Then, we have
\begin{align}
A_1&=\frac{2}{d(d+1)} I_{\mathcal{Q}_\lambda}\otimes I_{\mathcal{Q}_\nu}\otimes \ketbra{T}{T}\otimes \ketbra{T^\downarrow}{T^\downarrow}+\frac{2}{d(d-1)}I_{\mathcal{Q}_\mu}\otimes I_{\mathcal{Q}_\nu}\otimes \ketbra{S}{S}\otimes \ketbra{S^\downarrow}{S^\downarrow}.\nonumber
\end{align}
Next, we swap the subsystems $\mathcal{H}_{2}$ and $\mathcal{H}_{2n+2}$. In fact, note that the swap $s_1$ acts trivially on $\ket{T}$ and $\ket{S}$. Thus, we can directly calculate the partial trace on $A_1$ by \cref{lemma-6120245}:
\begin{align}
\tr_{\mathcal{H}_2}(A_1)&=\frac{1}{d}I_{\mathcal{Q}_\nu}\otimes I_{\mathcal{Q}_\nu}\otimes\ketbra{T^\downarrow}{T^\downarrow}\otimes\ketbra{T^\downarrow}{T^\downarrow}+\frac{1}{d}I_{\mathcal{Q}_\nu}\otimes I_{\mathcal{Q}_\nu}\otimes\ketbra{S^\downarrow}{S^\downarrow}\otimes\ketbra{S^\downarrow}{S^\downarrow}\nonumber\\
&=\frac{2}{d}I_{\mathcal{H}_{1}}\otimes I_{\mathcal{H}_{2n+2}},\label{eq-6132124}
\end{align}
where \cref{eq-6132124} is because $T^\downarrow=S^\downarrow=\scalebox{.45}{\begin{ytableau}1\end{ytableau}}$ and thus $\ketbra{T^\downarrow}{T^\downarrow}=\ketbra{S^\downarrow}{S^\downarrow}=I_{\mathcal{P}_\nu}$.

\subsection{Generalized time-reversal}
\label{sub:generalized_time_reversal}

In this section, let us prove \Cref{cor:generalized-time-reversal},
which gives a lower bound for generalized time-reversal of an unknown unitary. Here, we need to use the following Dirichlet's approximation theorem.

\begin{lemma}[Dirichlet's approximation theorem]
    \label{lmm:Dirichlet}
    For any real number $t\in \mathbb{R}$ and integer $N\geq 1$, there exist integers $a,b$ such that $1\leq b\leq N$ and
    $\abs*{bt-a}< 1/N$.
\end{lemma}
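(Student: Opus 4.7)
The plan is to prove this via the pigeonhole principle applied to the fractional parts of the multiples $kt$ for $k=0,1,\ldots,N$. Let $\{x\}$ denote the fractional part of $x$, so $\{x\}=x-\lfloor x\rfloor\in[0,1)$. I would consider the $N+1$ real numbers
\[
\{0\cdot t\},\ \{1\cdot t\},\ \{2\cdot t\},\ \ldots,\ \{N\cdot t\},
\]
all of which lie in the interval $[0,1)$.

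Next, I would partition $[0,1)$ into the $N$ disjoint subintervals
\[
I_j \coloneqq \bracks*{\tfrac{j-1}{N},\ \tfrac{j}{N}},\qquad j=1,2,\ldots,N,
\]
each of length $1/N$. Since there are $N+1$ fractional parts but only $N$ subintervals, the pigeonhole principle yields indices $0\leq i<j\leq N$ such that $\{it\}$ and $\{jt\}$ lie in the same subinterval, so $\abs*{\{jt\}-\{it\}}<1/N$.

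The final step is to unpack this inequality. Writing $\{kt\}=kt-\lfloor kt\rfloor$, I set $b\coloneqq j-i$ and $a\coloneqq \lfloor jt\rfloor-\lfloor it\rfloor$; then $b$ and $a$ are integers, $1\leq b\leq N$, and
\[
\abs*{bt-a}=\abs*{(jt-\lfloor jt\rfloor)-(it-\lfloor it\rfloor)}=\abs*{\{jt\}-\{it\}}<\tfrac{1}{N},
\]
which is exactly the desired conclusion.

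There is essentially no obstacle here: the argument is the standard textbook proof, and the only minor care needed is to ensure the strict inequality $<1/N$ rather than $\leq 1/N$, which follows because two distinct points in a half-open interval of length $1/N$ are at distance strictly less than $1/N$ (and if the two fractional parts coincided, the distance would be $0<1/N$ trivially). No additional lemmas or machinery from the paper are needed.
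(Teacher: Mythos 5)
Your proof is correct: it is the classical pigeonhole argument for Dirichlet's approximation theorem, and all the steps (including the care about the strict inequality) are sound. Note that the paper simply cites this as a well-known classical result and does not supply its own proof, so there is nothing to compare against; your argument supplies the standard missing proof.
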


Then, we are able to give the proof of \cref{cor:generalized-time-reversal}.

\begin{proof}[Proof of \Cref{cor:generalized-time-reversal}]
    Suppose $\calA$ is an algorithm that implements the unitary $U^{-t}=e^{-iHt}$ to within constant diamond norm error $\epsilon\leq 10^{-5}$ for some constant $t\geq 0.1$, using $n$ queries to $U$, where $H$ is a Hamiltonian satisfying $e^{iH}=e^{i\theta}U$ for some real number $\theta$ and $\norm*{H}\leq \pi$.
    By \Cref{lmm:Dirichlet}, there exist integers $a'$ and $1\leq b'\leq 10^3$ such that
    $\abs*{b't - a'} < 10^{-3}$.
    Let $b=10b',a=10a'$. Then, we have $\abs*{bt-a}\leq 0.01$, where $10\leq b\leq 10^4$. We can also see $a\geq 1$ since $t\geq 0.1$.

    Now let us construct an algorithm to approximately implement $U^{-1}$, based on algorithm $\calA$.
    First, we repeat algorithm $\calA$ $b$ times,
    and then we obtain an algorithm $\calB$ to implement the unitary $e^{-iHbt}$ to within diamond norm error $b\epsilon\leq 0.1$.
    Second, we show the unitary $e^{-iHbt}$ is close to $e^{-iHa}$ in diamond distance, as follows.
    Let $\calV_1(\cdot)=e^{-iHbt}(\cdot)e^{iHbt}$ and $\calV_2(\cdot)=e^{-iHa}(\cdot)e^{iHa}$ be two quantum channels for $e^{-iHbt}$ and $e^{-iHa}$, respectively.
    Using the inequality between the diamond norm and the operator norm (see, e.g., \cite[Proposition 1.6]{haah2023query}) yields
    \begin{equation}
        \label{eq:diamond-operator}
        \norm*{\calV_1-\calV_2}_{\diamond}\leq 2\norm*{e^{-iHbt}-e^{-iHa}},
    \end{equation}
    where $\|\cdot \|$ is the operator norm.
    To further bound the operator norm on the RHS of \Cref{eq:diamond-operator},
    we observe
    \begin{equation*}
        \norm*{e^{-iHbt}-e^{-iHa}}=\norm*{e^{-iHbt}\parens*{I-e^{-iH\parens*{a-bt}}}}=\norm*{I-e^{-iH\parens*{a-bt}}},
    \end{equation*}
    which can be upper bounded by $\abs*{a-bt}\cdot\norm*{H}\leq 0.01\cdot\pi$,
    through looking at the eigenvalues and using $\abs*{1-e^{ix}}\leq \abs*{x}$.
    Consequently, $\norm*{\calV_1-\calV_2}_{\diamond}\leq 0.02\cdot\pi< 0.1$. 
    That is, $\calB$ is also an algorithm to implement the unitary $e^{-iHa}$ to within diamond norm error $b\epsilon + 0.02\cdot\pi < 0.2$.
    
    Finally, if we use $a-1\geq 0$ more queries to $U$ followed by algorithm $\calB$,
    we obtain an algorithm $\calC$ to implement the time-reverse $U^{-1}$ (up to a global phase) to within diamond norm error $0.2$.
    Note that algorithm $\calC$ only uses $bn+a-1$ queries to $U$, and both $a$ and $b$ are constants. 
    Therefore, by \Cref{coro-6290408}, we have $n=\Omega(d^2)$.    
    
\end{proof}

\section{Deferred lemmas}
\label{sec:deferred_lemmas}
\subsection{Raising and lowering of Young diagrams}
In this subsection, we present two representation-theoretic results concerning the raising and lowering of Young diagrams. These results follow from standard techniques involving the Clebsch-Gordan transform and Schur transform~\cite{bacon2007quantum,bacon2006efficient,harrow2005applications} (see also \cite{yoshida2023reversing,studzinski2022efficient}). However, for the reader's convenience, we provide simple and self-contained proofs here.
\begin{lemma}\label{fact-6101953}
Let $n\geq 2$ be an integer and $\mathcal{H}_1, \mathcal{H}_2,\ldots,\mathcal{H}_n$ be a sequence of Hilbert spaces such that $\mathcal{H}_i\cong\mathbb{C}^d$ for $1\leq i\leq n$. Let $\mathfrak{S}_n$ act on $\bigotimes_{i=1}^n\mathcal{H}_i$ and consider the corresponding decomposition $\bigotimes_{i=1}^n\mathcal{H}_i\stackrel{\mathbb{U}_d\times\mathfrak{S}_n}{\cong} \bigoplus_{\substack{\lambda\vdash_d n}} \mathcal{Q}_\lambda\otimes\mathcal{P}_\lambda$. Then, for $\mu\vdash_d n-1$, $T,S\in\Tab(\mu)$, we have
\[I_{\mathcal{H}_n}\otimes I_{\mathcal{Q}_\mu}\otimes \ketbra{T}{S}=\bigoplus_{\lambda:\mu\nearrow \lambda} I_{\mathcal{Q}_\lambda}\otimes \ketbra{T^{\uparrow \lambda}}{S^{\uparrow \lambda}},\]
where $T^{\uparrow \lambda}$ denotes the standard Young tableau obtained from $T$ by adding a box filled with $n$, resulting in the shape $\lambda$,
and we take the convention that $\mathcal{Q}_\lambda=0$ when $\ell(\lambda)>d$.
\end{lemma}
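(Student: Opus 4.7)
The plan is to realize both sides as two equivalent expressions for the action of a single element of the group algebra $\mathbb{C}\mathfrak{S}_{n-1}$---embedded inside $\mathbb{C}\mathfrak{S}_n$ via the natural inclusion that fixes the element $n$---on the tensor space $\bigotimes_{i=1}^n\mathcal{H}_i$. By semisimplicity we have Wedderburn isomorphisms
\[\mathbb{C}\mathfrak{S}_{n-1}\cong\bigoplus_{\mu\vdash n-1}\mathcal{L}(\mathcal{P}_\mu),\qquad\mathbb{C}\mathfrak{S}_n\cong\bigoplus_{\lambda\vdash n}\mathcal{L}(\mathcal{P}_\lambda),\]
so the matrix unit $\ketbra{T}{S}\in\mathcal{L}(\mathcal{P}_\mu)$ determines an element $x\in\mathbb{C}\mathfrak{S}_{n-1}\subseteq\mathbb{C}\mathfrak{S}_n$ that can be represented on either side of the inclusion.

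The main step is to compute the image of $x$ under the second Wedderburn isomorphism. For this I would combine the branching rule $\textup{Res}^{\mathfrak{S}_n}_{\mathfrak{S}_{n-1}}\mathcal{P}_\lambda=\bigoplus_{\mu:\mu\nearrow\lambda}\mathcal{P}_\mu$ with the defining property of the Young-Yamanouchi basis: for each $\lambda\vdash n$ and each $\mu\nearrow\lambda$, the copy of $\mathcal{P}_\mu$ inside $\mathcal{P}_\lambda$ appearing in the branching is precisely the linear span of $\{\ket{R^{\uparrow\lambda}}:R\in\Tab(\mu)\}$, with the $\mathfrak{S}_{n-1}$-equivariant identification $\ket{R}\in\mathcal{P}_\mu\leftrightarrow\ket{R^{\uparrow\lambda}}\in\mathcal{P}_\lambda$. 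Under this identification the matrix unit $\ketbra{T}{S}$ pulls back to $\ketbra{T^{\uparrow\lambda}}{S^{\uparrow\lambda}}$ on each $\lambda$-block with $\mu\nearrow\lambda$, and to zero on every other $\mu'$-block (since $x$ vanishes there in $\mathbb{C}\mathfrak{S}_{n-1}$). Hence $x$ corresponds to $\bigoplus_{\lambda:\mu\nearrow\lambda}\ketbra{T^{\uparrow\lambda}}{S^{\uparrow\lambda}}\in\bigoplus_{\lambda\vdash n}\mathcal{L}(\mathcal{P}_\lambda)$.

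Finally I would transport this identity back to the tensor space in the two natural ways. Applying Schur-Weyl duality on $n$ factors, an element $y\in\mathcal{L}(\mathcal{P}_\lambda)$ acts on $\bigotimes_{i=1}^n\mathcal{H}_i\cong\bigoplus_\lambda\mathcal{Q}_\lambda\otimes\mathcal{P}_\lambda$ as $I_{\mathcal{Q}_\lambda}\otimes y$; this realizes the action of $x$ as the RHS of the lemma. Since the embedded $\mathfrak{S}_{n-1}$ acts trivially on $\mathcal{H}_n$, the same $x$ also acts as $I_{\mathcal{H}_n}$ tensored with its $\mathfrak{S}_{n-1}$-action on $\bigotimes_{i=1}^{n-1}\mathcal{H}_i$, and by Schur-Weyl on $n-1$ factors this action is $I_{\mathcal{Q}_\mu}\otimes\ketbra{T}{S}$, so the overall operator equals $I_{\mathcal{H}_n}\otimes I_{\mathcal{Q}_\mu}\otimes\ketbra{T}{S}$, i.e., the LHS. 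Equating the two expressions gives the claim. The main obstacle, and really the only nontrivial ingredient, is verifying the equivariant identification of $\mathcal{P}_\mu$ inside $\mathcal{P}_\lambda$ with $\spanspace\{\ket{R^{\uparrow\lambda}}:R\in\Tab(\mu)\}$; this is precisely how the Young-Yamanouchi basis is constructed (cf.\ the discussion around \cref{eq-682336}), so once invoked the rest is bookkeeping.
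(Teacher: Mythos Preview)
Your proof is correct and takes a somewhat different route from the paper's. The paper works on the ket side first: it treats $I_{\mathcal{Q}_\mu}\otimes\ket{T}$ as an isometry $\mathcal{Q}_\mu\to\bigotimes_{i=1}^{n-1}\mathcal{H}_i$, tensors with $I_{\mathcal{H}_n}$, and invokes Pieri's rule $\mathbb{C}^d\otimes\mathcal{Q}_\mu\cong\bigoplus_{\lambda:\mu\nearrow\lambda}\mathcal{Q}_\lambda$ on the $\mathbb{U}_d$ side to write $I_{\mathcal{H}_n}\otimes I_{\mathcal{Q}_\mu}\otimes\ket{T}=\bigoplus_\lambda I_{\mathcal{Q}_\lambda}\otimes\ket{T_\lambda}$ for some unit vectors $\ket{T_\lambda}\in\mathcal{P}_\lambda$; it then applies the isotypic projectors $e_{\mu^{(k)}}$ one by one to pin down $\ket{T_\lambda}=\ket{T^{\uparrow\lambda}}$, and repeats for the bra. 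Your approach instead lifts $\ketbra{T}{S}$ to an element of $\mathbb{C}\mathfrak{S}_{n-1}\subseteq\mathbb{C}\mathfrak{S}_n$ via the Wedderburn isomorphism and reads off its action on the two Schur--Weyl decompositions directly from the $\mathfrak{S}_n\downarrow\mathfrak{S}_{n-1}$ branching rule. This is arguably cleaner---it handles ket and bra at once and never needs Pieri---but it leans on the same underlying fact (that the Young basis is built by iterated branching), just invoked wholesale rather than unpacked through the projectors $e_{\mu^{(k)}}$.
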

\begin{proof}
We consider the linear operator $I_{\mathcal{Q}_\mu}\otimes \ket{T}\in\mathcal{L}(\mathcal{Q}_\mu,\mathcal{Q}_\mu\otimes\mathcal{P}_\mu)\subseteq \mathcal{L}(\mathcal{Q}_\mu,\bigotimes_{i=1}^{n-1}\mathcal{H}_{i})$. By Pieri's rule:
\[\mathbb{C}^d\otimes \mathcal{Q}_\mu\stackrel{\mathbb{U}_{d}}{\cong}\bigoplus_{\lambda:\mu\nearrow\lambda}\mathcal{Q}_{\lambda},\]
we can see that
\begin{equation}\label{eq-6101547}
I_{\mathcal{H}_{n}}\otimes I_{\mathcal{Q}_\mu}\otimes \ket{T}=\bigoplus_{\lambda:\mu\nearrow \lambda} I_{\mathcal{Q}_\lambda}\otimes \ket{T_\lambda},
\end{equation}
where $\ket{T_\lambda}$ is a unit vector and by Schur-Weyl duality on $\bigotimes_{i=1}^{n}\mathcal{H}_{i}$, we know that $\ket{T_\lambda}\in\mathcal{P}_\lambda$. 
Suppose $T=\mu^{(1)}\rightarrow\cdots\rightarrow\mu^{(n-1)}$ where $\mu^{(n-1)}=\mu$. For an integer $1\leq k\leq n-1$, we apply $P(e_{\mu^{(k)}})$ (i.e., the action of $e_{\mu^{(k)}}$ on $\bigotimes_{i=1}^{n}\mathcal{H}_{i}$, where $e_{\mu^{(k)}}$ is defined in \cref{eq-6122124}) to both sides of \cref{eq-6101547}. 
For the LHS, since $e_{\mu^{(k)}}\in\mathbb{C}\mathfrak{S}_k\subseteq\mathbb{C}\mathfrak{S}_{n-1}$, $P(e_{\mu^{(k)}})$ acts non-trivially only on $\bigotimes_{i=1}^{n-1}\mathcal{H}_{i}$, thus we have
\[P(e_{\mu^{(k)}})(I_{\mathcal{H}_{n}}\otimes I_{\mathcal{Q}_\mu}\otimes \ket{T})=I_{\mathcal{H}_{n}}\otimes I_{\mathcal{Q}_\mu}\otimes e_{\mu^{(k)}}\ket{T}=I_{\mathcal{H}_{n}}\otimes I_{\mathcal{Q}_\mu}\otimes \ket{T}.\] 
For the RHS, we have
\[P(e_{\mu^{(k)}})\left(\bigoplus_{\lambda:\mu\nearrow\lambda}I_{\mathcal{Q}_\lambda}\otimes \ket{T_\lambda}\right)=\bigoplus_{\lambda:\mu\nearrow \lambda} I_{\mathcal{Q}_\lambda}\otimes e_{\mu^{(k)}}\ket{T_\lambda}.\]
Therefore, we can see that
\[e_{\mu^{(k)}}\ket{T_\lambda}=\ket{T_\lambda},\]
for any $1\leq k\leq n-1$.
This means $\ket{T_\lambda}\in\mathcal{P}_\lambda$, when restricted as a vector in a representation of $\mathfrak{S}_k$, is a vector in $\mathcal{P}_{\mu^{(k)}}$. Therefore, $\ket{T_\lambda}=\ket{\mu^{(1)}\rightarrow\cdots\rightarrow\mu^{(n-1)}\rightarrow \lambda}=\ket{T^{\uparrow\lambda}}$.

We can analogously prove $I_{\mathcal{H}_{n}}\otimes I_{\mathcal{Q}_\mu}\otimes \bra{S}=\bigoplus_{\lambda:\mu\nearrow \lambda} I_{\mathcal{Q}_\lambda}\otimes \bra{S^{\uparrow\lambda}}$. This completes the proof.
\end{proof}

The following result can be viewed, in some sense, as a ``dual'' of \cref{fact-6101953}.
\begin{lemma}\label{lemma-6120245}
Let $n\geq 2$ be an integer and $\mathcal{H}_1, \mathcal{H}_2,\ldots,\mathcal{H}_n$ be a sequence of Hilbert spaces such that $\mathcal{H}_i\cong\mathbb{C}^d$ for $1\leq i\leq n$. Let $\mathfrak{S}_n$ act on $\bigotimes_{i=1}^n\mathcal{H}_i$ and consider the corresponding decomposition $\bigotimes_{i=1}^n\mathcal{H}_i\stackrel{\mathbb{U}_d\times\mathfrak{S}_n}{\cong} \bigoplus_{\lambda\vdash_d n} \mathcal{Q}_\lambda\otimes\mathcal{P}_\lambda$. Then, for $\lambda\vdash_d n$, $T,S\in\Tab(\lambda)$, we have
\[\tr_{\mathcal{H}_n}(I_{\mathcal{Q}_{\lambda}}\otimes \ketbra{T}{S})=\mathbbm{1}_{\Sh(T^\downarrow)=\Sh(S^\downarrow)}\cdot\frac{\dim(\mathcal{Q}_\lambda)}{\dim(\mathcal{Q}_{\Sh(T^\downarrow)})}\cdot I_{\mathcal{Q}_{\Sh(T^\downarrow)}}\otimes \ketbra{T^\downarrow}{S^\downarrow},\]
where $T^\downarrow$ denotes the standard Young tableau obtained from $T$ by removing the box containing the largest integer.
\end{lemma}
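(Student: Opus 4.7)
The plan is to dualize Lemma~\ref{fact-6101953} by combining the $\mathbb{U}_d$-equivariance of the partial trace with the Schur--Weyl commutant theorem, and then read off the matrix elements of the result by probing against natural matrix units.

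First, I would observe that $I_{\mathcal{Q}_\lambda} \otimes \ketbra{T}{S}$ commutes with the diagonal action $U^{\otimes n}$ of $\mathbb{U}_d$ on $\bigotimes_{i=1}^n \mathcal{H}_i$, and that the partial trace over $\mathcal{H}_n$ preserves $\mathbb{U}_d$-equivariance because $\mathbb{U}_d$ acts on $\mathcal{H}_n$ as a separate factor. So $X := \tr_{\mathcal{H}_n}(I_{\mathcal{Q}_\lambda} \otimes \ketbra{T}{S})$ commutes with $U^{\otimes (n-1)}$ for every $U$, and the commutant theorem (equivalently, Schur-Weyl on $n-1$ copies) forces
\[
X \;=\; \bigoplus_{\mu \vdash_d n-1} I_{\mathcal{Q}_\mu} \otimes M_\mu
\]
for some unique operators $M_\mu \in \mathcal{L}(\mathcal{P}_\mu)$. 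It then suffices to determine the matrix elements of each $M_\mu$.

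Next, I would extract those matrix elements by probing $X$ with $I_{\mathcal{Q}_\mu} \otimes \ketbra{T'}{S'}$ for $T',S' \in \Tab(\mu)$. On one hand, the structural form above gives
\[
\tr\!\left[(I_{\mathcal{Q}_\mu} \otimes \ketbra{T'}{S'}) \cdot X\right] \;=\; \dim(\mathcal{Q}_\mu) \cdot \bra{S'} M_\mu \ket{T'}.
\]
On the other hand, unrolling the partial trace and applying Lemma~\ref{fact-6101953} to expand $I_{\mathcal{H}_n} \otimes I_{\mathcal{Q}_\mu} \otimes \ketbra{T'}{S'} = \bigoplus_{\lambda':\mu\nearrow\lambda'} I_{\mathcal{Q}_{\lambda'}} \otimes \ketbra{(T')^{\uparrow\lambda'}}{(S')^{\uparrow\lambda'}}$, the support of $I_{\mathcal{Q}_\lambda} \otimes \ketbra{T}{S}$ picks out only the block $\lambda' = \lambda$, yielding
\[
\dim(\mathcal{Q}_\lambda) \cdot \braket{S}{(T')^{\uparrow\lambda}} \cdot \braket{(S')^{\uparrow\lambda}}{T} \;=\; \dim(\mathcal{Q}_\lambda) \cdot \delta_{T', S^\downarrow} \cdot \delta_{S', T^\downarrow},
\]
which is nonzero only if $\mu \nearrow \lambda$. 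Equating these two quantities gives
\[
\bra{S'} M_\mu \ket{T'} \;=\; \frac{\dim(\mathcal{Q}_\lambda)}{\dim(\mathcal{Q}_\mu)} \cdot \delta_{T', S^\downarrow} \cdot \delta_{S', T^\downarrow},
\]
which forces $\mu = \Sh(T^\downarrow) = \Sh(S^\downarrow)$ and identifies $M_\mu = \frac{\dim(\mathcal{Q}_\lambda)}{\dim(\mathcal{Q}_{\Sh(T^\downarrow)})} \ketbra{T^\downarrow}{S^\downarrow}$. Substituting this back into the decomposition of $X$ yields the claimed identity.

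The proof is essentially a clean dualization of Lemma~\ref{fact-6101953}, so the main point requiring care is not conceptual but bookkeeping: one has to track that the Kronecker deltas $\delta_{T',S^\downarrow}$ and $\delta_{S',T^\downarrow}$ simultaneously encode both the shape matching condition $\Sh(T^\downarrow) = \Sh(S^\downarrow) = \mu$ and the vanishing outside that case, so that the indicator $\mathbbm{1}_{\Sh(T^\downarrow) = \Sh(S^\downarrow)}$ in the statement emerges naturally. Once these matching conditions are pinned down, the normalization $\dim(\mathcal{Q}_\lambda)/\dim(\mathcal{Q}_\mu)$ is a direct consequence of comparing the two trace expressions.
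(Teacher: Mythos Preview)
Your proof is correct and takes a genuinely different route from the paper's. Both arguments begin identically, using $\mathbb{U}_d$-equivariance and Schur--Weyl on $n-1$ copies to write $X = \bigoplus_\mu I_{\mathcal{Q}_\mu}\otimes M_\mu$. From there the paper proceeds intrinsically: it sandwiches $X$ by the isotypic projectors $e_{\lambda_T^{(k)}}$ and $e_{\lambda_S^{(k)}}$ to force each $M_\mu$ to be a scalar multiple of $\ketbra{T^\downarrow}{S^\downarrow}$ (vanishing unless $\Sh(T^\downarrow)=\Sh(S^\downarrow)$), and then determines the scalar by picking a $K\in\mathbb{C}\mathfrak{S}_{n-1}$ with $K\ket{T^\downarrow}=\ket{S^\downarrow}$ via the Jacobson density theorem and computing $\tr(P(K)\cdot X)$ two ways. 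You instead probe $X$ directly against the matrix units $I_{\mathcal{Q}_\mu}\otimes\ketbra{T'}{S'}$ and invoke Lemma~\ref{fact-6101953} to evaluate the other side of the trace; this simultaneously identifies the support condition and the constant in one stroke. Your argument is shorter and avoids the density theorem, at the cost of depending on Lemma~\ref{fact-6101953} as a black box; the paper's argument is self-contained and makes the role of the Gelfand--Tsetlin chain explicit. Either way the content is the same Frobenius-reciprocity computation, and your dualization is a clean way to package it.
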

\begin{proof}
Let 
\begin{equation}\label{eq-6130308}
W\coloneqq \tr_{\mathcal{H}_n}(I_{\mathcal{Q}_\lambda}\otimes \ketbra{T}{S}).
\end{equation}
Note that $W\in\mathcal{L}(\bigotimes_{i=1}^{n-1}\mathcal{H}_i)$. For any $U\in\mathbb{U}_d$, we have
\begin{align}
U^{\otimes n-1} W U^{\dag\otimes n-1}&= \tr_{\mathcal{H}_n}\!\left(U^{\otimes n} (I_{\mathcal{Q}_\lambda}\otimes \ketbra{T}{S})U^{\dag\otimes n}\right)\nonumber\\
&=\tr_{\mathcal{H}_n}\!\left(I_{\mathcal{Q}_\lambda}\otimes\ketbra{T}{S}\right)\nonumber\\
&=\tr_{\mathcal{H}_n}(I_{\mathcal{Q}_\lambda}\otimes \ketbra{T}{S}).\nonumber
\end{align}
By Schur-Weyl duality on $\bigotimes_{i=1}^{n-1}\mathcal{H}_i$ and Schur's lemma, this means
\begin{equation}\label{eq-6122217}
W=\bigoplus_{\mu\vdash_d n-1} I_{\mathcal{Q}_\mu}\otimes X_{\mu},
\end{equation}
for some $X_\mu\in\mathcal{L}(\mathcal{P}_\mu)$. 
On the other hand, suppose $T=\lambda_T^{(1)}\rightarrow\lambda_T^{(2)}\rightarrow\cdots\rightarrow \lambda_{T}^{(n)}$ 
and
$S=\lambda_S^{(1)}\rightarrow\lambda_S^{(2)}\rightarrow\cdots\rightarrow \lambda_{S}^{(n)}$, where $\lambda_T^{(n)}=\lambda_S^{(n)}=\lambda$. 
For $k\leq n-1$ and $\nu,\tau\vdash k$, we left-apply $P(e_\nu)$ (i.e., the action of $e_\nu$ on $\bigotimes_{i=1}^{n-1}\mathcal{H}_{i}$, where $e_\nu$ is defined in \cref{eq-6122124}) and right-apply $P(e_\tau)$ to $W$. 
First, by the definition of $W$ (c.f. \cref{eq-6130308}), we have
\begin{align}
P(e_\nu)W P(e_\tau)&=\tr_{\mathcal{H}_n}\!\left(P(e_\nu)(I_{\mathcal{Q}_\lambda}\otimes  \ketbra{T}{S}) P(e_\tau)\right)\label{eq-6122227}\\
&=\tr_{\mathcal{H}_n}(I_{\mathcal{Q}_\lambda}\otimes e_\nu \ketbra{T}{S} e_\tau)\nonumber\\
&=\mathbbm{1}_{\nu=\lambda_T^{(k)}}\cdot \mathbbm{1}_{\tau=\lambda_{S}^{(k)}}\cdot \tr_{\mathcal{H}_n}(I_{\mathcal{Q}_\lambda}\otimes \ketbra{T}{S})\nonumber \\
&=\mathbbm{1}_{\nu=\lambda_T^{(k)}}\cdot \mathbbm{1}_{\tau=\lambda_{S}^{(k)}}\cdot W,\label{eq-6122254}
\end{align}
where \cref{eq-6122227} is because $e_\nu,e_\tau\in\mathbb{C}\mathfrak{S}_k\subseteq \mathbb{C}\mathfrak{S}_{n-1}$ acts non-trivially only on $\bigotimes_{i=1}^{n-1}\mathcal{H}_i$.
Then, by recursively using \cref{eq-6122254}, we know that
\[W= P\!\left(e_{\lambda_T^{(1)}}\right)\cdots P\!\left(e_{\lambda_T^{(n-1)}}\right)\cdot W\cdot P\!\left(e_{\lambda_S^{(n-1)}}\right)\cdots P\!\left(e_{\lambda_S^{(1)}}\right).\]
On the other hand, by using \cref{eq-6122217}, we have
\begin{align}
W&=P\!\left(e_{\lambda_T^{(1)}}\right)\cdots P\!\left(e_{\lambda_T^{(n-1)}}\right)\cdot W\cdot P\!\left(e_{\lambda_S^{(n-1)}}\right)\cdots P\!\left(e_{\lambda_S^{(1)}}\right)\nonumber\\
&=\bigoplus_{\mu\vdash_d n-1} I_{\mathcal{Q}_\mu} \otimes \left(e_{\lambda_T^{(1)}} \cdots e_{\lambda_T^{(n-1)}} \cdot X_\mu \cdot e_{\lambda_S^{(n-1)}} \cdots e_{\lambda_S^{(1)}}\right)\nonumber\\
&= \bigoplus_{\mu\vdash_d n-1} I_{\mathcal{Q}_\mu} \otimes \left(\ketbra{T^\downarrow}{T^\downarrow} \cdot X_\mu \cdot \ketbra{S^\downarrow}{S^\downarrow}\right),\label{eq-6130139}
\end{align}
where \cref{eq-6130139} is by using \cref{eq-61301399}. Note that $\ketbra{T^\downarrow}{T^\downarrow}$ is a linear projector on $\mathcal{P}_{\Sh(T^\downarrow)}$ and $\ketbra{S^\downarrow}{S^\downarrow}$ is a linear projector on $\mathcal{P}_{\Sh(S^\downarrow)}$. This means $\ketbra{T^\downarrow}{T^\downarrow}\cdot X_\mu \cdot \ketbra{S^\downarrow}{S^\downarrow}$ is non-zero only when $\mu=\Sh(T^\downarrow)=\Sh(S^\downarrow)$. Therefore, we have
\begin{equation}\label{eq-6130154}
W=\mathbbm{1}_{\Sh(T^\downarrow)=\Sh(S^\downarrow)}\cdot c\cdot I_{\mathcal{Q}_{\Sh(T^\downarrow)}}\otimes \ketbra{T^\downarrow}{S^\downarrow},
\end{equation}
for some number $c$. 

Now suppose $\Sh(T^\downarrow)=\Sh(S^\downarrow)$. Thus both $\ket{T^\downarrow}$ and $\ket{S^\downarrow}$ are in the same irreducible representation $\mathcal{P}_{\Sh(T^\downarrow)}$. Let $K\in\mathbb{C}\mathfrak{S}_{n-1}$ such that $K \ket{T^\downarrow}=\ket{S^\downarrow}$ (such $K$ always exists by the Jacobson density theorem~\cite{etingof2011introduction}). Then, $K$ also satisfies $K\ket{T}=\ket{S}$. To see this, we view  $\ket{T}$ and $\ket{S}$ as vectors in the restricted representation $\textup{Res}^{\mathfrak{S}_n}_{\mathfrak{S}_{n-1}}\mathcal{P}_\lambda$. By the definition of Young basis, they both are in the same irreducible representation $\mathcal{P}_{\Sh(T^\downarrow)}$ and correspond to $\ket{T^\downarrow}$ and $\ket{S^\downarrow}$, respectively. Therefore, by the definition of $K$, $K$ maps $\ket{T}$ to $\ket{S}$.
Then, consider the trace $\tr(P(K)\cdot W)$, where $P(K)$ is the action of $K$ on $\bigotimes_{i=1}^{n-1}\mathcal{H}_i$. On the one hand, by \cref{eq-6130154}, we have
\begin{align}
\tr(P(K)\cdot W)&=\tr\!\left(c\cdot I_{\mathcal{Q}_{\Sh(T^\downarrow)}}\otimes K\ketbra{T^\downarrow}{S^\downarrow}\right)=\tr\!\left(c\cdot I_{\mathcal{Q}_{\Sh(T^\downarrow)}}\otimes \ketbra{S^\downarrow}{S^\downarrow}\right)= c\cdot\dim(\mathcal{Q}_{\Sh(T^\downarrow)}).\nonumber
\end{align}
On the other hand, by \cref{eq-6130308}, we have
\begin{align}
\tr(P(K)\cdot W)&=\tr\!\left(\tr_{\mathcal{H}_n}(I_{\mathcal{Q}_\lambda}\otimes K\ketbra{T}{S})\right)=\tr\!\left(\tr_{\mathcal{H}_n}(I_{\mathcal{Q}_\lambda}\otimes \ketbra{S}{S})\right)=\dim(\mathcal{Q}_\lambda).\nonumber
\end{align}
Therefore, $c=\dim(\mathcal{Q}_\lambda)/\dim(\mathcal{Q}_{\Sh(T^\downarrow)})$, completing the proof upon substitution into \cref{eq-6130154}.
\end{proof}

\subsection{Combinatorics on Young diagrams}\label{sec-722129}
Here, we introduce some combinatorial results on Young diagrams.
The first result is by a direct calculation using the hook length formula.

\begin{lemma}\label{lemma-6161632}
Suppose $\mu,\nu\vdash n$, $\mu\neq\nu$ and $\mu,\nu$ are adjacent. Then
\[\frac{\dim(\mathcal{P}_{\mu})\dim(\mathcal{P}_\nu)}{\dim(\mathcal{P}_{\mu\cup\nu})\dim(\mathcal{P}_{\mu\cap\nu})}= \frac{n}{n+1}\left(1-\frac{1}{(c(\mu\setminus \nu)-c(\nu\setminus\mu))^2}\right),\]
where $\mu\setminus\nu$ contains only one box and $c(\mu\setminus \nu)$ is the axial coordinate of this box (and similarly for $\nu\setminus\mu$).
\end{lemma}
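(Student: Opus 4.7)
The plan is to prove the identity via a direct hook-length computation on the four Young diagrams $\lambda := \mu\cup\nu$ (of size $n+1$), $\mu, \nu$ (each of size $n$), and $\tau := \mu\cap\nu$ (of size $n-1$). Let $\square_\mu := \lambda\setminus\mu$ and $\square_\nu := \lambda\setminus\nu$ denote the two distinct corner boxes of $\lambda$ that are removed to obtain $\mu$ and $\nu$ respectively. Applying the hook length formula \cref{eq-6192141} to each of $\dim(\mathcal{P}_\mu), \dim(\mathcal{P}_\nu), \dim(\mathcal{P}_\lambda), \dim(\mathcal{P}_\tau)$, the factorials combine into $n!^2/((n+1)!(n-1)!)=n/(n+1)$, so the claim reduces to showing
\begin{equation*}
Q := \frac{\prod_{\square\in\lambda}h_\lambda(\square)\,\prod_{\square\in\tau}h_\tau(\square)}{\prod_{\square\in\mu}h_\mu(\square)\,\prod_{\square\in\nu}h_\nu(\square)} \;=\; 1 - \frac{1}{(c(\mu\setminus\nu)-c(\nu\setminus\mu))^2}.
\end{equation*}
Because $\square_\mu$ is a corner of both $\lambda$ and $\nu$, and $\square_\nu$ is a corner of both $\lambda$ and $\mu$, each has hook length $1$ in every diagram containing it, so these factors cancel and $Q$ further reduces to $\prod_{\square\in\tau}h_\lambda(\square)h_\tau(\square)/\bigl(h_\mu(\square)h_\nu(\square)\bigr)$.

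The key observation is that, for $\square\in\tau$, the hook $h_\mu(\square)$ equals $h_\lambda(\square)-1$ exactly when $\square_\mu$ lies in the arm or leg of $\square$ in $\lambda$ (i.e., $\square$ sits in the row of $\square_\mu$ strictly to its left, or in its column strictly above), and equals $h_\lambda(\square)$ otherwise; the analogous rule applies to $\square_\nu$, and the two decrements are independent. A short case split then shows that boxes affected by neither corner, or by exactly one of them, contribute a factor of $1$ (the changes telescope across $\lambda, \mu, \nu, \tau$), whereas a box $\square$ affected by \emph{both} corners contributes $h_\lambda(\square)(h_\lambda(\square)-2)/(h_\lambda(\square)-1)^2 = 1 - 1/(h_\lambda(\square)-1)^2$.

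WLOG assume $\square_\mu$ lies strictly above $\square_\nu$; since both are corners of $\lambda$ and row lengths are non-increasing, this forces $\square_\mu$ to lie strictly to the right of $\square_\nu$. Under this configuration the four potential row/column overlap patterns collapse to the single box $\square_\ast := (r_{\square_\mu}, c_{\square_\nu})\in\tau$: the diagonally opposite candidate $(r_{\square_\nu}, c_{\square_\mu})$ fails to lie in $\lambda$ precisely because $\square_\mu$ is a corner of $\lambda$, which is the main subtlety of the argument. Counting arm and leg of $\square_\ast$ inside $\lambda$ gives $h_\lambda(\square_\ast)-1 = (c_{\square_\mu}-c_{\square_\nu})+(r_{\square_\nu}-r_{\square_\mu}) = c(\square_\mu)-c(\square_\nu) = c(\nu\setminus\mu)-c(\mu\setminus\nu)$ under the axial-coordinate convention that $c(\square)$ is column minus row. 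Substituting into the single-box expression for $Q$ yields the claimed right-hand side. I expect the main obstacle to be the uniqueness of $\square_\ast$, which hinges on the corner hypothesis for $\square_\mu, \square_\nu$; the remaining bookkeeping of hook lengths is entirely routine.
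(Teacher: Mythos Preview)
Your proof is correct and follows essentially the same route as the paper: both arguments apply the hook length formula to the four diagrams, observe that only boxes $\square\in\tau$ whose hook in $\lambda$ contains \emph{both} removed corners contribute nontrivially, identify the unique such box (your $\square_\ast$ is the paper's $\widetilde{\square}$), and read off its hook length as $|c(\mu\setminus\nu)-c(\nu\setminus\mu)|+1$. The only cosmetic differences are that the paper organizes the computation via three ratios $\dim(\mathcal{P}_\mu)/\dim(\mathcal{P}_\tau)$, $\dim(\mathcal{P}_\nu)/\dim(\mathcal{P}_\tau)$, $\dim(\mathcal{P}_{\mu\cup\nu})/\dim(\mathcal{P}_\tau)$ expressed over the ``reverse hook'' sets $H_\tau(\square_\mu)$, $H_\tau(\square_\nu)$ (working upward from $\tau$), whereas you work downward from $\lambda$; and your naming convention $\square_\mu=\lambda\setminus\mu$ is opposite to the paper's $\square_\mu\in\mu\setminus\nu$, which is harmless since the final expression is symmetric in the two corners.
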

\begin{proof}
Let $\square_\mu$ be the box in $\mu\setminus\nu$ and $\square_\nu$ be the box in $\nu\setminus\mu$. Define $\tau\coloneqq \mu\cap\nu$. Denote by $H_\lambda(\square)$ the set of boxes in $\lambda$ that are directly to the left and above the box $\square$, including $\square$ itself (i.e., a hook in the reverse direction). 
By the hook length formula (see \cref{eq-6192141}), we can easily see that
\[\frac{\dim(\mathcal{P}_\mu)}{\dim(\mathcal{P}_{\tau})}=n\cdot\prod_{\square\in H_\tau(\square_{\mu})}\frac{h_{\tau}(\square)}{h_\tau(\square)+1}.\]
Similarly
\[\frac{\dim(\mathcal{P}_\nu)}{\dim(\mathcal{P}_{\tau})}=n\cdot\prod_{\square\in H_\tau(\square_{\nu})}\frac{h_{\tau}(\square)}{h_\tau(\square)+1}.\]
Then, note that $\square_\mu,\square_\nu$ must be at different rows and columns. Thus, there is exactly one box in $H_\tau(\square_\mu)\cap H_\tau(\square_\nu)$, which we denote by $\widetilde{\square}$. Then, we have
\[\frac{\dim(\mathcal{P}_{\mu\cup\nu})}{\dim(\mathcal{P}_\tau)}=n(n+1)\cdot\left(\prod_{\substack{\square\in H_\tau(\square_{\mu})\\\square\neq \widetilde{\square}}}\frac{h_{\tau}(\square)}{h_\tau(\square)+1}\right)\cdot \left(\prod_{\substack{\square\in H_\tau(\square_{\nu})\\\square\neq \widetilde{\square}}}\frac{h_{\tau}(\square)}{h_\tau(\square)+1}\right)\cdot \left(\frac{h_\tau(\widetilde{\square})}{h_\tau(\widetilde{\square})+2}\right).\]
Then, we have
\begin{align}
\frac{\dim(\mathcal{P}_{\mu})\dim(\mathcal{P}_\nu)}{\dim(\mathcal{P}_{\mu\cup\nu})\dim(\mathcal{P}_{\tau})}&=\frac{n}{n+1}\cdot \left(\frac{h_\tau(\widetilde{\square})}{h_\tau(\widetilde{\square})+1}\right)^2\cdot \frac{h_\tau(\widetilde{\square})+2}{h_\tau(\widetilde{\square})}\nonumber\\
&=\frac{n}{n+1}\cdot \left(1-\frac{1}{(h_\tau(\widetilde{\square})+1)^2}\right)\nonumber\\
&=\frac{n}{n+1}\left(1-\frac{1}{(c(\square_\mu)-c(\square_\nu))^2}\right),\label{eq-6200014}
\end{align}
where \cref{eq-6200014} is because $h_\tau(\widetilde{\square})+1=|c(\square_\mu)-c(\square_\nu)|$ (see, e.g., \cref{fig-6192328}).
\begin{figure}[ht]
    \centering
    \begin{subfigure}[b]{0.4\linewidth}
    \centering
    \includegraphics[width=1.0\linewidth]{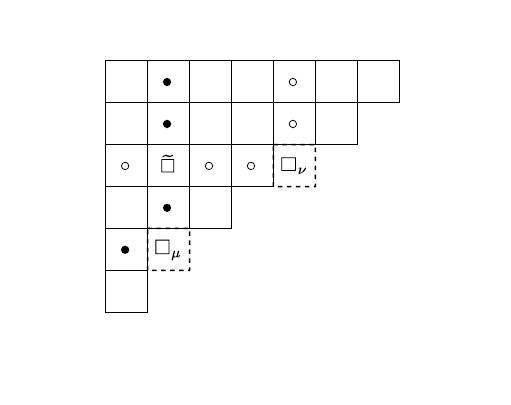}
    \vspace{-6mm}
    \caption{The sets $H_\tau(\square_\mu)$ and $H_\tau(\square_\nu)$.}\label{fig-6192327}
    \end{subfigure}
    \hspace{-0mm}
    \begin{subfigure}[b]{0.4\linewidth}
    \centering
    \includegraphics[width=1.0\linewidth]{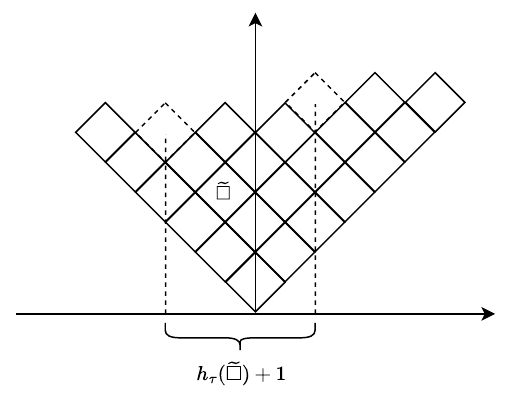}
    \vspace{-6mm}
    \caption{The axial distance between $\square_\mu$ and $\square_\nu$.}\label{fig-6192328}
    \end{subfigure}
    \vspace{-0mm}
    \caption{An example illustrating the idea in the proof of \cref{lemma-6161632}.}
\end{figure}
\end{proof}

Then, we introduce the following results.
While \cref{lemma-6161659} and \cref{lemma-6161700} are known in the literature (see, e.g., \cite{kosuda2003new}), we provide different elementary proofs here for completeness.
Our proofs are based on Kerov's interlacing sequences~\cite{kerov1993transition}.

\begin{lemma}\label{lemma-6161659}
Suppose $\nu\vdash n$, $\tau\nearrow \nu$, $\kappa \nearrow \nu$, $\tau\neq\kappa$. Then,
\[\sum_{\lambda:\nu\nearrow\lambda}\frac{\dim(\mathcal{P}_\lambda)}{\dim(\mathcal{P}_\nu)}\frac{1}{c(\lambda\setminus\nu)-c(\nu\setminus\tau)}\frac{1}{c(\lambda\setminus \nu)-c(\nu\setminus\kappa)}=0,\]
where $\lambda\setminus\nu$ contains only one box and $c(\lambda\setminus \nu)$ is the axial coordinate of this box (and similarly for $\nu\setminus\tau$ and $\nu\setminus\kappa$).
\end{lemma}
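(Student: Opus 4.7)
The plan is to recast the sum as a sum of residues of a rational function whose degree at infinity forces the sum to vanish, using Kerov's transition formula \cref{eq-6192042}.

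First, let $\alpha_1<\beta_1<\alpha_2<\cdots<\beta_{L-1}<\alpha_L$ be Kerov's interlacing sequences for $\nu$. The addable positions of $\nu$ correspond bijectively to the $\alpha_k$'s (with axial coordinate $\alpha_k$), and the removable positions correspond bijectively to the $\beta_j$'s. By hypothesis $\tau\ne\kappa$, so the removed boxes have distinct coordinates: write $c(\nu\setminus\tau)=\beta_s$ and $c(\nu\setminus\kappa)=\beta_t$ with $s\ne t$. By \cref{eq-6192042}, for the diagram $\lambda$ obtained from $\nu$ by adding a box at the $k$-th addable position,
\[
\frac{\dim(\mathcal{P}_\lambda)}{\dim(\mathcal{P}_\nu)}
=(n+1)\prod_{i=1}^{k-1}\frac{\alpha_k-\beta_i}{\alpha_k-\alpha_i}\prod_{i=k+1}^{L}\frac{\alpha_k-\beta_{i-1}}{\alpha_k-\alpha_i}
=(n+1)\cdot\frac{\prod_{i=1}^{L-1}(\alpha_k-\beta_i)}{\prod_{i\ne k}(\alpha_k-\alpha_i)},
\]
since the two partial $\beta$-products, reindexed, merge into the single product over all $L-1$ values of $i$.

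Substituting this into the sum and canceling the two factors $(\alpha_k-\beta_s)$ and $(\alpha_k-\beta_t)$ from the numerator against the two denominators $(\alpha_k-\beta_s)(\alpha_k-\beta_t)$ in the summand (this is where $s\ne t$ is essential), the sum becomes
\[
(n+1)\sum_{k=1}^{L}\frac{\prod_{i\in\{1,\ldots,L-1\}\setminus\{s,t\}}(\alpha_k-\beta_i)}{\prod_{i\ne k}(\alpha_k-\alpha_i)}.
\]
I then recognize this as $(n+1)\sum_{k=1}^{L}\operatorname{Res}_{x=\alpha_k}g(x)$, where
\[
g(x)=\frac{\prod_{i\in\{1,\ldots,L-1\}\setminus\{s,t\}}(x-\beta_i)}{\prod_{i=1}^{L}(x-\alpha_i)}.
\]
The numerator of $g$ has degree $L-3$ and the denominator has degree $L$, so $g(x)=O(x^{-3})$ as $x\to\infty$. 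Since all poles of $g$ are finite and simple (located at the distinct points $\alpha_1,\ldots,\alpha_L$) and the residue at infinity vanishes, the sum of all residues is zero, which gives the claim.

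The only slightly delicate step is verifying the algebraic identity that collapses the two Kerov product factors into a single product over all $\beta_i$; once this is in hand, the residue-at-infinity argument is immediate from the degree comparison $L-3<L-1$. I do not expect any nontrivial obstacle here — the hypothesis $\tau\ne\kappa$ is exactly what ensures the degree of the numerator drops by two (rather than one) after the cancellation, producing $O(x^{-3})$ behavior and hence a vanishing sum of residues.
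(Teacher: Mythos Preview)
Your proof is correct. Both you and the paper substitute Kerov's transition formula \cref{eq-6192042} and then use a degree-counting argument to conclude the sum vanishes; the underlying mechanism is the same. The execution differs in two minor respects. First, the paper applies a partial-fraction split
\[
\frac{1}{(x-\beta_s)(x-\beta_t)}=\frac{1}{\beta_s-\beta_t}\Big(\frac{1}{x-\beta_s}-\frac{1}{x-\beta_t}\Big)
\]
to reduce to the single-denominator identity of \cref{lemma-6181616}, which it then proves by showing a degree-$(L-2)$ polynomial in $\alpha_1$ has $L-1$ roots. You instead cancel both $\beta$-factors at once and invoke the residue theorem on the rational function $g(x)$, using $g(x)=O(x^{-3})$ at infinity. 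Your route is slightly more direct---it bypasses the auxiliary \cref{lemma-6181616} and the partial-fraction step---while the paper's version isolates a reusable one-parameter identity (\cref{lemma-6181616}) that it also needs later in the proof of \cref{lemma-6161700}.
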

\begin{proof}
Note that
\begin{align}
&\sum_{\lambda:\nu\nearrow\lambda}\frac{\dim(\mathcal{P}_\lambda)}{\dim(\mathcal{P}_\nu)}\frac{1}{c(\lambda\setminus\nu)-c(\nu\setminus\tau)}\frac{1}{c(\lambda\setminus \nu)-c(\nu\setminus\kappa)}\nonumber\\
=&\frac{1}{c(\nu\setminus\tau)-c(\nu\setminus\kappa)}\sum_{\lambda:\nu\nearrow\lambda}\frac{\dim(\mathcal{P}_\lambda)}{\dim(\mathcal{P}_\nu)}\left(\frac{1}{c(\lambda\setminus\nu)-c(\nu\setminus\tau)}-\frac{1}{c(\lambda\setminus \nu)-c(\nu\setminus\kappa)}\right)\nonumber
\end{align}
Therefore, it suffices to prove that for any $\tau\nearrow\nu$, we have
\[\sum_{\lambda:\nu\nearrow\lambda}\frac{\dim(\mathcal{P}_\lambda)}{\dim(\mathcal{P}_\nu)}\frac{1}{c(\lambda\setminus\nu)-c(\nu\setminus\tau)}=0.\]

Let $\bm{\alpha}=(\alpha_1,\ldots,\alpha_L)$ and $\bm{\beta}=(\beta_1,\ldots,\beta_{L-1})$ be the interlacing sequences of $\nu$. 
The box in $\nu\setminus \tau$ is a removable box of $\nu$. Thus this box is at $\beta_m$ for some $m$ such that $\beta_m=c(\nu\setminus\tau)$. 
Any $\lambda$ such that $\nu\nearrow\lambda$ is obtained by adding a box at an addable position of $\nu$, which corresponds to an $\alpha_{i}$ for $1\leq i\leq L$. Furthermore, for the $\lambda$ obtained by adding a box at $\alpha_i$, we know $c(\lambda\setminus \nu)=\alpha_i$.
Then, by \cref{eq-6192042}, we have
\[\sum_{\lambda:\nu\nearrow\lambda}\frac{\dim(\mathcal{P}_\lambda)}{\dim(\mathcal{P}_\nu)}\frac{1}{c(\lambda\setminus\nu)-c(\nu\setminus\tau)}=(n+1)\cdot \sum_{i=1}^L \frac{1}{\alpha_i-\beta_m} \prod_{j=1}^{i-1}\frac{\alpha_i-\beta_j}{\alpha_i-\alpha_j} \prod_{j=i+1}^L \frac{\alpha_i-\beta_{j-1}}{\alpha_i-\alpha_j},\]
which is $0$ by \cref{lemma-6181616}.
\end{proof}

\begin{lemma}\label{lemma-6161700}
Suppose $\nu\vdash n$, $\tau\nearrow \nu$. Then
\begin{equation}\label{eq-6200403}
\sum_{\lambda:\nu\nearrow\lambda}\frac{\dim(\mathcal{P}_\lambda)}{\dim(\mathcal{P}_\nu)}\frac{1}{(c(\lambda\setminus\nu)-c(\nu\setminus\tau))^2}=\frac{n+1}{n}\cdot\frac{\dim(\mathcal{P}_\nu)}{\dim(\mathcal{P}_\tau)},
\end{equation}
where $\lambda\setminus\nu$ contains only one box and $c(\lambda\setminus \nu)$ is the axial coordinate of this box (and similarly for $\nu\setminus\tau$).
\end{lemma}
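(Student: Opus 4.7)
\medskip

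\noindent\textbf{Proof proposal.} The plan is to translate the identity into a purely algebraic identity on Kerov's interlacing sequences and then resolve it via partial fractions. Let $\bm{\alpha}=(\alpha_1,\ldots,\alpha_L)$ and $\bm{\beta}=(\beta_1,\ldots,\beta_{L-1})$ be the interlacing sequences of $\nu$. Each $\lambda$ with $\nu\nearrow\lambda$ corresponds to adding a box at some $\alpha_i$, so that $c(\lambda\setminus\nu)=\alpha_i$; likewise the removable box $\nu\setminus\tau$ sits at some $\beta_m$, i.e.\ $c(\nu\setminus\tau)=\beta_m$. By \cref{eq-6192042}, the coefficient $\dim(\mathcal{P}_\lambda)/\dim(\mathcal{P}_\nu)$ equals $(n+1)\,r_i$, where
\[
r_i \;\coloneqq\; \frac{\prod_{j=1}^{L-1}(\alpha_i-\beta_j)}{\prod_{j\ne i}(\alpha_i-\alpha_j)},
\]
(the form of $r_i$ follows after recombining the two Kerov products exactly as was done in the proof of \cref{lemma-6161659}). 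Hence the LHS of the target identity equals $(n+1)\sum_{i=1}^L r_i/(\alpha_i-\beta_m)^2$.

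Next, by \cref{eq-6200309} applied to the removal $\nu\searrow\tau$ at $\beta_m$, the RHS equals
\[
\frac{n+1}{n}\cdot\frac{\dim(\mathcal{P}_\nu)}{\dim(\mathcal{P}_\tau)}
= (n+1)\cdot\frac{-\prod_{j\ne m}(\beta_m-\beta_j)}{\prod_{i=1}^L(\beta_m-\alpha_i)},
\]
where I used $(\alpha_L-\beta_m)(\beta_m-\alpha_1)\prod_{i=2}^{L-1}(\beta_m-\alpha_i) = -\prod_{i=1}^L(\beta_m-\alpha_i)$. Dividing out the common $(n+1)$, the whole lemma thus reduces to the rational-function identity
\[
\sum_{i=1}^L \frac{r_i}{(\alpha_i-\beta_m)^2} \;=\; -\,\frac{\prod_{j\ne m}(\beta_m-\beta_j)}{\prod_{i=1}^L(\beta_m-\alpha_i)}.
\]

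To prove this, I would introduce the rational function
\[
g(x) \;\coloneqq\; \frac{\prod_{j\ne m}(x-\beta_j)}{(x-\beta_m)\,\prod_{i=1}^L(x-\alpha_i)},
\]
which behaves like $O(x^{-3})$ as $x\to\infty$, so the sum of all its residues vanishes. A direct computation shows that the residue at $x=\alpha_i$ is exactly $r_i/(\alpha_i-\beta_m)^2$ (since the extra factor $(\alpha_i-\beta_m)$ appearing compared with $r_i$ cancels the removed $(x-\beta_m)$ in the denominator, then one more $(\alpha_i-\beta_m)$ comes from $1/(x-\beta_m)|_{x=\alpha_i}$), while the residue at $x=\beta_m$ equals $\prod_{j\ne m}(\beta_m-\beta_j)/\prod_{i=1}^L(\beta_m-\alpha_i)$. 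Setting the total to zero yields the displayed identity.

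The routine part is bookkeeping the signs and the factor $-1$ coming from rearranging Kerov's formula for $\dim(\mathcal{P}_\tau)/\dim(\mathcal{P}_\nu)$ into a compact product over all $\alpha_i$'s; the main conceptual step, which is the only nontrivial one, is recognizing the right rational function $g(x)$ whose residue theorem collapses the sum. With that choice the rest is a one-line verification, so I expect no serious obstacle beyond careful sign tracking.
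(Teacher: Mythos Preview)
Your proof is correct, and the residue computation checks out: writing $r_i=\prod_{j=1}^{L-1}(\alpha_i-\beta_j)/\prod_{j\ne i}(\alpha_i-\alpha_j)$ and noting that $\prod_{j\ne m}(\alpha_i-\beta_j)=\prod_{j=1}^{L-1}(\alpha_i-\beta_j)/(\alpha_i-\beta_m)$ gives exactly $\mathrm{Res}_{\alpha_i}g=r_i/(\alpha_i-\beta_m)^2$, and the pole at $\beta_m$ contributes the RHS with the correct sign.

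However, your route is genuinely different from the paper's. The paper reduces the lemma to the same algebraic identity in the interlacing data, but then proves that identity by \emph{induction on $L$}: the inductive step subtracts a suitable multiple of the $L$-term identity from the $(L{+}1)$-term identity and shows the difference vanishes by invoking \cref{lemma-6181616} (the companion identity $\sum_i r_i/(\alpha_i-\beta_m)=0$) twice. Your argument bypasses the induction entirely by recognizing a single rational function $g(x)$ whose residue sum collapses the identity in one stroke; this is cleaner and more conceptual, and it also makes transparent why \cref{lemma-6161659} and \cref{lemma-6161700} are siblings (replacing the factor $1/(x-\beta_m)$ in $g$ by $1$ yields the residue proof of \cref{lemma-6181616}). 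The paper's inductive approach, on the other hand, is more self-contained in that it never appeals to the residue theorem or to the equivalent partial-fractions fact that the residues of an $O(x^{-2})$ rational function sum to zero, so a reader unfamiliar with that principle can still follow it line by line.
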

\begin{proof}
Let $\bm{\alpha}=(\alpha_1,\ldots,\alpha_L)$ and $\bm{\beta}=(\beta_1,\ldots,\beta_{L-1})$ be the interlacing sequences of $\nu$. 
The box in $\nu\setminus \tau$ is a removable box of $\nu$. Thus this box is at $\beta_m$ for some $m$ such that $\beta_m=c(\nu\setminus\tau)$. 
Any $\lambda$ such that $\nu\nearrow\lambda$ is obtained by adding a box at an addable position of $\nu$, which corresponds to an $\alpha_{i}$ for $1\leq i\leq L$. Furthermore, for the $\lambda$ obtained by adding a box at $\alpha_i$, we know $c(\lambda\setminus \nu)=\alpha_i$.
Then, by \cref{eq-6192042} (applied to $\frac{\dim(\mathcal{P}_\lambda)}{\dim(\mathcal{P}_\nu)}$) and \cref{eq-6200309} (applied to $\frac{\dim(\mathcal{P}_\tau)}{\dim(\mathcal{P}_\nu)}$), \cref{eq-6200403} is equivalent to
\begin{align}
\sum_{i=1}^L \frac{1}{(\alpha_i-\beta_m)^2}\prod_{j=1}^{i-1}\frac{\alpha_i-\beta_j}{\alpha_i-\alpha_j}\prod_{j=i+1}^L\frac{\alpha_{i}-\beta_{j-1}}{\alpha_i-\alpha_j}
=\frac{1}{(\alpha_L-\beta_m)(\beta_m-\alpha_1)}\prod_{j=1}^{m-1}\frac{\beta_m-\beta_j}{\beta_m-\alpha_{j+1}}\prod_{j=m+1}^{L-1}\frac{\beta_m-\beta_j}{\beta_m-\alpha_j}.\label{eq-6200406}
\end{align}
Then, we prove \cref{eq-6200406} by induction on $L$. For the case $L=2$, $m$ must be $1$, and the holding of \cref{eq-6200406} can be checked by direct calculation. 
Now, suppose \cref{eq-6200406} holds for $L$. We want to prove it also holds for $L+1$, i.e., the following holds:
\begin{align}
\sum_{i=1}^{L+1} \frac{1}{(\alpha_i-\beta_m)^2}\prod_{j=1}^{i-1}\frac{\alpha_i-\beta_j}{\alpha_i-\alpha_j}\prod_{j=i+1}^{L+1}\frac{\alpha_{i}-\beta_{j-1}}{\alpha_i-\alpha_j}
=\frac{1}{(\alpha_{L+1}-\beta_m)(\beta_m-\alpha_1)}\prod_{j=1}^{m-1}\frac{\beta_m-\beta_j}{\beta_m-\alpha_{j+1}}\prod_{j=m+1}^{L}\frac{\beta_m-\beta_j}{\beta_m-\alpha_j}.\label{eq-6200407}
\end{align}

For the case of $L+1$, we can assume without loss of generality that $m\leq L-1$. This is because if $m=L$, then we consider the reversed interlacing sequences $\bm{\alpha}'=(\alpha'_1,\ldots,\alpha'_{L+1})$, $\bm{\beta}'= (\beta'_1,\ldots,\beta'_{L})$ such that $\alpha'_i=-\alpha_{L+2-i}$ and $\beta'_i=-\beta_{L+1-i}$. Then, it is easy to see that the holding of \cref{eq-6200407} on $\bm{\alpha},\bm{\beta}$ with $m=L$ is equivalent to that on $\bm{\alpha}'$, $\bm{\beta}'$ with $m=1$.

It is easy to see that
\[\textup{RHS of \cref{eq-6200407}} - \left(\frac{\beta_m-\beta_L}{\beta_m-\alpha_{L+1}}\times \textup{RHS of \cref{eq-6200406}} \right)=0.\]
By the induction hypothesis, it suffices to prove
\begin{equation}\label{eq-6202104}
 \textup{LHS of \cref{eq-6200407}} - \left(\frac{\beta_m-\beta_L}{\beta_m-\alpha_{L+1}}\times \textup{LHS of \cref{eq-6200406}}\right)=0.
\end{equation}
Then, note that the LHS of \cref{eq-6200407} can be written as
\begin{equation*}
\sum_{i=1}^L \frac{\alpha_i-\beta_L}{(\alpha_i-\beta_m)^2\cdot (\alpha_i-\alpha_{L+1})}\prod_{j=1}^{i-1}\frac{\alpha_i-\beta_j}{\alpha_i-\alpha_j}\prod_{j=i+1}^L\frac{\alpha_{i}-\beta_{j-1}}{\alpha_i-\alpha_j}+ 
\frac{1}{(\alpha_{L+1}-\beta_m)^2}\prod_{j=1}^{L}\frac{\alpha_{L+1}-\beta_j}{\alpha_{L+1}-\alpha_j}.
\end{equation*}
Therefore, the LHS of \cref{eq-6202104} is:
\begin{align}
&\sum_{i=1}^L \frac{\beta_L-\alpha_{L+1}}{(\alpha_i-\beta_m)(\beta_m-\alpha_{L+1})(\alpha_i-\alpha_{L+1})}\prod_{j=1}^{i-1}\frac{\alpha_i-\beta_j}{\alpha_i-\alpha_j}\prod_{j=i+1}^L\frac{\alpha_{i}-\beta_{j-1}}{\alpha_i-\alpha_j}+ 
\frac{1}{(\alpha_{L+1}-\beta_m)^2}\prod_{j=1}^{L}\frac{\alpha_{L+1}-\beta_j}{\alpha_{L+1}-\alpha_j}\nonumber\\
=&\frac{\beta_L-\alpha_{L+1}}{(\beta_m-\alpha_{L+1})^2} \Bigg(\sum_{i=1}^L \left(\frac{1}{\alpha_i-\beta_m}-\frac{1}{\alpha_i-\alpha_{L+1}}\right) \prod_{j=1}^{i-1}\frac{\alpha_i-\beta_j}{\alpha_i-\alpha_j}\prod_{j=i+1}^L\frac{\alpha_{i}-\beta_{j-1}}{\alpha_i-\alpha_j} 
-\frac{1}{\alpha_{L+1}-\alpha_{L}} \prod_{j=1}^{L-1}\frac{\alpha_{L+1}-\beta_j}{\alpha_{L+1}-\alpha_j}\Bigg)\nonumber\\
=&-\frac{\beta_L-\alpha_{L+1}}{(\beta_m-\alpha_{L+1})^2} \Bigg(\sum_{i=1}^L \frac{1}{\alpha_i-\alpha_{L+1}} \prod_{j=1}^{i-1}\frac{\alpha_i-\beta_j}{\alpha_i-\alpha_j}\prod_{j=i+1}^L\frac{\alpha_{i}-\beta_{j-1}}{\alpha_i-\alpha_j} 
+\frac{1}{\alpha_{L+1}-\alpha_{L}} \prod_{j=1}^{L-1}\frac{\alpha_{L+1}-\beta_j}{\alpha_{L+1}-\alpha_j}\Bigg)\label{eq-6202148}\\
=& -\frac{\beta_L-\alpha_{L+1}}{(\beta_m-\alpha_{L+1})^2} \Bigg(\sum_{i=1}^{L+1}\frac{1}{\alpha_i-\beta_{L}}\prod_{j=1}^{i-1}\frac{\alpha_i-\beta_j}{\alpha_i-\alpha_j}\prod_{j=i+1}^{L+1}\frac{\alpha_i-\beta_{j-1}}{\alpha_i-\alpha_j}\Bigg)\nonumber\\
=&\,0,\label{eq-6202149}
\end{align}
where \cref{eq-6202148} is by \cref{lemma-6181616} and \cref{eq-6202149} is also by \cref{lemma-6181616}.
\end{proof}

\subsection{Auxiliary lemmas}
The following \cref{lemma-6181616} is repeatedly used in the proofs of \cref{lemma-6161659} and \cref{lemma-6161700}.
\begin{lemma}\label{lemma-6181616}
Suppose $\alpha_1,\ldots,\alpha_L,\beta_1,\ldots,\beta_{L-1}$ are pairwise distinct. Let $1\leq m\leq L-1$. Then we have
\begin{equation}\label{eq-6200210}
\sum_{i=1}^L \frac{1}{\alpha_i-\beta_m} \prod_{j=1}^{i-1}\frac{\alpha_i-\beta_j}{\alpha_i-\alpha_j} \prod_{j=i+1}^L \frac{\alpha_i-\beta_{j-1}}{\alpha_i-\alpha_j}=0.
\end{equation}
\end{lemma}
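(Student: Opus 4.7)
The plan is to recognize the sum as a standard Lagrange interpolation / partial fraction identity. First, I would combine the two products in the $i$-th summand into a single rational expression. The first product contributes factors $(\alpha_i-\beta_j)$ for $j=1,\ldots,i-1$, while the second product (via the $\beta_{j-1}$ reindexing) contributes $(\alpha_i-\beta_j)$ for $j=i,\ldots,L-1$. Together these give $\prod_{j=1}^{L-1}(\alpha_i-\beta_j)$ in the numerator, with $\prod_{j\neq i}(\alpha_i-\alpha_j)$ in the denominator. So the $i$-th term collapses to
\begin{equation*}
\frac{\prod_{j=1}^{L-1}(\alpha_i-\beta_j)}{(\alpha_i-\beta_m)\prod_{j\neq i}(\alpha_i-\alpha_j)} = \frac{\prod_{j\in\{1,\ldots,L-1\}\setminus\{m\}}(\alpha_i-\beta_j)}{\prod_{j\neq i}(\alpha_i-\alpha_j)},
\end{equation*}
using the cancellation with the $(\alpha_i-\beta_m)$ factor (which is legitimate since all the $\alpha_i,\beta_j$ are assumed pairwise distinct).

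Next, I would define the polynomial
\begin{equation*}
P(x) \coloneqq \prod_{j\in\{1,\ldots,L-1\}\setminus\{m\}}(x-\beta_j),
\end{equation*}
which has degree exactly $L-2$. The sum in \cref{eq-6200210} then equals
\begin{equation*}
\sum_{i=1}^L \frac{P(\alpha_i)}{\prod_{j\neq i}(\alpha_i-\alpha_j)}.
\end{equation*}
This is precisely the coefficient of $x^{L-1}$ in the Lagrange interpolation of $P$ through the nodes $\alpha_1,\ldots,\alpha_L$ (equivalently, it is the value at $\alpha_1,\ldots,\alpha_L$ of the $(L-1)$-st divided difference of $P$). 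Since $\deg P = L-2 < L-1$, this coefficient vanishes, yielding $0$ as required.

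The argument is essentially a bookkeeping step followed by a standard identity, so there is no serious obstacle beyond correctly identifying the indices. The only place one has to be slightly careful is verifying that the $\beta$-indices appearing in the two sub-products of the $i$-th summand really do partition $\{1,\ldots,L-1\}$; this is the reindexing $j\mapsto j-1$ in the second product that makes everything line up. Once this is observed, the polynomial degree argument finishes the proof immediately.
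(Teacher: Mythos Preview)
Your proof is correct and follows essentially the same approach as the paper. Both begin by collapsing the two products so that the $i$-th summand becomes $\prod_{j\neq m}(\alpha_i-\beta_j)\big/\prod_{j\neq i}(\alpha_i-\alpha_j)$, and both then finish with a polynomial degree argument. The only cosmetic difference is in how that last step is phrased: you invoke Lagrange interpolation to identify the sum as the $x^{L-1}$-coefficient of the interpolant of a degree-$(L-2)$ polynomial, while the paper multiplies through by $\prod_{j\geq 2}(\alpha_1-\alpha_j)$, views the result as a degree-$(L-2)$ polynomial in $\alpha_1$, and exhibits $L-1$ roots $\alpha_2,\ldots,\alpha_L$. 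These are two presentations of the same identity.
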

\begin{proof}
Note that the LHS of \cref{eq-6200210} can be written as
\[\sum_{i=1}^L \prod_{\substack{j=1\\j\neq m}}^{L-1}(\alpha_i-\beta_j)\prod_{\substack{j=1\\j\neq i}}^{L}\frac{1}{\alpha_i-\alpha_j}.\]
Then we multiply $\prod_{j=2}^{L}(\alpha_1-\alpha_j)$ on it and obtain
\begin{equation}\label{eq-6200238}
\prod_{\substack{j=1\\j\neq m}}^{L-1}(\alpha_1-\beta_j)-\sum_{i=2}^L \prod_{\substack{j=2\\j\neq i}}^L (\alpha_1-\alpha_j)\cdot  \prod_{\substack{j=1\\j\neq m}}^{L-1}(\alpha_i-\beta_j)\prod_{\substack{j=2\\j\neq i}}^L\frac{1}{\alpha_i-\alpha_j}.
\end{equation}
Then, we consider it as a polynomial in variable $\alpha_1$ with parameters $\alpha_2,\ldots,\alpha_{L},\beta_1,\ldots,\beta_{L-1}$. This polynomial is of degree $L-2$. Then, if we set $\alpha_1=\alpha_k$ for a $2\leq k\leq L$, we can see that \cref{eq-6200238} becomes
\begin{align}
&\prod_{\substack{j=1\\j\neq m}}^{L-1}(\alpha_k-\beta_j)- \prod_{\substack{j=2\\j\neq k}}^L (\alpha_k-\alpha_j)\cdot  \prod_{\substack{j=1\\j\neq m}}^{L-1}(\alpha_k-\beta_j)\prod_{\substack{j=2\\j\neq k}}^L\frac{1}{\alpha_k-\alpha_j}\nonumber\\
=&\prod_{\substack{j=1\\j\neq m}}^{L-1}(\alpha_k-\beta_j)-\prod_{\substack{j=1\\j\neq m}}^{L-1}(\alpha_k-\beta_j)\nonumber\\
=&\,0.\nonumber
\end{align}
Therefore, this polynomial has $L-1$ roots $\alpha_2,\ldots,\alpha_L$, which means it must be the $0$ polynomial.
\end{proof}

The following \cref{prop-5121539} provides a representation-theoretic expression of the Haar moment.
This result was also used in the study of entanglement and bipartite quantum state (see, e.g., \cite{matsumoto2007universal,chen2024local}).
\begin{lemma}\label{prop-5121539}
Let $C_U$ be that defined in \cref{def-690411}. We have
\begin{align}
\E_{U}[C_{U}]
&=\bigoplus_{\lambda \vdash_d n+1} \frac{1}{\dim(\mathcal{Q}_\lambda)} I_{\mathcal{Q}_\lambda}\otimes I_{\mathcal{Q}_\lambda}\otimes \kettbbra{I_{\mathcal{P}_\lambda}}{I_{\mathcal{P}_\lambda}}.\nonumber
\end{align}
\end{lemma}
\begin{proof}
Note that $C_{U}=\kettbbra{U}{U}^{\otimes n+1}$ is an $(n+1)$-comb on $(\mathcal{H}_1,\mathcal{H}_2,\ldots,\mathcal{H}_{2n+2})$. We can easily check the following facts.
\begin{fact}\label{fact-5121815}
$\E_U[C_{U}]$ commutes with the action of the group $\mathbb{U}_d\times \mathbb{U}_d$, i.e., for $(V,W)\in\mathbb{U}_d\times\mathbb{U}_d$:
\[(V\otimes W)^{\otimes n+1} \E_U[C_{U}]=\E_U[C_{U}](V\otimes W)^{\otimes n+1}.\]
\end{fact}
\begin{proof}[Proof of \cref{fact-5121815}]
\begin{align}
(V\otimes W)^{\otimes n+1} \E_U[C_{U}] &=\E_U\!\left[(V\otimes W\kettbbra{U}{U})^{\otimes n+1}\right]\nonumber\\
&=\E_U\!\left[\kettbbra{VU W^{\textup{T}}}{U}^{\otimes n+1}\right]\nonumber\\
&=\E_{U'}\!\left[\kettbbra{U'}{V^\dag U' W^*}^{\otimes n+1}\right]\label{eq-692323}\\
&=\E_{U'}\!\left[(\kettbbra{U'}{U'} V\otimes W)^{\otimes n+1}\right]\nonumber\\
&=\E_U[C_{U}](V\otimes W)^{\otimes n+1},\nonumber
\end{align}
where \cref{eq-692323} is because $U$ and $U'$ are Haar random unitary.
\end{proof}

\begin{fact}\label{fact-5121847}
$\E_U[C_{U}]$ is invariant under simultaneous permutation, i.e., for any $\pi\in\mathfrak{S}_{n+1}$, let $P^{\textup{odd}}(\pi)$ be the action of $\pi$ on $\mathcal{H}_1\otimes \mathcal{H}_3\otimes \cdots \mathcal{H}_{2n+1}\cong(\mathbb{C}^d)^{\otimes n+1}$ and $P^{\textup{even}}(\pi)$ be that on $\mathcal{H}_2\otimes \mathcal{H}_4\otimes \cdots \mathcal{H}_{2n+2}\cong(\mathbb{C}^d)^{\otimes n+1}$, then
\[P^{\textup{even}}(\pi)\otimes P^{\textup{odd}}(\pi)\E_U[C_{U}]=\E_U[C_{U}]P^{\textup{even}}(\pi)\otimes P^{\textup{odd}}(\pi)=\E_U[C_{U}].\]
\end{fact}
\begin{proof}[Proof of \cref{fact-5121847}]
\[P^{\textup{even}}(\pi)\otimes P^{\textup{odd}}(\pi)\E_U[C_{U}]=\E_U\!\left[P^{\textup{even}}(\pi)\otimes P^{\textup{odd}}(\pi)\kettbbra{U}{U}^{\otimes n+1}\right]=\E_U\!\left[\kettbbra{U}{U}^{\otimes n+1}\right]=\E_U[C_{U}].\]
The other direction is similar.
\end{proof}

Note that we have the following decomposition of the space 
\begin{align}
(\mathcal{H}_1\otimes \mathcal{H}_3 \otimes \cdots\otimes \mathcal{H}_{2n+1}) \otimes (\mathcal{H}_2\otimes \mathcal{H}_4\otimes \cdots \otimes \mathcal{H}_{2n+2})  \cong &\,\,(\mathbb{C}^d)^{\otimes n+1}\otimes (\mathbb{C}^d)^{\otimes n+1}\nonumber\\
\stackrel{\mathbb{U}_d\times \mathbb{U}_d\times \mathfrak{S}_{n+1}\times \mathfrak{S}_{n+1}}{\cong} & \bigoplus_{\lambda,\mu\vdash_d n+1} \mathcal{Q}_\lambda\otimes \mathcal{Q}_\mu\otimes \mathcal{P}_\lambda\otimes \mathcal{P}_\mu.\nonumber
\end{align}

Therefore, by \cref{fact-5121815} and Schur's lemma, we have
\[\E_U[C_{U}]=\bigoplus_{\lambda,\mu\vdash_d n+1} I_{\mathcal{Q}_\lambda}\otimes I_{\mathcal{Q}_\mu} \otimes M_{\lambda,\mu},\]
for some $M_{\lambda,\mu}\in\mathcal{L}(\mathcal{P}_\lambda\otimes \mathcal{P}_\mu)$. 

On the other hand, letting $P_{\avg}=\frac{1}{(n+1)!}\sum_{\pi\in\mathfrak{S}_{n+1}} P^{\textup{even}}(\pi)\otimes P^{\textup{odd}}(\pi)$, we have
\begin{align}
P_{\avg} &=\bigoplus_{\lambda,\mu\vdash_d n+1} I_{\mathcal{Q}_\lambda}\otimes I_{\mathcal{Q}_\mu}\otimes\frac{1}{(n+1)!}\sum_{\pi\in\mathfrak{S}_{n+1}} P_\lambda(\pi)\otimes P_\mu(\pi)\\
&=\bigoplus_{\lambda,\mu\vdash_d n+1} I_{\mathcal{Q}_\lambda}\otimes I_{\mathcal{Q}_\mu} \otimes \frac{1}{(n+1)!}\sum_{\pi\in\mathfrak{S}_{n+1}}P_\lambda(\pi)\otimes P^*_\mu(\pi)\label{eq-5121832}\\
&=\bigoplus_{\lambda\vdash_d n+1} I_{\mathcal{Q}_\lambda}\otimes I_{\mathcal{Q}_\lambda} \otimes \frac{1}{\dim(\mathcal{P}_\lambda)} \kettbbra{I_{\mathcal{P}_\lambda}}{I_{\mathcal{P}_\lambda}},\label{eq-5121835}
\end{align}
where $P_{\lambda}(\pi)$ is the action of $\pi$ on $\mathcal{P}_\lambda$, \cref{eq-5121832} is because $P_{\mu}(\pi)$ is a real-valued matrix on the Young basis, \cref{eq-5121835} is because the subspace in $\mathcal{P}_\lambda\otimes \mathcal{P}_\mu$ that is invariant under $P_{\lambda}(\pi)\otimes P_{\mu}^*(\pi)$ is $\textup{span}(\kett{I_{\mathcal{P}_\lambda}})$ when $\lambda=\mu$ and the trivial subspace $\{0\}$ when $\lambda\neq \mu$.

Then, \cref{fact-5121847} implies that
$P_{\avg} \E_U[C_{U}]P_{\avg}=\E_U[C_{U}]$, which means
\[\E_U[C_{U}]=\bigoplus_{\lambda\vdash_d n+1} z_{\lambda}\cdot  I_{\mathcal{Q}_\lambda}\otimes I_{\mathcal{Q}_\lambda}\otimes  \kettbbra{I_{\mathcal{P}_\lambda}}{I_{\mathcal{P}_\lambda}},\]
for some $z_{\lambda}\in\mathbb{C}$.

Moreover, since $\tr_{2,4,\ldots,2n+2}(C_{U})=I^{\otimes n+1}$, we have
\begin{align}
I^{\otimes n+1}&= \bigoplus_{\lambda\vdash_d n+1} z_\lambda \cdot \tr_{\mathcal{Q}_\lambda}(I_{\mathcal{Q}_\lambda})\cdot I_{\mathcal{Q}_\lambda}\otimes \tr_{\mathcal{P}_\lambda}(\kettbbra{I_{\mathcal{P}_\lambda}}{I_{\mathcal{P}_\lambda}})\\
&=\bigoplus_{\lambda\vdash_d n+1}z_\lambda\cdot \dim(\mathcal{Q}_\lambda)\cdot I_{\mathcal{Q}_\lambda} \otimes I_{\mathcal{P}_\lambda},
\end{align}
which means $z_\lambda=\frac{1}{\dim(\mathcal{Q}_\lambda)}$. Therefore,
\[\E_U[C_{U}]=\bigoplus_{\lambda \vdash_d n+1}\frac{1}{\dim(\mathcal{Q}_\lambda)}\cdot I_{\mathcal{Q}_\lambda}\otimes I_{\mathcal{Q}_\lambda} \otimes \kettbbra{I_{\mathcal{P}_\lambda}}{I_{\mathcal{P}_\lambda}}.\]
\end{proof}

\begin{lemma}\label{fact-5122103}
For any positive semidefinite matrix $M$ and vector $\ket{\psi}$ such that $\ket{\psi}\in\supp(M)$, we have
\[M\sqsupseteq \ketbra{\psi}{\psi}\Longleftrightarrow 1\geq \bra{\psi}M^{-1}\ket{\psi},\]
where $M^{-1}$ is the pseudo-inverse of $M$.
\end{lemma}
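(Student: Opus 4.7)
The plan is to reduce the operator inequality to the elementary rank-one fact $I\sqsupseteq \ket{\phi}\bra{\phi}\iff \braket{\phi}{\phi}\leq 1$ by an appropriate change of variables that absorbs $M$ into $\ket{\psi}$.

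First, I would introduce the auxiliary vector $\ket{\phi}\coloneqq M^{-1/2}\ket{\psi}$, where $M^{-1/2}$ denotes the pseudo-inverse of $M^{1/2}$ (equivalently, the Moore--Penrose square root of $M^{-1}$). Since $\ket{\psi}\in\supp(M)=\supp(M^{1/2})$, the pseudo-inverse acts as a genuine inverse on $\ket{\psi}$, so $M^{1/2}\ket{\phi}=\ket{\psi}$ and consequently $\ket{\psi}\bra{\psi}=M^{1/2}\ket{\phi}\bra{\phi}M^{1/2}$. In particular $\ket{\phi}\in\supp(M)$ as well, and
\[
\braket{\phi}{\phi}=\bra{\psi}M^{-1/2}M^{-1/2}\ket{\psi}=\bra{\psi}M^{-1}\ket{\psi}.
\]

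Next, I would use the fact that both $M$ and $\ket{\psi}\bra{\psi}$ have their supports contained in $\supp(M)$, so the inequality $M\sqsupseteq \ket{\psi}\bra{\psi}$ is equivalent to its restriction to $\supp(M)$. On this subspace $M$ is honestly positive definite, and conjugation by $M^{-1/2}$ is an invertible congruence that preserves the L\"owner order. Applying this congruence transforms the inequality into
\[
P_{\supp(M)}=M^{-1/2}MM^{-1/2}\sqsupseteq M^{-1/2}\ket{\psi}\bra{\psi}M^{-1/2}=\ket{\phi}\bra{\phi}.
\]
Since $\ket{\phi}\in\supp(M)$, this final inequality is just $I\sqsupseteq \ket{\phi}\bra{\phi}$ on the relevant subspace, which is equivalent to $\braket{\phi}{\phi}\leq 1$. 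Combined with the identity $\braket{\phi}{\phi}=\bra{\psi}M^{-1}\ket{\psi}$ from the first step, this yields the claim.

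The only delicate point — and the main obstacle to watch out for — is to ensure that the pseudo-inverse manipulations are legitimate when $M$ is merely PSD: one must check that restricting to $\supp(M)$ loses no information (which uses $\ket{\psi}\bra{\psi}$ being supported there, a direct consequence of $\ket{\psi}\in\supp(M)$), and that $M^{-1/2}$ is a bona fide invertible operator on that restricted subspace. Once this bookkeeping is in place, the proof is a one-line conjugation.
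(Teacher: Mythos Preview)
Your proposal is correct and is essentially identical to the paper's own proof: both conjugate by $M^{-1/2}$ to reduce $M\sqsupseteq\ketbra{\psi}{\psi}$ to $I_{\supp(M)}\sqsupseteq M^{-1/2}\ketbra{\psi}{\psi}M^{-1/2}$, and then observe that this rank-one inequality is equivalent to $\bra{\psi}M^{-1}\ket{\psi}\leq 1$. If anything, you are slightly more explicit about the pseudo-inverse bookkeeping than the paper is.
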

\begin{proof}
\[M\sqsupseteq \ketbra{\psi}{\psi}\Longleftrightarrow I_{\supp(M)} \sqsupseteq M^{-1/2}\ketbra{\psi}{\psi} M^{-1/2},\]
where $M^{-1/2}$ is the pseudo-inverse of $M^{1/2}$. Then
\[I_{\supp(M)} \sqsupseteq M^{-1/2}\ketbra{\psi}{\psi} M^{-1/2}\Longleftrightarrow 1\geq \tr(M^{-1/2}\ketbra{\psi}{\psi} M^{-1/2}) = \bra{\psi}M^{-1}\ket{\psi}.\]
\end{proof}

\section*{Acknowledgment}

We thank John Wright and Ewin Tang for pointing out an error in our previous result on the hardness of unitary controlization and for directing us to the relevant reference~\cite{sheridan2009approximating}. We also thank Qisheng Wang for helpful discussions. Finally, we thank the anonymous reviewers for their valuable comments and suggestions.

The work of Zhicheng Zhang was supported in part by the Australian Research Council under Grant DP250102952.

\bibliographystyle{alpha}
\bibliography{main}

\newcommand{\etalchar}[1]{$^{#1}$}
\begin{thebibliography}{vACGN23}

\bibitem[AKN98]{AKN98}
Dorit Aharonov, Alexei Kitaev, and Noam Nisan.
\newblock Quantum circuits with mixed states.
\newblock In {\em Proceedings of the thirtieth annual ACM symposium on Theory of computing}, pages 20--30, 1998.

\bibitem[Amb00]{ambainis2000quantum}
Andris Ambainis.
\newblock Quantum lower bounds by quantum arguments.
\newblock In {\em Proceedings of the thirty-second annual ACM symposium on Theory of computing}, pages 636--643, 2000.

\bibitem[BCH06]{bacon2006efficient}
Dave Bacon, Isaac~L Chuang, and Aram~W Harrow.
\newblock Efficient quantum circuits for schur and clebsch-gordan transforms.
\newblock {\em Physical review letters}, 97(17):170502, 2006.

\bibitem[BCH07]{bacon2007quantum}
Dave Bacon, Isaac~L Chuang, and Aram~W Harrow.
\newblock The quantum schur and clebsch-gordan transforms: I. efficient qudit circuits.
\newblock In {\em Proceedings of the eighteenth annual ACM-SIAM symposium on Discrete algorithms}, pages 1235--1244, 2007.

\bibitem[BYMQ25]{brzic2025higher}
Vanessa Brzi{\'c}, Satoshi Yoshida, Mio Murao, and Marco~T{\'u}lio Quintino.
\newblock Higher-order quantum computing with known input states.
\newblock {\em arXiv preprint arXiv:2510.20530}, 2025.

\bibitem[CDP08]{chiribella2008quantum}
Giulio Chiribella, G~Mauro D’Ariano, and Paolo Perinotti.
\newblock Quantum circuit architecture.
\newblock {\em Physical review letters}, 101(6):060401, 2008.

\bibitem[CDP09]{chiribella2009theoretical}
Giulio Chiribella, Giacomo~Mauro D’Ariano, and Paolo Perinotti.
\newblock Theoretical framework for quantum networks.
\newblock {\em Physical Review A—Atomic, Molecular, and Optical Physics}, 80(2):022339, 2009.

\bibitem[CML{\etalchar{+}}24]{chen2024quantum}
Yu-Ao Chen, Yin Mo, Yingjian Liu, Lei Zhang, and Xin Wang.
\newblock Quantum algorithm for reversing unknown unitary evolutions.
\newblock {\em arXiv preprint arXiv:2403.04704}, 2024.

\bibitem[CSM23]{cotler2023information}
Jordan Cotler, Thomas Schuster, and Masoud Mohseni.
\newblock Information-theoretic hardness of out-of-time-order correlators.
\newblock {\em Physical Review A}, 108(6):062608, 2023.

\bibitem[CSST10]{ceccherini2010representation}
Tullio Ceccherini-Silberstein, Fabio Scarabotti, and Filippo Tolli.
\newblock {\em Representation theory of the symmetric groups: the Okounkov-Vershik approach, character formulas, and partition algebras}, volume 121.
\newblock Cambridge University Press, 2010.

\bibitem[CWLY23]{chen2023unitarity}
Kean Chen, Qisheng Wang, Peixun Long, and Mingsheng Ying.
\newblock Unitarity estimation for quantum channels.
\newblock {\em IEEE Transactions on Information Theory}, 69(8):5116--5134, 2023.

\bibitem[CWZ24]{chen2024local}
Kean Chen, Qisheng Wang, and Zhicheng Zhang.
\newblock Local test for unitarily invariant properties of bipartite quantum states.
\newblock {\em arXiv preprint arXiv:2404.04599}, 2024.

\bibitem[EGH{\etalchar{+}}11]{etingof2011introduction}
Pavel~I Etingof, Oleg Golberg, Sebastian Hensel, Tiankai Liu, Alex Schwendner, Dmitry Vaintrob, and Elena Yudovina.
\newblock {\em Introduction to representation theory}, volume~59.
\newblock American Mathematical Soc., 2011.

\bibitem[EHM{\etalchar{+}}23]{ebler2023optimal}
Daniel Ebler, Micha{\l} Horodecki, Marcin Marciniak, Tomasz M{\l}ynik, Marco~T{\'u}lio Quintino, and Micha{\l} Studzi{\'n}ski.
\newblock Optimal universal quantum circuits for unitary complex conjugation.
\newblock {\em IEEE Transactions on Information Theory}, 69(8):5069--5082, 2023.

\bibitem[FH13]{fulton2013representation}
William Fulton and Joe Harris.
\newblock {\em Representation Theory: A First Course}, volume 129 of {\em Graduate Texts in Mathematics}.
\newblock Springer, 2013.

\bibitem[FK18]{fefferman2015quantum}
Bill Fefferman and Shelby Kimmel.
\newblock {Quantum vs. Classical Proofs and Subset Verification}.
\newblock In {\em 43rd International Symposium on Mathematical Foundations of Computer Science (MFCS 2018)}, volume 117, pages 22:1--22:23, 2018.

\bibitem[Ful97]{fulton1997young}
William Fulton.
\newblock {\em Young tableaux: with applications to representation theory and geometry}.
\newblock Number~35 in London Mathematical Society Student Texts. Cambridge University Press, 1997.

\bibitem[FVDG02]{fuchs2002cryptographic}
Christopher~A Fuchs and Jeroen Van De~Graaf.
\newblock Cryptographic distinguishability measures for quantum-mechanical states.
\newblock {\em IEEE transactions on information theory}, 45(4):1216--1227, 2002.

\bibitem[GO24]{grinko2024linear}
Dmitry Grinko and Maris Ozols.
\newblock Linear programming with unitary-equivariant constraints.
\newblock {\em Communications in Mathematical Physics}, 405(12):278, 2024.

\bibitem[GP22]{gilyen2022improved}
Andr{\'a}s Gily{\'e}n and Alexander Poremba.
\newblock Improved quantum algorithms for fidelity estimation.
\newblock {\em arXiv preprint arXiv:2203.15993}, 2022.

\bibitem[GST24]{gavorova2024topological}
Zuzana Gavorov{\'a}, Matan Seidel, and Yonathan Touati.
\newblock Topological obstructions to quantum computation with unitary oracles.
\newblock {\em Physical Review A}, 109(3):032625, 2024.

\bibitem[GYMO25]{grinko2025sequential}
Dmitry Grinko, Satoshi Yoshida, Mio Murao, and Maris Ozols.
\newblock Sequential quantum processes with group symmetries.
\newblock {\em arXiv preprint arXiv:2510.07100}, 2025.

\bibitem[Har05]{harrow2005applications}
Aram~W Harrow.
\newblock Applications of coherent classical communication and the schur transform to quantum information theory.
\newblock {\em arXiv preprint quant-ph/0512255}, 2005.

\bibitem[HKOT23]{haah2023query}
Jeongwan Haah, Robin Kothari, Ryan O’Donnell, and Ewin Tang.
\newblock Query-optimal estimation of unitary channels in diamond distance.
\newblock In {\em 2023 IEEE 64th Annual Symposium on Foundations of Computer Science (FOCS)}, pages 363--390. IEEE, 2023.

\bibitem[HLB{\etalchar{+}}24]{huang2024learning}
Hsin-Yuan Huang, Yunchao Liu, Michael Broughton, Isaac Kim, Anurag Anshu, Zeph Landau, and Jarrod~R McClean.
\newblock Learning shallow quantum circuits.
\newblock In {\em Proceedings of the 56th Annual ACM Symposium on Theory of Computing}, pages 1343--1351, 2024.

\bibitem[Ker93]{kerov1993transition}
Sergei~V Kerov.
\newblock Transition probabilities for continual young diagrams and the markov moment problem.
\newblock {\em Functional Analysis and its Applications}, 27(2):104--117, 1993.

\bibitem[Ker00]{kerov2000anisotropic}
Sergei~Vasil'evich Kerov.
\newblock Anisotropic young diagrams and jack symmetric functions.
\newblock {\em Functional Analysis and Its Applications}, 34:41--51, 2000.

\bibitem[Kit95]{Kitaev95}
A~Yu Kitaev.
\newblock Quantum measurements and the abelian stabilizer problem.
\newblock {\em arXiv preprint quant-ph/9511026}, 1995.

\bibitem[Kos03]{kosuda2003new}
Masashi Kosuda.
\newblock A new proof for some relations among axial distances and hook-lengths.
\newblock {\em Tokyo Journal of Mathematics}, 26(1):199--228, 2003.

\bibitem[M{\'e}l17]{meliot2017representation}
Pierre-Lo{\"{i}}c M{\'e}liot.
\newblock {\em Representation theory of symmetric groups}.
\newblock Chapman and Hall/CRC, 2017.

\bibitem[MGE11]{magesan2011scalable}
Easwar Magesan, Jay~M Gambetta, and Joseph Emerson.
\newblock Scalable and robust randomized benchmarking of quantum processes.
\newblock {\em Physical review letters}, 106(18):180504, 2011.

\bibitem[MGE12]{magesan2012characterizing}
Easwar Magesan, Jay~M Gambetta, and Joseph Emerson.
\newblock Characterizing quantum gates via randomized benchmarking.
\newblock {\em Physical Review A—Atomic, Molecular, and Optical Physics}, 85(4):042311, 2012.

\bibitem[MH07]{matsumoto2007universal}
Keiji Matsumoto and Masahito Hayashi.
\newblock Universal distortion-free entanglement concentration.
\newblock {\em Physical Review A—Atomic, Molecular, and Optical Physics}, 75(6):062338, 2007.

\bibitem[MH25]{ma2025construct}
Fermi Ma and Hsin-Yuan Huang.
\newblock How to construct random unitaries.
\newblock In {\em Proceedings of the 57th Annual ACM Symposium on Theory of Computing}, pages 806--809, 2025.

\bibitem[MLW25]{mo2025efficientinversionunknownunitary}
Yin Mo, Tengxiang Lin, and Xin Wang.
\newblock Efficient inversion of unknown unitary operations with structured hamiltonians.
\newblock {\em arXiv preprint arXiv:2506.20570}, 2025.

\bibitem[MSM19]{miyazaki2019complex}
Jisho Miyazaki, Akihito Soeda, and Mio Murao.
\newblock Complex conjugation supermap of unitary quantum maps and its universal implementation protocol.
\newblock {\em Physical Review Research}, 1(1):013007, 2019.

\bibitem[MVBS04]{M_tt_nen_2004}
Mikko Möttönen, Juha~J. Vartiainen, Ville Bergholm, and Martti~M. Salomaa.
\newblock Quantum circuits for general multiqubit gates.
\newblock {\em Physical Review Letters}, 93(13), September 2004.

\bibitem[MW16]{gs007}
Ashley Montanaro and Ronald~{de} Wolf.
\newblock {\em A Survey of Quantum Property Testing}.
\newblock Number~7 in Graduate Surveys. Theory of Computing Library, 2016.

\bibitem[MZC{\etalchar{+}}25]{mo2025parameterized}
Yin Mo, Lei Zhang, Yu-Ao Chen, Yingjian Liu, Tengxiang Lin, and Xin Wang.
\newblock Parameterized quantum comb and simpler circuits for reversing unknown qubit-unitary operations.
\newblock {\em npj Quantum Information}, 11(1):32, 2025.

\bibitem[Nav18]{navascues2018resetting}
Miguel Navascu{\'e}s.
\newblock Resetting uncontrolled quantum systems.
\newblock {\em Physical Review X}, 8(3):031008, 2018.

\bibitem[Nie02]{nielsen2002simple}
Michael~A Nielsen.
\newblock A simple formula for the average gate fidelity of a quantum dynamical operation.
\newblock {\em Physics Letters A}, 303(4):249--252, 2002.

\bibitem[OV96]{okounkov1996new}
Andrei Okounkov and Anatoly Vershik.
\newblock A new approach to representation theory of symmetric groups.
\newblock {\em Selecta Mathematica}, 2(4):581, 1996.

\bibitem[OYM25]{odake2024analytical}
Tatsuki Odake, Satoshi Yoshida, and Mio Murao.
\newblock Analytical lower bound on query complexity for transformations of unknown unitary operations.
\newblock {\em Physical Review Letters}, 135(23):230603, 2025.

\bibitem[QDS{\etalchar{+}}19a]{quintino2019probabilistic}
Marco~T{\'u}lio Quintino, Qingxiuxiong Dong, Atsushi Shimbo, Akihito Soeda, and Mio Murao.
\newblock Probabilistic exact universal quantum circuits for transforming unitary operations.
\newblock {\em Physical Review A}, 100(6):062339, 2019.

\bibitem[QDS{\etalchar{+}}19b]{quintino2019reversing}
Marco~T{\'u}lio Quintino, Qingxiuxiong Dong, Atsushi Shimbo, Akihito Soeda, and Mio Murao.
\newblock Reversing unknown quantum transformations: Universal quantum circuit for inverting general unitary operations.
\newblock {\em Physical Review Letters}, 123(21):210502, 2019.

\bibitem[QE22]{quintino2022deterministic}
Marco~T{\'u}lio Quintino and Daniel Ebler.
\newblock Deterministic transformations between unitary operations: Exponential advantage with adaptive quantum circuits and the power of indefinite causality.
\newblock {\em Quantum}, 6:679, 2022.

\bibitem[Rag01]{raginsky2001fidelity}
Maxim Raginsky.
\newblock A fidelity measure for quantum channels.
\newblock {\em Physics Letters A}, 290(1-2):11--18, 2001.

\bibitem[SBSSH16]{swingle2016measuring}
Brian Swingle, Gregory Bentsen, Monika Schleier-Smith, and Patrick Hayden.
\newblock Measuring the scrambling of quantum information.
\newblock {\em Physical Review A}, 94(4):040302, 2016.

\bibitem[SBZ19]{sedlak2019optimal}
Michal Sedl{\'a}k, Alessandro Bisio, and M{\'a}rio Ziman.
\newblock Optimal probabilistic storage and retrieval of unitary channels.
\newblock {\em Physical review letters}, 122(17):170502, 2019.

\bibitem[SCHL16]{sardharwalla2016universal}
Imdad~SB Sardharwalla, Toby~S Cubitt, Aram~W Harrow, and Noah Linden.
\newblock Universal refocusing of systematic quantum noise.
\newblock {\em arXiv preprint arXiv:1602.07963}, 2016.

\bibitem[SHH25]{schuster2024random}
Thomas Schuster, Jonas Haferkamp, and Hsin-Yuan Huang.
\newblock Random unitaries in extremely low depth.
\newblock {\em Science}, 389(6755):92--96, 2025.

\bibitem[SMKH22]{studzinski2022efficient}
Micha{\l} Studzi{\'n}ski, Marek Mozrzymas, Piotr Kopszak, and Micha{\l} Horodecki.
\newblock Efficient multi port-based teleportation schemes.
\newblock {\em IEEE Transactions on Information Theory}, 68(12):7892--7912, 2022.

\bibitem[SML{\etalchar{+}}25]{SML+25}
Thomas Schuster, Fermi Ma, Alex Lombardi, Fernando Brandão, and {Hsin-Yuan} Huang.
\newblock Strong random unitaries and fast scrambling.
\newblock {\em arXiv preprint arXiv:2509.26310}, 2025.

\bibitem[SMM09]{sheridan2009approximating}
Lana Sheridan, Dmitri Maslov, and Michele Mosca.
\newblock Approximating fractional time quantum evolution.
\newblock {\em Journal of Physics A: Mathematical and Theoretical}, 42(18):185302, 2009.

\bibitem[SNC{\etalchar{+}}23]{schuster2023learning}
Thomas Schuster, Murphy Niu, Jordan Cotler, Thomas O'Brien, Jarrod~R McClean, and Masoud Mohseni.
\newblock Learning quantum systems via out-of-time-order correlators.
\newblock {\em Physical Review Research}, 5(4):043284, 2023.

\bibitem[SS14]{shenker2014black}
Stephen~H Shenker and Douglas Stanford.
\newblock Black holes and the butterfly effect.
\newblock {\em Journal of High Energy Physics}, 2014(3):1--25, 2014.

\bibitem[SST{\etalchar{+}}23]{schiansky2023demonstration}
Peter Schiansky, Teodor Str{\"o}mberg, David Trillo, Valeria Saggio, Ben Dive, Miguel Navascu{\'e}s, and Philip Walther.
\newblock Demonstration of universal time-reversal for qubit processes.
\newblock {\em Optica}, 10(2):200--205, 2023.

\bibitem[TDN20]{trillo2020translating}
David Trillo, Benjamin Dive, and Miguel Navascu{\'e}s.
\newblock Translating uncontrolled systems in time.
\newblock {\em Quantum}, 4:374, 2020.

\bibitem[TDN23]{trillo2023universal}
David Trillo, Benjamin Dive, and Miguel Navascu{\'e}s.
\newblock Universal quantum rewinding protocol with an arbitrarily high probability of success.
\newblock {\em Physical Review Letters}, 130(11):110201, 2023.

\bibitem[TMM{\etalchar{+}}25]{TMM+25}
Philip Taranto, Simon Milz, Mio Murao, Marco~T{\'u}lio Quintino, and Kavan Modi.
\newblock Higher-order quantum operations.
\newblock {\em arXiv preprint arXiv:2503.09693}, 2025.

\bibitem[TW25a]{tang2025amplitude}
Ewin Tang and John Wright.
\newblock Amplitude amplification and estimation require inverses.
\newblock {\em arXiv preprint arXiv:2507.23787}, 2025.

\bibitem[TW25b]{tang2025controlled}
Ewin Tang and John Wright.
\newblock Are controlled unitaries helpful?
\newblock {\em arXiv preprint arXiv:2508.00055}, 2025.

\bibitem[vACGN23]{van2023quantum}
Joran van Apeldoorn, Arjan Cornelissen, Andr{\'a}s Gily{\'e}n, and Giacomo Nannicini.
\newblock Quantum tomography using state-preparation unitaries.
\newblock In {\em Proceedings of the 2023 annual ACM-SIAM symposium on discrete algorithms (SODA)}, pages 1265--1318. SIAM, 2023.

\bibitem[VES{\etalchar{+}}19]{vermersch2019probing}
Beno{\^i}t Vermersch, Andreas Elben, Lukas~M Sieberer, Norman~Y Yao, and Peter Zoller.
\newblock Probing scrambling using statistical correlations between randomized measurements.
\newblock {\em Physical Review X}, 9(2):021061, 2019.

\bibitem[VH25]{vasconcelos2024learning}
Francisca Vasconcelos and Hsin-Yuan Huang.
\newblock Learning shallow quantum circuits with many-qubit gates.
\newblock In {\em Conference on Learning Theory}, 2025.

\bibitem[WZC{\etalchar{+}}22]{wang2022quantum}
Qisheng Wang, Zhicheng Zhang, Kean Chen, Ji~Guan, Wang Fang, Junyi Liu, and Mingsheng Ying.
\newblock Quantum algorithm for fidelity estimation.
\newblock {\em IEEE Transactions on Information Theory}, 69(1):273--282, 2022.

\bibitem[XS24]{xu2024scrambling}
Shenglong Xu and Brian Swingle.
\newblock Scrambling dynamics and out-of-time-ordered correlators in quantum many-body systems.
\newblock {\em PRX quantum}, 5(1):010201, 2024.

\bibitem[YGS{\etalchar{+}}16]{yao2016interferometric}
Norman~Y Yao, Fabian Grusdt, Brian Swingle, Mikhail~D Lukin, Dan~M Stamper-Kurn, Joel~E Moore, and Eugene~A Demler.
\newblock Interferometric approach to probing fast scrambling.
\newblock {\em arXiv preprint arXiv:1607.01801}, 2016.

\bibitem[YH21]{yang2021representation}
Yuxiang Yang and Masahito Hayashi.
\newblock Representation matching for remote quantum computing.
\newblock {\em PRX Quantum}, 2(2):020327, 2021.

\bibitem[YKS{\etalchar{+}}26]{YKS+24}
Satoshi Yoshida, Yuki Koizumi, Micha{\l} Studzi{\'n}ski, Marco~T{\'u}lio Quintino, and Mio Murao.
\newblock One-to-one correspondence between deterministic port-based teleportation and unitary estimation.
\newblock {\em IEEE Transactions on Information Theory}, 2026.

\bibitem[YRC20]{yang2020optimal}
Yuxiang Yang, Renato Renner, and Giulio Chiribella.
\newblock Optimal universal programming of unitary gates.
\newblock {\em Physical review letters}, 125(21):210501, 2020.

\bibitem[YSM23]{yoshida2023reversing}
Satoshi Yoshida, Akihito Soeda, and Mio Murao.
\newblock Reversing unknown qubit-unitary operation, deterministically and exactly.
\newblock {\em Physical Review Letters}, 131(12):120602, 2023.

\bibitem[ZCJ{\etalchar{+}}25]{zhen2025structure}
Guocheng Zhen, Yu-Ao Chen, Mingrui Jing, Jingu Xie, Ranyiliu Chen, and Xin Wang.
\newblock Structure, optimality, and symmetry in shadow unitary inversion.
\newblock {\em arXiv preprint arXiv:2510.24880}, 2025.

\bibitem[Zha19]{Zhandry19}
Mark Zhandry.
\newblock How to record quantum queries, and applications to quantum indifferentiability.
\newblock In {\em Annual International Cryptology Conference}, pages 239--268, 2019.

\bibitem[Zha25a]{Zhandry25}
Mark Zhandry.
\newblock How to model unitary oracles.
\newblock In {\em Annual International Cryptology Conference}, pages 237--268, 2025.

\bibitem[Zha25b]{zhao2025learning}
Andrew Zhao.
\newblock Learning the structure of any hamiltonian from minimal assumptions.
\newblock In {\em Proceedings of the 57th Annual ACM Symposium on Theory of Computing}, pages 1201--1211, 2025.

\bibitem[ZMCW24]{zhu2024reversing}
Chengkai Zhu, Yin Mo, Yu-Ao Chen, and Xin Wang.
\newblock Reversing unknown quantum processes via virtual combs for channels with limited information.
\newblock {\em Physical Review Letters}, 133(3):030801, 2024.

\end{thebibliography}

\end{document}